\numberwithin{equation}{section}
\newcommand{\nocontentsline}[3]{}
\newcommand{\tocless}[2]{\bgroup\let\addcontentsline=\nocontentsline#1{#2}\egroup}
\newcommand{\stoptocwriting}{%
  \addtocontents{toc}{\protect\setcounter{tocdepth}{-5}}}
\newcommand{\resumetocwriting}{%
  \addtocontents{toc}{\protect\setcounter{tocdepth}{\arabic{tocdepth}}}}
\newtheorem{thm}{Theorem}[section]
\newtheorem{defi}{Definition}[section]
\newtheorem{cor}{Corollary}[section]
\theoremstyle{remark}
\newtheorem{rem}{Remark}[section]
\theoremstyle{plain}
\newtheorem{prop}{Proposition}[section]
\newenvironment{nalign}{
    \begin{equation}
    \begin{aligned}
}{
    \end{aligned}
    \end{equation}
    \ignorespacesafterend
}
\theoremstyle{remark}
\newcommand{\dd}{\mathop{}\!\mathrm{d}}
\newcommand{\pu}{\partial_u}
\newcommand{\pv}{\partial_v}
\newcommand{\A}{\mathcal{A}}
\newcommand{\Hp}{\mathcal{H}^+}
\newcommand{\con}{\text{const.}}
\newcommand{\ill}{I_0^{\log}}
\newcommand{\ilo}{I_0^{\frac{\log r}{r^2}}}
\newcommand{\ilog}{I_0^{\frac{\log r}{r^3}}}
\title{The Case Against Smooth Null Infinity II:\\A Logarithmically Modified Price's Law} 
\author[1]{Leonhard M. A. Kehrberger} 
\affil[1]{University of Cambridge, Department of Applied Mathematics and Theoretical Physics, Wilberforce Road, Cambridge CB3 0WA, United Kingdom}
\date{\today} 
\begin{document}
\pagenumbering{gobble}

\maketitle 
\begin{abstract}
In this paper, we expand on results from our previous paper "The Case Against Smooth Null Infinity I: Heuristics and Counter-Examples"~\cite{Kerrburger} by showing that the failure of "peeling" (and, thus, of smooth null infinity) in a neighbourhood of $i^0$ derived therein translates into logarithmic corrections at leading order to the well-known Price's law asymptotics near $i^+$. This suggests that 
 \textit{the non-smoothness of $\mathcal{I}^+$ is physically measurable}.

More precisely, we consider the linear wave equation $\Box_g \phi=0$ on a fixed Schwarzschild background ($M> 0$), and we show the following: 
If one imposes conformally smooth initial data on an ingoing null hypersurface (extending to $\mathcal{H}^+$ and terminating at $\mathcal{I}^-$) and vanishing data on $\mathcal{I}^-$ (this is the no incoming radiation condition), then the precise leading-order asymptotics of the solution $\phi$ are given by $r\phi|_{\mathcal{I}^+}=C u^{-2}\log u+\mathcal{O}(u^{-2})$ along future null infinity, $\phi|_{r=R>2M}=2C\tau^{-3}\log\tau+\mathcal{O}(\tau^{-3})$ along hypersurfaces of constant $r$, and $\phi|_{\mathcal{H}^+}=2Cv^{-3}\log v+\mathcal{O}(v^{-3})$ along the event horizon.
Moreover, the constant $C$ is given by $C=4M I_0^{(\mathrm{past})}[\phi]$, where  $I_0^{(\mathrm{past})}[\phi]:=\lim_{u\to -\infty} r^2\partial_u(r\phi_{\ell=0})$ is the \textit{past Newman--Penrose constant} of $\phi$ on $\mathcal{I}^-$. 

Thus, the precise late-time asymptotics of $\phi$ are completely determined by the early-time behaviour of the spherically symmetric part of $\phi$ near $\mathcal{I}^-$. 

Similar results are obtained for polynomially decaying timelike boundary data. 

The paper uses methods developed by Angelopoulos--Aretakis--Gajic and is essentially self-contained.

\end{abstract}
\newpage
\begingroup
\hypersetup{linkcolor=black}
    \tableofcontents{}
    \endgroup
\newpage
\pagenumbering{arabic}
\section{Introduction}
This paper is concerned with the study of the precise late-time asymptotics of solutions to the wave equation
\begin{equation}\label{waveequation}
\Box_g\phi=0
\end{equation}
on the exterior of a fixed Schwarzschild (or a more general, spherically symmetric) background $(\mathcal M_M,g_M)$ with mass $M$ under physically motivated assumptions on data. The most important of these assumptions is the \textit{no incoming radiation condition} on $\mathcal{I}^-$, stating that the flux of the radiation field on past null infinity vanishes at late advanced times. 

We initiated the study of such data in~\cite{Kerrburger}, where we constructed two classes of  solutions\footnote{In fact, we also constructed solutions to the non-linear Einstein-Scalar field system in~\cite{Kerrburger}.} satisfying the no incoming radiation condition (as a condition on data on $\mathcal{I}^-$). 
The first class had polynomially decaying boundary data on a timelike boundary $\Gamma$ terminating at $i^-$, whereas the second class had polynomially decaying characteristic initial data on an ingoing null hypersurface $\mathcal{C}_{\mathrm{in}}$ terminating at $\mathcal{I}^-$. 
The choice for these data  was in turn motivated by an argument due to D.\ Christodoulou~\cite{CHRISTODOULOU2002}, which showed that the assumption of \textit{Sachs peeling} and, thus, of (conformally) smooth null infinity, is incompatible with the no incoming radiation condition and the prediction of the quadrupole formula for $N$ infalling masses from $i^-$.
 Indeed, we proved that the solutions from~\cite{Kerrburger} described above are not only in agreement with the quadrupole formula (which predicts that $\pu(r\phi)\sim|u|^{-2}$ near $i^0$), but also  lead to logarithmic terms in the asymptotic expansion of $\pv(r\phi)$ as $\mathcal{I}^+$ is approached, thus contradicting the statement of Sachs peeling that such expansions can be expanded in powers of $1/r$.
Roughly speaking, we obtained for the spherically symmetric mode $\phi_0$ that if the limit
\begin{equation}\label{1.2}
\lim_{\mathcal{C}_{\mathrm{in}},u\to-\infty}|u|r\phi_0:=\Phi^-
\end{equation}
on initial data is non-zero (or if a similar condition on $\Gamma$ holds), then, for sufficiently large negative values of $u$, one obtains on each outgoing null hypersurface of constant $u$ the asymptotic expansion
\begin{equation}\label{nt:eq:intro:pvrphilog}
\pv(r\phi_0)(u,v)=-2M\Phi^- \frac{\log r-\log|u|}{r^3}+\mathcal{O}(r^{-3}).
\end{equation}

On the other hand, we will show in upcoming work~\cite{Kerrburger3} that higher $\ell$-modes, under similar assumptions, decay slower, $\pv(r\phi_\ell)\sim r^{-2}$, with logarithmic terms appearing at order $r^{-3}\log r$ (or at a later order, depending on the setting).

The above results give a complete picture for the situation near $\mathcal I^-$. 
Naturally, one may then ask how the \textit{early-time asymptotics} \eqref{nt:eq:intro:pvrphilog} translate into \textit{late-time asymptotics} when one smoothly extends\footnote{It turns out that the leading-order asymptotics will not depend on the extension.} the data on $\mathcal{C}_{\mathrm{in}}$ (or $\Gamma$)  all the way to the event horizon $\mathcal{H}^+$. 
In this work, we shall provide a detailed answer to this question. 
Let us already paraphrase the main statement (see also Figure~\ref{fig:1}):
\par\smallskip\noindent
\centerline{\begin{minipage}{0.9\textwidth}
\textit{Consider solutions $\phi$ to \eqref{waveequation} which arise from the no incoming radiation condition and from smooth data on $\mathcal C_{\mathrm{in}}$ that satisfy \eqref{1.2}. Then their leading-order asymptotic behaviour towards $i^+$ is determined by the spherical mean $\phi_0$ and contains logarithmic terms. Thus, the non-smoothness of null infinity near $i^0$ propagates and translates into logarithmic tails near $i^+$.}
\end{minipage}}
\par\smallskip
\begin{figure}[htbp]
  \floatbox[{\capbeside\thisfloatsetup{capbesideposition={right,top},capbesidewidth=4.3cm}}]{figure}[\FBwidth]
{\caption{Schematic depiction of the results of the present paper. We can either impose data $r\phi|_{\Gamma}=C/t+\dots$, which will, by our previous results \cite{Kerrburger}, lead to behaviour on $\mathcal C_{\mathrm{in}}$ as depicted, or we can directly impose data on $\mathcal C_{\mathrm{in}}$. In both cases, we obtain logarithmic late-time asymptotics near $i^+$ provided the data are extended to $\mathcal H^+$.}\label{fig:1}}
  {\includegraphics[width = 185pt]{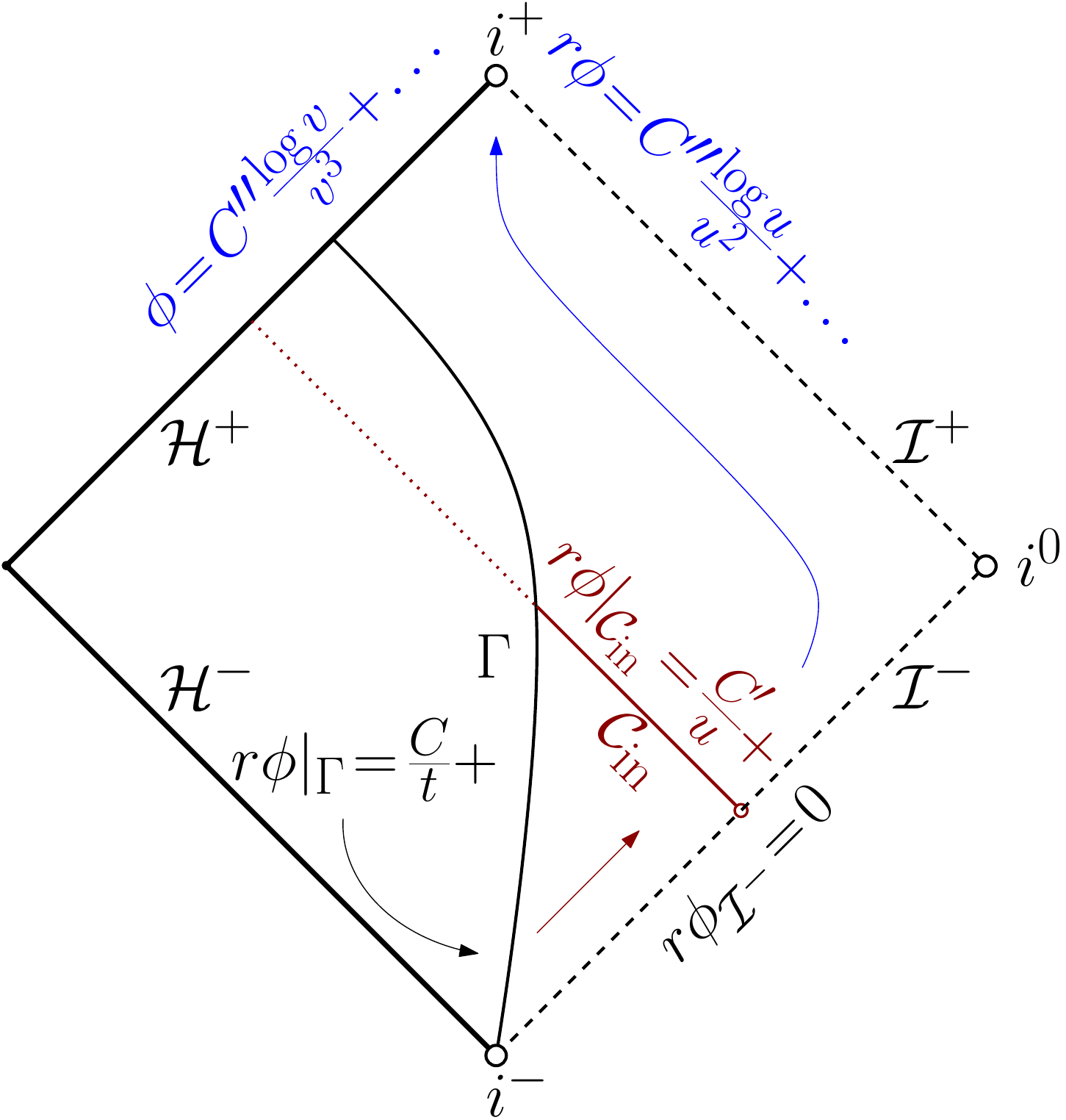}}
\end{figure}
\subsection{The relation to Price's law}
Showing a statement like the above is closely related to the task of proving Price's law. We recall that Price's law~\cite{Price72,PriceGundlachPullin94} roughly states that the evolutions of compactly supported Cauchy data under \eqref{waveequation} satisfy the following asymptotics:
$\phi|_{\mathcal{H}^+}\sim v^{-3}$ along the event horizon, $\phi|_{r=\con}\sim t^{-3}$ along hypersurfaces of constant $r$, and $r\phi|_{\mathcal{I}^+}\sim u^{-2}$ along future null infinity.

A rigorous proof of these asymptotics has only recently been obtained by Angelopoulos, Aretakis and Gajic~\cite{Angelopoulos2018ASpacetimes,Angelopoulos2018Late-timeSpacetimes} (see also the works~\cite{Hintz},~\cite{Ma2021} and~\cite{Donninger11,Donninger12}, as well as~\cite{Angelopoulos2019LogarithmicInfinity} for refined asymptotics).
 They, in fact, show that the leading-order asymptotics are determined by the $\ell=0$-mode, as higher $\ell$-modes decay at least half a power faster. 
 Their proof is split up into two parts. In the first~\cite{Angelopoulos2018ASpacetimes}, they derive \textit{almost-sharp} decay estimates with an $\epsilon$-loss, based on an extension of the $r^p$-method introduced in~\cite{Dafermos2010ASpacetimes}. 
 In the second part~\cite{Angelopoulos2018Late-timeSpacetimes}, they then use these almost-sharp estimates, together with certain conservation laws along null infinity $\mathcal I^+$ and a clever splitting into different spacetime regions, to obtain the precise leading-order asymptotics of the spherical mean. The upshot of this is that all the results in the first part~\cite{Angelopoulos2018ASpacetimes} are, in some sense, blind to logarithmic corrections; the $\epsilon$-loss in the almost-sharp decay estimates "swallows" the $\log$-terms. 
Therefore, in order to find the late-time asymptotics of solutions coming from initial data satisfying \eqref{nt:eq:intro:pvrphilog}, we only need to suitably adapt the second part of their proof~\cite{Angelopoulos2018Late-timeSpacetimes} and can use the results of~\cite{Angelopoulos2018ASpacetimes} as black box results. 

Let us give some more detail on this second part: In a first step, they consider spherically symmetric initial data on a hyperboloidal slice $\Sigma_0$ (which extends to $\mathcal{H}^+$ and terminates at $\mathcal{I}^+$) and assume that the following limit exists and is non-vanishing:
\begin{equation}\label{nt:eq:intro:r-2initialdata}
\lim_{r\to\infty}r^2\partial_r(r\phi_0)(u=0,r,\omega)=:I_0[\phi]<\infty.
\end{equation}
Now, the crucial observation is that the quantity 
\begin{equation}\label{nt:eq:intro:PNconstant}
\lim_{r\to\infty}r^2\partial_r(r\phi_0)(u,r,\omega)=:I_0[\phi](u)\equiv I_0[\phi]
\end{equation} (called the \textit{Newman--Penrose constant})
is, in fact, conserved along null infinity. It is this conservation law which is then exploited to derive the asymptotics of $r\phi_0$ in spacetime.

In a second step, they then consider spherically symmetric data for which $I_0[\phi]=0$, and require that, in the spirit of peeling (i.e.\ smoothness in the conformal variable $s=1/r$), 
\begin{equation}\label{nt:eq:intro:r-3initialdata}
\lim_{r\to\infty}r^3\partial_r(r\phi_0)(u=0,r,\omega)<\infty.
\end{equation}
There is no conservation law directly associated to this quantity. This difficulty is overcome by constructing the \textit{time integral} $\phi^{(1)}$ of $\phi$ (which satisfies $T\phi^{(1)}=\phi$, where $T$ is the stationary Killing field on Schwarzschild). It is shown that, generically, this time integral has a non-vanishing Newman--Penrose constant $I_0[\phi^{(1)}]$, which moreover can be computed from data for $\phi$. The authors of~\cite{Angelopoulos2018Late-timeSpacetimes} call this quantity the \textit{time-inverted Newman--Penrose constant}:
\begin{equation}\label{nt:eq:intro:PNconstanttimeinverted}
I_0[\phi^{(1)}]=:I_0^{(1)}[\phi].
\end{equation}
If this quantity is non-vanishing (which it is, generically), then one can apply the methods from the first step to $\phi^{(1)}$ in order to find its precise asymptotics, and then convert these asymptotics of $\phi^{(1)}$ to asymptotics of $\phi$ by commuting with $T$. This then proves Price's law.
If, on the other hand, $I_0^{(1)}[\phi]=0$, then one can construct the time integral of $\phi^{(1)}$ and proceed inductively to obtain faster decay.

Now, the conservation law \eqref{nt:eq:intro:PNconstant} is, in fact, a special case of the more general statement that, under suitable assumptions, 
\begin{equation}\label{nt:eq:intro:PNconstantf}
\lim_{r\to\infty}f(r)\partial_r(r\phi_0)(u,r,\omega)=:I^{f(r)}_0[\phi](u)
\end{equation}
is conserved along $\mathcal{I}^+$ if finite initially and if $f(r)/r^3\to 0$ as $r\to\infty$. 

We will be interested in the cases $f(r)=r^{-i}\log r$, $i=2,3$: 
Recall that the initial data we are interested in are to satisfy \eqref{nt:eq:intro:pvrphilog}.
 The modified Newman--Penrose constant associated to \eqref{nt:eq:intro:pvrphilog} is given by $\ilog[\phi]\equiv-2M\Phi^-$. 
 Even though this quantity is itself conserved along null infinity, it turns out to be easier to work with the associated modified Newman--Penrose constant of the time integral instead. We have the following relation:
\begin{equation}
\ilo[\phi^{(1)}]=-\ilog[\phi].
\end{equation}
In the main body of this paper, we will then present a modification of the argument in~\cite{Angelopoulos2018Late-timeSpacetimes} that replaces \eqref{nt:eq:intro:r-2initialdata} with the condition
\begin{equation}\label{nt:eq:intro:r-2logr initialdata}
0\neq\lim_{r\to\infty}\frac{r^2}{\log r}\partial_r(r\phi_0)(u=0,r,\omega):=\ilo[\phi]:=\ill[\phi]<\infty.
\end{equation}
This will allow us to show a \textit{logarithmically modified Price's law} for the $\ell=0$-mode, see Thm.~\ref{nt:thm:main}. 

\begin{rem}[Higher $\ell$-modes]
Recall from the above that it was shown in~\cite{Angelopoulos2018ASpacetimes} that, in the setting of  compactly supported Cauchy data, higher $\ell$-modes generally decay at least half a power faster towards $i^+$ than the $\ell\!=\!0$-mode. 
However, the setting we are interested in (motivated by our results in~\cite{Kerrburger} and the upcoming~\cite{Kerrburger3}) is such that, on $\Sigma_0$, the $\ell>0$-modes decay to leading order like $\pv(r\phi_\ell)\sim r^{-2}$, whereas the $\ell\!=\!0$-mode decays like $\pv(r\phi_0)\sim r^{-3}\log r$ -- more than half a power faster than the $\ell\! >\! 0$-modes -- so one might think that the $\ell\!=\!0$-mode does not determine the leading-order asymptotics in our setting.
However, recent work by Angelopoulos, Aretakis and Gajic ~\cite{Dejantobepublished}  indicates that, even in this setting, one can still expect higher $\ell$-modes to decay slightly faster. In particular, one can still expect the asymptotics of the $\ell\! >\! 0$-modes  to be subleading compared to the asymptotics of the $\ell\!=\!0$-mode obtained in this paper. This will be discussed in detail in~\cite{Kerrburger3}, see also the remarks below Theorem~\ref{nt:thm:main}.
For now, we restrict our presentation to the $\ell\!=\!0$-mode.
\end{rem}

\subsection[The main result (Theorem~\ref{nt:thm:main})]{The main result}
Let us now state a rough version of the main result of this paper (see \S \ref{sec:coordinates} for our choice of coordinates). The precise statement is written down in Theorems~\ref{thm:asyprecise} and~\ref{thm:connectoin}. 
\begin{thm}\label{nt:thm:main}
Let $\mathcal{C}_{\mathrm{in}}=\{v=v_0\}$ be an ingoing null hypersurface starting from $\mathcal{I}^-$ and extending to $\mathcal{H}^+$, and let $\epsilon_\phi>0$. Assume spherically symmetric initial data $\phi$ for \eqref{waveequation} on a Schwarzschild background which satisfy\footnote{ If $f$  and $g$ are functions depending only on one variable $x$, we say $f=\mathcal O_k(g)$ if there exist uniform constants $C_j>0$ such that $|\partial_x^j f|\leq C_j|\partial_x^j g|$ for $j=0,\dots,k$.}
\begin{equation}
\frac{\partial_u(r\phi)}{\pu r}(u,v_0)=\frac{I^{(\mathrm{past})}_0[\phi]}{r^2}+\mathcal{O}_4(r^{-2-\epsilon_\phi})
\end{equation}
for $u<0$ and $I^{(\mathrm{past})}_0[\phi]\neq 0$, and which also satisfy the no incoming radiation condition
\begin{equation}
\lim_{u\to -\infty} r\phi(u,v)=0
\end{equation}
for all $v\geq v_0$.
Assume further that the data smoothly extend to $\mathcal{H}^+$ (or that an appropriate energy norm of $\phi$ is finite) on $\{v=v_0\}$.
Then, for all $u,v>0$, the solution satisfies the following asymptotics near $i^+$:
\begingroup
\allowdisplaybreaks
\begin{align}
\left|\phi|_{\mathcal{H}^+}(v)+8MI^{(\mathrm{past})}_0[\phi]\frac{\log(1+v)}{(1+v)^3}\right|\leq C(v+1)^{-3},\\
\left|\phi|_{r=\con}(\tau)+8MI^{(\mathrm{past})}_0[\phi]\frac{\log(1+\tau)}{(1+\tau)^3}\right|\leq C(\tau+1)^{-3},\\
\left|	r\phi|_{\mathcal{I}^+}(u)+4MI^{(\mathrm{past})}_0[\phi]\frac{\log(u+1)}{(u+1)^2}		\right|\leq C(u+1)^{-2},
\end{align}
\endgroup
where $C>0$ is a constant completely determined by data. 
Moreover, we have for all $u<\infty$ that
\begin{equation}
\lim_{v\to\infty} \frac{r^3}{\log r}\frac{\pv(r\phi)}{\pv r}(u,v)=-2MI_0^{(\mathrm{past})}[\phi].
\end{equation}
\end{thm}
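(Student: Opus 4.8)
The plan is to reduce everything to the spherical mean and to prove the statement in two stages, corresponding to Theorems~\ref{thm:connectoin} and~\ref{thm:asyprecise}: first convert the data near $\Iminus$ into a nonvanishing, logarithmically modified Newman--Penrose-type constant on an outgoing cone (equivalently on a hyperboloidal slice $\Sigma_0$), and then adapt the second part of the Angelopoulos--Aretakis--Gajic argument~\cite{Angelopoulos2018Late-timeSpacetimes} to read off the late-time asymptotics from that constant. Throughout I work with $\psi := r\phi$ in double-null coordinates $(u,v)$, in which the $\ell=0$ equation reads $\pu\pv\psi = -\tfrac14 V(r)\psi$ with potential $V(r) = \tfrac{2M}{r^{3}}\left(1-\tfrac{2M}{r}\right) = \mathcal{O}(r^{-3})$; the $r^{-3}$ decay of $V$ is exactly what makes the relevant weighted limits along $\Iplus$ conserved. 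I will also invoke the almost-sharp, $\epsilon$-loss decay estimates of~\cite{Angelopoulos2018ASpacetimes} as a black box to supply the a priori pointwise bounds needed to justify limits and to control error terms below the logarithmic leading order.

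Stage one (connection). Starting from the prescribed past datum $\partial_r(r\phi)(u,v_0) = I_0^{(\mathrm{past})}[\phi]\,r^{-2} + \mathcal{O}_4(r^{-2-\epsilon_\phi})$ on $\mathcal C_{\mathrm{in}}=\{v=v_0\}$ together with the no-incoming-radiation condition $\lim_{u\to-\infty} r\phi=0$, I integrate the wave equation outward in $v$ along cones of large negative $u$. This is the computation of~\cite{Kerrburger}, and it reproduces~\eqref{nt:eq:intro:pvrphilog}, i.e.\ $\pv(r\phi)(u,v) = -2M\Phi^-\,\tfrac{\log r - \log|u|}{r^3} + \mathcal{O}(r^{-3})$, fixing $\ilog[\phi] = -2M\Phi^-$ with $\Phi^-$ an explicit multiple of $I_0^{(\mathrm{past})}[\phi]$ (consistent with the value $\ilog[\phi]=-2MI_0^{(\mathrm{past})}[\phi]$ appearing in the last display of the theorem). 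I then construct the time integral $\phi^{(1)}$, characterized by $T\phi^{(1)}=\phi$ with $T$ the stationary Killing field; in spherical symmetry this reduces to a transport problem along $\Iplus$ plus a regular extension to $\Hp$, and the assumed smooth extension of the data supplies the required regularity. The key point is that raising the $r$-weight under time integration turns the $\tfrac{\log r}{r^3}$ behaviour of $\partial_r(r\phi)$ into a $\tfrac{\log r}{r^2}$ behaviour of $\partial_r(r\phi^{(1)})$, so that $\ilo[\phi^{(1)}] = -\ilog[\phi] \neq 0$; this is precisely the input~\eqref{nt:eq:intro:r-2logr initialdata} for stage two.

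Stage two (asymptotics). Here I adapt~\cite{Angelopoulos2018Late-timeSpacetimes} with the weight $f(r)=r^2/\log r$ in place of $r^2$. Since $f(r)/r^3\to 0$, the generalized conservation law~\eqref{nt:eq:intro:PNconstantf} applies: integrating the wave equation against the appropriate $v$-weight shows $\ilo[\phi^{(1)}](u):=\lim_{r\to\infty}\tfrac{r^2}{\log r}\partial_r(r\phi^{(1)})$ is independent of $u$ along $\Iplus$. I then run the region decomposition, splitting $\{u,v>0\}$ into a far region ($r\geq R$ large), an intermediate region, and a near-horizon region, and integrating the transport equations for $\pv\psi^{(1)}$ and $\pu\psi^{(1)}$ in each, feeding the conserved constant in as the boundary contribution at $\Iplus$. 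The logarithm is produced exactly when the conserved $\tfrac{\log r}{r^2}$ weight is integrated against $\pv r\sim\tfrac12$, and one obtains $r\phi^{(1)}|_{\Iplus}(u)\sim c\,\tfrac{\log u}{u}$, $\phi^{(1)}|_{r=\con}(\tau)\sim c\,\tfrac{\log\tau}{\tau^2}$, and $\phi^{(1)}|_{\Hp}(v)\sim c\,\tfrac{\log v}{v^2}$, with $c$ an explicit multiple of $\ilo[\phi^{(1)}]$. Finally I commute with $T$: since $T$ is Killing, $\phi=T\phi^{(1)}$ solves the same equation, and differentiating the above produces the stated asymptotics for $\phi$. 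The factor $2$ between the $\Iplus$ coefficient ($4M$) and the interior/horizon coefficients ($8M$) appears because on $\Iplus$ one has $r\phi=\pu(r\phi^{(1)})$, and $\tfrac{d}{du}\tfrac{\log u}{u}\sim-\tfrac{\log u}{u^2}$ (single factor), whereas in the interior and on $\Hp$ one differentiates $\tfrac{\log\tau}{\tau^2}$, $\tfrac{\log v}{v^2}$, producing $-2\tfrac{\log\tau}{\tau^3}$, $-2\tfrac{\log v}{v^3}$ (extra factor $2$); assembling the constant relations of stage one gives $C=4MI_0^{(\mathrm{past})}[\phi]$. The last display is then just~\eqref{nt:eq:intro:PNconstantf} applied directly to $\phi$ with $f(r)=r^3/\log r$ (again $f/r^3\to0$): $\lim_{v\to\infty}\tfrac{r^3}{\log r}\tfrac{\pv(r\phi)}{\pv r}$ equals the conserved $\ilog[\phi]=-2MI_0^{(\mathrm{past})}[\phi]$ for every finite $u$.

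The main obstacle is the precise asymptotics of stage two, specifically the bookkeeping of the logarithmic factor through the region decomposition. One must check that the $\log r$ in the conserved weight combines with the $u$- (resp.\ $v$-) integration to give a clean $\log u$ (resp.\ $\log v$) leading term, that the $\log r - \log|u|$ structure inherited from~\eqref{nt:eq:intro:pvrphilog} does not contaminate the leading coefficient, and, above all, that every error term is genuinely of one lower power \emph{without} an accompanying logarithm. This is where the $\epsilon$-loss estimates of~\cite{Angelopoulos2018ASpacetimes} must be used with care, since a careless application would reintroduce a $\log$ at leading order and destroy the claimed sharpness. Secondary difficulties are the rigorous justification of the generalized conservation law for the modified weight and the existence and regularity of the time integral $\phi^{(1)}$ with the correctly identified constant $\ilo[\phi^{(1)}]$, both of which rely on the decay input from stage one together with the black-box estimates to control the behaviour at $\Iplus$ and $\Hp$.
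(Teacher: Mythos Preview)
Your proposal is correct and follows essentially the same approach as the paper: first invoke the results of~\cite{Kerrburger} to produce the $r^{-3}\log r$ expansion of $\partial_r(r\phi)$ on $\Sigma_0$ with $\ilog[\phi]=-2MI_0^{(\mathrm{past})}[\phi]$, construct the time integral $\phi^{(1)}$ to convert this into a nonvanishing $\ill[\phi^{(1)}]$, run the Angelopoulos--Aretakis--Gajic region decomposition (with the $\mathcal B_\alpha$ splitting and the $\epsilon$-loss estimates of~\cite{Angelopoulos2018ASpacetimes} as black box) on $\phi^{(1)}$, and finally commute with $T$ to recover $\phi$. Two cosmetic remarks: in stage one the relevant integration to obtain $\partial_v(r\phi)$ is in $u$ from $\mathcal I^-$ (using the no-incoming-radiation condition), not outward in $v$; and the paper groups the time-integral construction with the asymptotics stage rather than the connection stage, but this is purely organisational.
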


We believe that a few remarks are in order:
\begin{itemize}[leftmargin=*]
\item The appearance of logarithmic terms in \textit{higher-order asymptotics} is well-known (see, e.g.,~\cite{Angelopoulos2019LogarithmicInfinity, gomezwinicourschmidt,BASKIN2018160}). 
Similarly, modifications to Price's law have also been derived for spacetimes with different asymptotics than Schwarzschild (see, e.g.,~\cite{ching,morgan}). 
In contrast, the statement of Theorem~\ref{nt:thm:main} is that, \textit{under physically motivated assumptions} (rather than assuming compact support or conformal smoothness on a Cauchy hypersurface), there are logarithmic corrections to Price's law \textit{at leading order}.
\item The above theorem is obtained for the wave equation on a fixed Schwarzschild background. 
However, it easy to see that the proof generalises to other spherically symmetric spacetimes, most notably the subextremal Reissner--Nordstr\"om spacetimes. 
Moreover, the methods presented in this paper  can easily be applied to~\cite{Angelopoulos:2018uwb} to also obtain similar results for \textit{extremal} Reissner--Nordstr\"om spacetimes. 
In this case, however, the asymptotics would depend crucially on the extension of the data to $\mathcal{H}^+$, in view of the \textit{Aretakis constant} along $\mathcal{H}^+$. 
See also~\cite{AAGPRL}.
The generalisation to Kerr, on the other hand, will be the subject of future work (see also the recent~\cite{DejanKerr} and~\cite{Hintz}).
\item The above theorem is formulated for initial data on an ingoing null hypersurface $\mathcal{C}_{\mathrm{in}}$, however, by the results of~\cite{Kerrburger}, an entirely analogous statement holds for boundary data on a past-complete timelike hypersurface $\Gamma$ as considered in~\cite{Kerrburger} (see section 5.8 therein) which are suitably extended to $\mathcal{H}^+$.
\item The above theorem is obtained for spherically symmetric solutions $\phi$. 
However, as was mentioned before, the results of~\cite{Dejantobepublished} indicate that, even without symmetry assumptions, Theorem~\ref{nt:thm:main} gives the precise asymptotics since the higher $\ell$-modes can be expected to decay faster. 
We will discuss the precise early- \textit{and} late-time asymptotics of higher $\ell$-modes in detail in the upcoming~\cite{Kerrburger3}. In fact, we will find various different scenarios in~\cite{Kerrburger3}: In the case of polynomially decaying boundary data on a timelike hypersurface $\Gamma$, one can expect to recover a logarithmically modified Price's law ($r\phi_\ell|_{\mathcal I^+}\sim u^{-2-\ell}\log u$) for each $\ell$-mode. In the case of polynomially decaying data on an ingoing null hypersurface $\mathcal C_{\mathrm{in}}$, however, we will find that all higher $\ell$-modes decay like $r\phi_\ell|_{\mathcal I^+}\sim u^{-2}$ along null infinity, i.e.\ one logarithm faster than the $\ell\!=\!0$-mode. Finally, in the case of smooth compactly supported scattering data on $\mathcal I^-$ and $\mathcal H^-$, we will find that $r\phi_\ell|_{\mathcal I^+}\sim u^{-2}$ \textit{for all $\ell$}! While the usual belief that higher $\ell$-modes decay faster towards $i^+$ is hence violated on $\mathcal I^+$ in these settings, it still holds true away from $\mathcal I^+$, e.g.\ on hypersurfaces of constant $r$. See~\cite{Kerrburger3} for details.

\item The above theorem, in principle, gives a tool to \textit{directly measure} the non-smoothness of future null infinity. (See also~\cite{AAGPRL,Kroon_measure} in this context.)
\end{itemize}

\subsection{Structure of the paper}
This paper is structured as follows: In \S \ref{sec:geometry}, we shall introduce the geometry of the Schwarzschild spacetime and write down useful foliations of it. In \S \ref{sec:prelims}, we then import the necessary theory for the wave equation and, in particular, the almost-sharp decay results of~\cite{Angelopoulos2018ASpacetimes}. In \S \ref{sec:asymptotics}, we shall derive the precise late-time asymptotics for $\phi$ in the case $\ill[\phi]\neq 0$. We then derive the time inversion theory for the case $\ill[\phi]=0$ and $\ilog[\phi]\neq 0$ in \S \ref{sec:timeinversion}. Combining \S \ref{sec:asymptotics} and \S \ref{sec:timeinversion} then allows us to derive the precise late-time asymptotics for $\phi$ in the case $\ill[\phi]=0$ and $\ilog[\phi]\neq 0$ in \S \ref{sec:asymptotics2}.
We finally connect the results of \S \ref{sec:asymptotics2} to our results obtained in~\cite{Kerrburger} and, thus, prove Theorem~\ref{nt:thm:main} in \S \ref{sec:connectiontokerrburger}.
We conclude the discussion of the linear wave equation by discussing higher-order asymptotics in \S \ref{sec:hot}.

Note that all the results of this paper are obtained for the \textit{linear} wave equation on Schwarzschild, despite the results of~\cite{Kerrburger} having been obtained for the coupled Einstein-Scalar field system as well. We therefore give two brief comments on potential extensions of the results of the present paper to the coupled Einstein-Scalar field system in \S \ref{sec:nonlinearcomments}.

\section{The geometric setting}\label{sec:geometry}

\subsection{The Schwarzschild spacetime manifold}\label{sec:coordinates}
We  closely follow~\cite{Angelopoulos2018ASpacetimes}, with some minor adaptations:

The Schwarzschild family of spacetimes $(\mathcal{M}_M,g_M)$, $M>0$, is given by the family of manifolds with boundary
\begin{equation*}
\mathcal{M}_M=\mathbb{R}\times[2M, \infty)\times \mathbb{S}^2,
\end{equation*}
covered by the coordinate chart $(v,r,\theta, \varphi)$ with $v\in\mathbb{R}$, $r\in[2M,\infty)$, $\theta\in(0,\pi)$ and $\varphi\in(0,2\pi)$, where $(\theta,\varphi)$ denote the standard spherical coordinates on $\mathbb{S}^2$, and by the family of metrics
\begin{equation}
g_M=-D(r)\dd v^2+2\dd v\dd r+r^2(\dd\theta^2+\sin^2\theta \dd\varphi^2),
\end{equation}
where 
\begin{equation}\label{eq:geometry:D}
D(r)=1-\frac{2M}{r}.
\end{equation}
Note that the vector field $T=\pv$ is a Killing vector field.
We denote the boundary $\{r=2M\}=\partial\mathcal{M}_M=:\Hp$ as the future event horizon.

Next, we introduce the \textit{tortoise coordinate} $r^*$ as
\begin{equation}\label{eq:geometry:tortoise}
r^*(r):=R+\int_R^r D^{-1}(r')\dd r'
\end{equation}
for some $R>2M$ and define 
\begin{equation}
u:=v-2r^*.
\end{equation}
This gives rise to a covering  $(u,v,\theta,\varphi)$ of $\mathcal{M}_M\setminus\Hp$ with $u\in(\infty, \infty)$, $v\in(-\infty, \infty)$. The horizon is then "at $u=\infty$". 
The metric in these double-null coordinates reads
\begin{equation}
g_M=-D(r)\dd u\dd v+r^2(\dd\theta^2+\sin^2\theta \dd\varphi^2).
\end{equation}
We will drop the subscript $M$ from now on.

With respect to the $(u,v)$-chart, we define the null vector fields
\begin{align*}
\underline{L}:=\pu, &&L:=\pv.
\end{align*}
We then have that $$T=L+\underline{L} $$ and, in $(v,r)$-coordinates, 
\begin{align*}
\underline{L}=-\frac{D}{2}\partial_r,&&L=\frac{D}{2}\partial_r+\pv.
\end{align*}

\begin{rem}
We can generalise our results to spacetimes which, instead of \eqref{eq:geometry:D}, have
$
D(r)=1-\frac{2M}{r}+\mathcal{O}_k(r^{-1-\gamma})
$
for $\gamma>0$ and for sufficiently large values of $k$, subject to the condition that these spacetimes satisfy certain Morawetz (integrated local energy decay) estimates (see sections~2.4.1 and~2.4.2 in~\cite{Angelopoulos2018Late-timeSpacetimes}). Note that the sub-extremal Reissner--Nordstr\"om spacetime is such a spacetime, so the results of the present paper also apply to sub-extremal Reissner--Nordstr\"om spacetimes. 
\end{rem}
\subsection{The spacelike-null foliation}
Let $h:[2M, \infty)\to\mathbb{R}_{\geq0}$ be a non-negative, piecewise smooth function satisfying
\begin{equation}\label{eq:h}
0\leq\frac{2}{D(r)}-h(r)=\mathcal{O}(r^{-1-\eta})
\end{equation}
for some constant $\eta>0$.  Let further $v_0>0$, and define $v_{\Sigma_0}(r)$ as well as the spherically symmetric hypersurface $\Sigma_0$ via
\begin{align}
v_{\Sigma_0}(r):=v_{0}+\int_{2M}^r h(r')\dd r',&&  \Sigma_0:=\{(v,r,\theta, \varphi)\,|\,v=v_{\Sigma_0}(r)\}
.\end{align}
By construction, $\Sigma_0$ is a spacelike-null hypersurface which crosses the event horizon and terminates at future null infinity (condition \eqref{eq:h} ensures that $v$  (or $u$) is monotonically increasing (or decreasing) in $r$ along $\Sigma_0$).

For the sake of simpler notation, we will from now on restrict to examples of $\Sigma_0$ which moreover satisfy  the following condition:
There exist $2M<r_\mathcal{H}<4M<r_\mathcal{I}$ and $v_0, u_0>0$ such that
\begin{align*}
&\Sigma_0\cap\{r\leq r_\mathcal{H}\}=N^\mathcal{H}_0:=\{v=v_0\}\cap\{r\leq r_\mathcal{H}\},\\
&\Sigma_0\cap\{r\geq r_\mathcal{I}\}=N^\mathcal{I}_0:=\{u=u_0\}\cap\{r\geq r_\mathcal{I}\},
\end{align*}
and such that, moreover, the part $\Sigma_0\cap\{r_\mathcal{H}< r< r_\mathcal{I}\}$ is strictly spacelike. Furthermore, after potentially redefining $u$ and $r^*$ from eq.\ \eqref{eq:geometry:tortoise}, we can choose $u_0=0$ and $r_{\mathcal{I}}=R$.
\begin{figure}[htbp]
  \floatbox[{\capbeside\thisfloatsetup{capbesideposition={right,top},capbesidewidth=4.3cm}}]{figure}[\FBwidth]
{\caption{Depiction of the spacelike-null foliation of the Schwarzschild manifold $\mathcal M_M$ by the hypersurfaces $\Sigma_\tau$.}\label{fig:2}}
  {\includegraphics[width = 200pt]{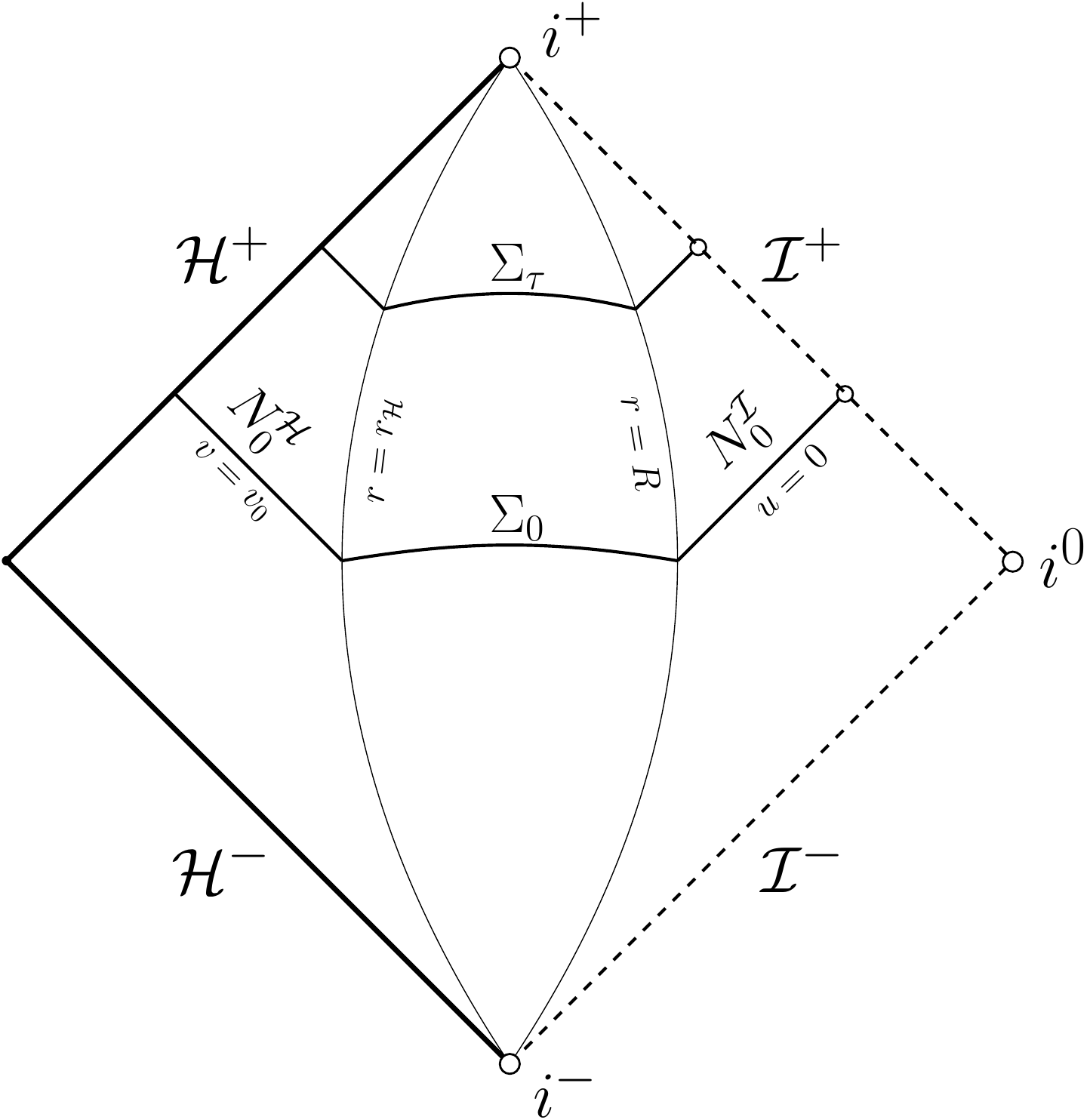}}
\end{figure}

Now, given $\Sigma_0$, we define a time function $\tau:J^+(\Sigma_0)\to\mathbb{R}_{\geq 0}$ via the flow of the stationary Killing field as follows:
\begin{align*}
\tau|_{\Sigma_0}=0,&&
T(\tau)=1.
\end{align*}
Let $F_\tau$ denote the flow of $T$, and define $\Sigma_\tau:=F_\tau(\Sigma_0)$. This gives rise to a spacelike-null foliation of $J^+(\Sigma_0)$ (see Figure~\ref{fig:2}). Adapted to this foliation, we can cover $J^+(\Sigma_0)$ with coordinates\footnote{Note that, for $\tau\geq 1$, we have $\tau\sim v $ for $r\leq r_{\mathcal{H}}$, $ \tau\sim v\sim u$ for $r_{\mathcal{H}}\leq r \leq r_\mathcal{I},$ and  $  \tau\sim u$ for $ r\geq r_\mathcal{I}$.
} $(\tau, \rho, \theta,\varphi)$ with $\rho|_{\Sigma_0}=r|_{\Sigma_0}$ and $\rho$ being constant along integral curves of $T$. In these coordinates, we have $T=\partial_\tau$, and the spherically symmetric vector field $Y$ tangent to $\Sigma_\tau$ is given by
\begin{align*}
Y=\partial_\rho=\partial_r+h\partial_v=-\frac{2}{D}\underline{L}+hT.
\end{align*}
We can then define the \textit{red-shift vector field} $N$ as follows:
\begin{align*}
N:=T-Y\ \ \text{in} \ \ \{2M\leq r\leq r_\mathcal{H}\},&&
N:=T \ \ \text{in} \ \ \{ r\geq r_\mathcal{I}\},
\end{align*}
with the additional requirement that the smooth matching in $r_\mathcal{H}\leq r\leq r_\mathcal{I}$ is such that $N$ remains time-invariant and strictly timelike.
\subsection{Notational conventions}
We use the notation $\dd\mu_{\Sigma_\tau}$ for the natural volume form on $\Sigma_\tau$ with respect to the induced metric, where, on the null parts of $\Sigma_\tau$, this volume form is chosen to be $r^2\dd\omega \dd u$, $r^2 \dd\omega \dd v$, respectively, with $\dd\omega=\sin\theta \dd\theta \dd\varphi$. Similarly, we denote the normal to $\Sigma_\tau$ by $n_{\Sigma_\tau}$, where we take the normals on the null parts to be $\underline{L}$, $L$, respectively.

We also say $f\sim g$ (or $f\lesssim g$) if there exists a uniform constant $C>0$ such that $C^{-1}g\leq f\leq Cg$ (or $f\leq C g$), and we use the usual algebra of constants ($C+D=C=CD\dots$).

\section{Preliminaries}\label{sec:prelims}
In this section, we recall the almost-sharp decay results obtained in~\cite{Angelopoulos2018ASpacetimes,Angelopoulos2018Late-timeSpacetimes}. We will first need to import some language.

\subsection{The Cauchy problem for the wave equation}
We recall the following standard result:
\begin{prop}\label{prop:Cauchy}
Let $\Phi\in C^\infty(\Sigma_0)$, $\Phi'\in C^\infty(\Sigma_0\cap\{r_\mathcal{H}< r< r_\mathcal{I}\})$. Then there exists a unique smooth function $\phi:J^+(\Sigma_0)\to\mathbb{R}$ satisfying
\begin{align*}
\phi|_{\Sigma_0}=\Phi,&&n_{\Sigma_0}(\phi)|_{\Sigma_0\cap\{r_\mathcal{H}< r< r_\mathcal{I}\}}=\Phi',
\end{align*}
and $$\Box_g \phi=0.$$
\end{prop}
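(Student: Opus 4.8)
The plan is to treat this as the standard well-posedness statement for a linear wave equation on a globally hyperbolic region, being careful only about the piecewise spacelike--null structure of $\Sigma_0$ and the behaviour at $\Hp$. First I would work throughout in the ingoing coordinates $(v,r,\theta,\varphi)$ on $\mathcal M_M=\mathbb R\times[2M,\infty)\times\mathbb S^2$, in which $g_M$ is a smooth, nondegenerate Lorentzian metric up to and including the horizon $\{r=2M\}$; in particular there is no genuine metric degeneracy at $\Hp$, so $\Box_g\phi=0$ is a regular second-order hyperbolic equation on all of $J^+(\Sigma_0)$. The data prescribed are exactly the well-posed data for such an equation: on the strictly spacelike middle portion $\Sigma_0\cap\{r_\mathcal{H}<r<r_\mathcal{I}\}$ one gives both $\phi$ and its normal derivative $\Phi'$, whereas on the null end-pieces $N_0^{\mathcal H}$ and $N_0^{\mathcal I}$ one gives only $\phi=\Phi$, since on a null hypersurface the wave equation reduces to a transport equation along the generators and the transverse derivative is determined rather than free.

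For existence I would first solve locally in a neighbourhood of $\Sigma_0$. On the spacelike part this is the classical smooth Cauchy problem; each null end, taken jointly with the adjacent spacelike portion, constitutes a mixed characteristic--spacelike initial value problem of Goursat type, in which the datum $\Phi$ on the null piece together with the trace of the Cauchy solution at the corner $r=r_\mathcal{H}$ (resp.\ $r=r_\mathcal{I}$) determines the solution. Matching these across the two corners (subject to the usual corner compatibility conditions, automatic here for smooth $\Phi$) yields a smooth solution in a full neighbourhood of $\Sigma_0$ in $J^+(\Sigma_0)$. I would then globalise by an energy estimate: contracting the stress--energy tensor $\mathbb T_{\mu\nu}[\phi]$ with the strictly timelike, time-invariant red-shift field $N$ introduced in \S\ref{sec:geometry} and applying the divergence theorem over the slab $\bigcup_{0\le\tau'\le\tau}\Sigma_{\tau'}$ gives an a priori bound on the $N$-energy of $\phi$ on $\Sigma_\tau$ in terms of data. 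The point of using $N$ rather than the Killing field $T=L+\underline{L}$ is that $T$ becomes null on $\Hp$, so its energy fails to be coercive there, whereas $N$ is uniformly timelike and the boundary term on $\Hp$ acquires a good sign. Combined with local existence, a standard continuity/bootstrap argument extends the solution to all of $J^+(\Sigma_0)$.

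Uniqueness follows from the same energy identity applied to the difference of two solutions sharing the given data: the difference has vanishing initial energy, so the estimate forces it to vanish throughout $J^+(\Sigma_0)$. Equivalently, this is the finite-speed-of-propagation / domain-of-dependence property, using that $\Sigma_0$ is achronal and that every past-inextendible causal curve issuing from $J^+(\Sigma_0)$ meets $\Sigma_0$. Smoothness is then upgraded by commuting $\Box_g$ with the complete set of Killing fields $T$ and the angular rotation generators on $\mathbb S^2$, as well as with $N$ (or the tangential field $Y=\partial_\rho$), running the energy estimate at every order, and invoking Sobolev embedding on the slices $\Sigma_\tau$.

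The main obstacle is bookkeeping rather than conceptual: one must correctly formulate and solve the \emph{mixed} spacelike--null initial value problem and verify the matching at the two corners $r=r_\mathcal{H},\,r=r_\mathcal{I}$, and one must choose the multiplier $N$ (not $T$) so that the energy remains coercive up to the horizon. None of this requires the $r^p$-weighted or quantitative decay machinery of \cite{Angelopoulos2018ASpacetimes}; those enter only for the late-time asymptotics, not for the qualitative existence, uniqueness and smoothness asserted here.
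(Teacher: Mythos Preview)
Your sketch is a correct outline of the standard well-posedness argument for the mixed spacelike--null Cauchy problem on Schwarzschild. Note, however, that the paper itself does not prove this proposition: it is introduced with ``We recall the following standard result'' and stated without proof, so there is no paper-proof to compare against. Your write-up is therefore more than the paper provides; it supplies the argument the paper takes for granted.
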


\subsection{The modified Newman--Penrose constants \texorpdfstring{$I_0[\phi]$, $\ill[\phi]$ and $\ilog[\phi]$}{I0[phi], Ilog[phi] and Ilog2[phi]}}
Let $\phi$ be a solution to the wave equation in the sense of Proposition~\ref{prop:Cauchy}. Let moreover $f$ be a smooth function such that $\lim_{r\to\infty}r^{-3}f(r)=0$.
Then we define
\begin{equation}
I_0^f[\phi](u):=\frac{1}{4\pi}\lim_{r\to\infty}\int_{\mathbb{S}^2}f(r)\partial_r(r\phi)(u,r,\omega)\dd \omega.
\end{equation}
It is shown e.g.\ in\footnote{The proof there is only written for $f=r^{2}$, but it works for any smooth $f$ as specified above.}~\cite{Angelopoulos2018ASpacetimes} that this quantity, if finite initially, is, in fact, independent of $u$. In this case, we can write
\begin{equation}
I_0^{f}[\phi](u)=I^f_0[\phi](u_0)=:I^f_0[\phi].
\end{equation}
We moreover introduce the following notation:
\begin{align*}
I_0^{\frac{\log r}{r^2}}[\phi]:=\ill[\phi],&&
I_0^{\frac{1}{r^2}}[\phi]:=I_0[\phi].
\end{align*}
\paragraph{The past Newman--Penrose constant}
We finally define the past analogue of the Newman--Penrose constant $I_0$ for scalar fields $\phi$ which solve $\Box_g \phi=0$ on all of $\mathcal{M}$:
\begin{equation}
I_0^{(\mathrm{past})}[\phi](v):=\frac{1}{4\pi}\lim_{r\to\infty}\int_{\mathbb{S}^2}r^2\partial_r(r\phi)(v,r,\omega)\dd \omega.
\end{equation}

\subsection{The main energy norms}
In the sequel, we will refer to several initial data energy norms $E_k^N[\phi]$, $ E^\epsilon_{0,I_0^{\log}\neq 0; k}[\phi]$, $ \widetilde{E}^\epsilon_{0,I_0^{\log}\neq 0; k}[\phi]$, $ E^\epsilon_{0,I_0^{\log}= 0; k}[\phi]$ etc. 
These energy norms, which are defined on $\Sigma_0$, measure the almost-sharp decay (with an $\epsilon$-loss) and the regularity of the initial data on $\Sigma_0$, see already Propositions~\ref{prop5.2} and~\ref{cor7.6}.
Since they are only used for the black box results of \S \ref{sec:blackbox}, their definitions are deferred to appendix~\ref{app}. We remark already that, in the context of the scattering data we are ultimately interested in (which satisfy \eqref{nt:eq:intro:pvrphilog}), these energy norms will always be finite if enough regularity is assumed.

\subsection{The almost-sharp decay estimates}\label{sec:blackbox}
We have now introduced all the necessary baggage to finally quote the following two black box results (these correspond to Proposition~5.2 and Corollary~7.6 from~\cite{Angelopoulos2018Late-timeSpacetimes}, respectively):
\begin{prop}\label{prop5.2}
Let $\phi$ be a spherically symmetric solution of \eqref{waveequation} in the sense of Proposition~\ref{prop:Cauchy}, let $k\in\mathbb{N}_0$, and assume that $ E^\epsilon_{0,I_0^{\log}\neq 0; k+1}[\phi]<\infty$ for some $\epsilon\in(0,1)$.
Then there exists a constant $C(R,k,\epsilon)$ such that, for all $\tau\geq 0$:
\begin{align}
|T^k\phi|(\tau,\rho)&\leq C\sqrt{ E^\epsilon_{0,I_0^{\log}\neq 0; k+1}[\phi]}(1+\tau)^{-2-k+\epsilon},\label{eq:prop52phi}\\
\sqrt{\rho+1}\cdot|T^k\phi|(\tau,\rho)&\leq C\sqrt{ E^\epsilon_{0,I_0^{\log}\neq 0; k}[\phi]}(1+\tau)^{-\frac32-k+\epsilon}\label{eq:prop52sqrtrhophi},\\
\rho\cdot|T^k\phi|(\tau,\rho)&\leq C\sqrt{ E^\epsilon_{0,I_0^{\log}\neq 0; k}[\phi]}(1+\tau)^{-1-k+\epsilon}.\label{eq:prop52rhophi}
\end{align}
\end{prop}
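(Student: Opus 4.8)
The plan is to prove the three displayed estimates by the weighted energy (\,$r^p$\,) method of Dafermos--Rodnianski~\cite{Dafermos2010ASpacetimes} in the refined form of Angelopoulos--Aretakis--Gajic~\cite{Angelopoulos2018ASpacetimes}, using the red-shift vector field $N$ introduced above and commutation with the Killing field $T=\partial_\tau$. Since $\phi$ is spherically symmetric, setting $\psi:=r\phi$ reduces $\Box_g\phi=0$ to a $1{+}1$-dimensional wave equation $\underline{L}L\psi=-\tfrac14 V\psi$ with potential $V=\mathcal O(r^{-3})$, and because the foliation is $T$-invariant the same equation holds for $T^k\phi$ with the same potential. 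I would first record the three structural inputs on which everything rests: (i) an integrated local energy decay (Morawetz) estimate, classical on Schwarzschild and available more generally under the hypotheses of the remark following \eqref{eq:geometry:D}, controlling a degenerate spacetime energy by the initial energy; (ii) the red-shift estimate attached to $N$, which upgrades the degenerate flux to a non-degenerate one in $\{2M\le r\le r_\mathcal{H}\}$ and closes all estimates at $\Hp$; and (iii) the good commutation of $T$ with the foliation, so that the whole scheme can be run for $T^k\phi$.

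Next I would set up the hierarchy. Multiplying the equation for $\psi$ by $r^p L\psi$ and integrating over the region bounded by $\Sigma_{\tau_1}$, $\Sigma_{\tau_2}$ and $\Iplus$ yields, for $p$ up to (just below) the critical weight admitted by the $r^{-3}$-potential, an inequality of the schematic form
\begin{equation*}
\int_{\Sigma_{\tau_2}} r^{p}(L\psi)^2\,\dd\omega\,\dd r
+\int_{\tau_1}^{\tau_2}\!\!\int_{\Sigma_\tau} r^{p-1}(L\psi)^2\,\dd\omega\,\dd r\,\dd\tau
\lesssim \int_{\Sigma_{\tau_1}} r^{p}(L\psi)^2\,\dd\omega\,\dd r+\text{(Morawetz)},
\end{equation*}
where the potential contributes only a lower-order term absorbable by (i). The bulk term on the left controls the $r^{p-1}$-weighted flux, so the fluxes at consecutive weights are linked, and a dyadic pigeonhole argument over intervals $[\tau,2\tau]$ converts the chain into polynomial $\tau$-decay of the lower-weight fluxes. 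The rate is dictated by the number of steps in the hierarchy, with the sharp power degraded by $\epsilon$ because the top weight is only borderline (logarithmically) integrable --- this is exactly where the $\epsilon$-loss, and hence the blindness to logarithmic corrections noted in the introduction, enters. Commuting with $T^k$ shifts every flux decay rate by a factor $\tau^{-2k}$, which is the origin of the $-k$ in the exponents, and the red-shift estimate (ii) propagates the decay down to $\Hp$.

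It then remains to convert flux decay into the pointwise bounds. Integrating $\partial_\rho(r\,T^k\phi)$ inward from $\Iplus$ (where the limit $r\,T^k\phi$ is controlled by the top of the hierarchy) and applying Cauchy--Schwarz against the $\tau$-decaying weighted fluxes gives directly $\rho\,|T^k\phi|\lesssim\sqrt{E_k}\,(1+\tau)^{-1-k+\epsilon}$ and, with a $\sqrt{r}$-weight, the intermediate bound $\sqrt{\rho+1}\,|T^k\phi|\lesssim\sqrt{E_k}\,(1+\tau)^{-3/2-k+\epsilon}$, where $E_k$ abbreviates the data norm $E^\epsilon_{0,\ill\neq 0; k}[\phi]$ and the two weights merely reflect which flux is used. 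The fastest, un-weighted estimate \eqref{eq:prop52phi} is sharpest in the region $\rho\lesssim\tau$: for $\rho\gtrsim\tau$ it already follows from \eqref{eq:prop52sqrtrhophi} since there $(\rho+1)^{-1/2}\lesssim\tau^{-1/2}$, while for $\rho\lesssim\tau$ (including bounded $r$ and $\Hp$) it requires controlling the full non-degenerate energy of one further commuted field --- hence the higher index $k+1$ on the right of \eqref{eq:prop52phi} --- and passing to the pointwise bound by a local Sobolev/elliptic estimate, which gains the missing half power.

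The main obstacle is controlling the top of the $r^p$-hierarchy uniformly and thereby pinning down the decay rate: the critical weight admitted by the $\ell=0$ potential is borderline, so working exactly at it loses a logarithm, and only by staying just below it (the role of $\epsilon$) does the pigeonhole argument close. Secondary technical points are ensuring that the red-shift estimate (ii) meshes with the $r^p$-fluxes across the matching region $\{r_\mathcal{H}\le r\le r_\mathcal{I}\}$, and verifying that the designated data norms $E^\epsilon_{0,\ill\neq 0; k}[\phi]$ are precisely those rendering all boundary terms finite. Both are carried out in full in~\cite{Angelopoulos2018ASpacetimes,Angelopoulos2018Late-timeSpacetimes}, which is why we quote the statement as a black box.
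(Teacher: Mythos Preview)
Your proposal is correct and matches the paper's treatment: the paper does not prove Proposition~\ref{prop5.2} at all but explicitly imports it as a black-box result from~\cite{Angelopoulos2018Late-timeSpacetimes} (see the sentence preceding the statement in \S\ref{sec:blackbox}). Your sketch of the underlying argument --- the $r^p$-hierarchy of~\cite{Dafermos2010ASpacetimes,Angelopoulos2018ASpacetimes} coupled with Morawetz and red-shift estimates, the dyadic pigeonhole extraction of decay, the $T$-commutation, and the identification of the $\epsilon$-loss with the borderline top weight --- accurately reflects the method of the cited references, and your closing sentence acknowledging the black-box nature is exactly in line with the paper.
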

\begin{prop}\label{cor7.6}
Let $\phi$ be a spherically symmetric solution of \eqref{waveequation} in the sense of Proposition~\ref{prop:Cauchy}, let $k\in\mathbb{N}_0$, and assume that $ \widetilde{E}^\epsilon_{0,I_0^{\log}\neq 0; k+1}[\phi]<\infty$ for some $\epsilon\in(0,1)$.
Then there exists a constant $C(R,k,\epsilon)$ such that, for all $\tau\geq 0$:
\begin{align}\label{eq:cor76}
\sqrt{\rho+1}(|NT^k (r\phi)|+|YT^k (r\phi)|)(\tau,\rho)\leq C\sqrt{\widetilde{E}^\epsilon_{0,I_0^{\log}\neq 0; k+1}[\phi]}(1+\tau)^{-\frac{5}{2}+\epsilon}.
\end{align}
\end{prop}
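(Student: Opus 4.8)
The plan is to prove the estimate by the $r^p$-weighted energy hierarchy that underlies~\cite{Angelopoulos2018ASpacetimes}, using Proposition~\ref{prop5.2} only to dispose of lower-order error terms. First I would reduce to $k=0$. Since $Tr=0$, the multiplication by $r$ commutes with $T$, so $T^k(r\phi)=r\,T^k\phi=r\phi_k$ with $\phi_k:=T^k\phi$ again a solution of \eqref{waveequation}; hence $NT^k(r\phi)=N(r\phi_k)$ and $YT^k(r\phi)=Y(r\phi_k)$, and it suffices to establish the case $k=0$ for the solution $\phi_k$, whose norm $\widetilde{E}^\epsilon_{0,I_0^{\log}\neq 0; 1}[\phi_k]$ is controlled by $\widetilde{E}^\epsilon_{0,I_0^{\log}\neq 0; k+1}[\phi]$; this matches the $\tau$-rate on the right-hand side being independent of $k$. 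Writing $\psi:=r\phi$, I would then split $J^+(\Sigma_0)$ into the far region $\{r\geq r_\mathcal{I}\}$, where $N=T=\pu+\pv$ and $Y=-\tfrac{2}{D}\pu+hT$, so that controlling $N(r\phi)$ and $Y(r\phi)$ amounts to controlling the two null derivatives $\pv\psi$ and $\pu\psi$; the intermediate region, handled by a Morawetz estimate; and the near-horizon region $\{r\leq r_\mathcal{H}\}$, where $N=T-Y$ and the red-shift must be exploited.

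Second, I would run the $r^p$-hierarchy for the spherically symmetric radiation field, which satisfies $\pu\pv\psi=\mathcal{V}\psi$ with $\mathcal{V}=\mathcal{O}(Mr^{-3})$. Multiplying by $r^p\pv\psi$ and integrating over a slab bounded by $\Sigma_{\tau_1}$, $\Sigma_{\tau_2}$ and $\Iplus$ produces, for $p\in(0,2]$, a coercive bulk $\sim p\iint r^{p-1}(\pv\psi)^2$ and controls the outgoing flux $\int_{\Sigma_\tau}r^p(\pv\psi)^2$. Combined with the red-shift estimate at $\Hp$, the integrated local energy decay estimate, and the usual dyadic pigeonholing, this yields a ladder of decaying weighted-energy fluxes; commuting with the Killing field $T$ (which commutes with $\Box_g$, $N$ and $Y$) shifts the ladder downward by a power of $\tau$. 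The decisive rung is the top one: in the regime $\ill[\phi]\neq 0$ the $p=2$ estimate only closes up to a logarithmic correction, and this is exactly where the $\epsilon$-loss of the almost-sharp estimate is generated. The output is decay, at the almost-sharp rate, of the weighted derivative fluxes of $\psi$ through $\Sigma_\tau$ (of $\psi$, $\pv\psi$, and one further derivative) that feed into the next step.

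Third, I would convert energy decay into the pointwise $\sqrt{\rho+1}$-weighted bound by a one-dimensional Sobolev/FTC argument on $\Sigma_\tau$ (recall $r=\rho$). For the outgoing derivative I integrate inward from $\Iplus$ along $\{u=\con\}$; since $\pv\psi=\mathcal{O}(r^{-2}\log r)$ there (reflecting the finiteness of the modified Newman--Penrose constant $\ill[\phi]$), the boundary term $\lim_{r'\to\infty}\sqrt{r'}\,\pv\psi$ vanishes, and Cauchy--Schwarz gives
\[
\sqrt{r}\,|\pv\psi|(\tau,\rho)\;\lesssim\;\Big(\int_\rho^\infty (r')^{-1-2\delta}\,\dd r'\Big)^{1/2}\Big(\int_\rho^\infty (r')^{1+2\delta}\,\big|\partial_{\rho'}(\sqrt{r'}\,\pv\psi)\big|^2\,\dd r'\Big)^{1/2}.
\]
The last factor is precisely a weighted second-derivative energy controlled in the previous step, because $\partial_\rho\pv\psi=-\tfrac{2}{D}\pu\pv\psi+hT\pv\psi=-\tfrac{2}{D}\mathcal{V}\psi+h\,\pv\psi_1$, so it reduces to the mixed derivative $\mathcal{V}\psi$ (bounded through Proposition~\ref{prop5.2}) and the $T$-derivative $\pv\psi_1$ handled by the commuted hierarchy. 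This yields $\sqrt{\rho+1}\,|N(r\phi)|\lesssim(1+\tau)^{-5/2+\epsilon}$; the $Y$-direction ($\pu\psi$) is treated analogously, writing $\pu\psi=T\psi-\pv\psi$ and absorbing the near-horizon contribution by the red-shift and the intermediate one by the Morawetz estimate, with Proposition~\ref{prop5.2} again supplying the pointwise bounds on $\phi_k,\phi_{k+1}$ in the error terms.

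The main obstacle is the extra half-power of spatial $r$-decay for the \emph{derivatives} of $\psi$, which Proposition~\ref{prop5.2} does not itself provide: extracting it forces one to work at the very top of the $r^p$-hierarchy, which in the present logarithmically modified setting $\ill[\phi]\neq 0$ fails to close cleanly at $p=2$. Controlling the resulting logarithmic boundary/bulk terms while losing only $\epsilon$, rather than a full power, in $\tau$ is the heart of the matter; the remaining delicacy is the matching of $N=T-Y$ across $r_\mathcal{H}\leq r\leq r_\mathcal{I}$ and the careful use of the sign of the red-shift at $\Hp$ to absorb the would-be bad boundary terms there.
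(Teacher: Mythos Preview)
The paper does not prove this proposition at all: it is explicitly quoted as a black-box result from~\cite{Angelopoulos2018Late-timeSpacetimes} (Corollary~7.6 there), alongside Proposition~\ref{prop5.2}. See the sentence immediately preceding the two propositions: ``We have now introduced all the necessary baggage to finally quote the following two black box results\dots''. So there is no in-paper proof to compare your attempt against.

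That said, your outline is broadly the strategy of the cited reference---the $r^p$-hierarchy combined with red-shift and Morawetz estimates, followed by a Sobolev/FTC conversion on $\Sigma_\tau$---and is at the right level of detail for a sketch. One caveat worth flagging: your reduction to $k=0$ via $\phi_k:=T^k\phi$ implicitly requires $\widetilde{E}^\epsilon_{0,I_0^{\log}\neq 0;1}[T^k\phi]\lesssim\widetilde{E}^\epsilon_{0,I_0^{\log}\neq 0;k+1}[\phi]$, but the norms defined in Appendix~\ref{app} carry $k$-dependent $r$-weights (e.g.\ $r^{3+2k-\epsilon}(\partial_r^{1+k}(r\phi))^2$) that do not obviously reduce in this way; in~\cite{Angelopoulos2018Late-timeSpacetimes} the $k$-commuted estimates are derived directly rather than by this shift. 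This is a bookkeeping issue rather than a conceptual gap, but it means your first step would need more care if you were to actually carry it out.
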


\section{Asymptotics I: The case  \texorpdfstring{$\ill[\phi]\neq0$}{Ilog[phi]!=0}}\label{sec:asymptotics}
In this section, we derive the precise late-time asymptotics for  spherically symmetric solutions $\phi$ to \eqref{waveequation}, evolving from initial data as in Proposition~\ref{prop:Cauchy}, which  have finite $\ill[\phi]\neq 0$. Let us from now on denote the causal future of $\Sigma_0$ as $\mathcal R$,  $J^+(\Sigma_0)=:\mathcal{R}$. 

We follow very closely section 8 of~\cite{Angelopoulos2018Late-timeSpacetimes}. 
Even though the methods are essentially identical, all of the proofs in~\cite{Angelopoulos2018Late-timeSpacetimes} require some adjustments in order to work in the case $\ill[\phi]\neq 0$ (remember that in~\cite{Angelopoulos2018Late-timeSpacetimes}, it is assumed that $I_0[\phi]<\infty$). 
Since those parts which do not require adjustments usually make up for just a few lines, we here opt for a mostly self-contained presentation rather than frequently referring to~\cite{Angelopoulos2018Late-timeSpacetimes}. 
Nevertheless, certain parts of our proofs will have a more detailed explanation in~\cite{Angelopoulos2018Late-timeSpacetimes}, in which case the reader will be informed of the precise reference.
\subsection{The splitting of the spacetime and the region \texorpdfstring{$\mathcal B_{\alpha}$}{B-alpha}}
We define, for $\alpha\in(0,1)$, the following subsets of $\mathcal{M}$:
\begin{align*}
\mathcal{B}_\alpha:=\{r\geq R\}\cap\{0\leq u\leq v-v^\alpha\}.
\end{align*}
We moreover denote 
\[\gamma_\alpha:=\{v-u=v^\alpha\}\cap\{u\geq 0\};\]
this is a timelike hypersurface which contains part of the boundary of $\mathcal{B}_\alpha$.
Without loss of generality, we assume that $v_{\gamma_\alpha}(u)\geq v_{r=R}(u)$ for all $u\geq 0$, where $v_{\gamma_\alpha}(u)$ is the unique $v$ such that $(u,v)\in \gamma_\alpha$, and we similarly define $v_{r=R}(u)$ and $u_{\gamma_\alpha}(v)$.

In the sequel, we will split up $\mathcal{R}$ into the regions $\mathcal{B}_\alpha$ for some suitable $\alpha$, $\mathcal{R}\cap\{r\geq R\}\setminus \mathcal{B}_\alpha$, and $\mathcal{R}\cap\{r\leq R\}$. See Figure~\ref{fig:3} below.
\begin{figure}[b]
  \floatbox[{\capbeside\thisfloatsetup{capbesideposition={right,top},capbesidewidth=4.8cm}}]{figure}[\FBwidth]
{\caption{Depiction of $\mathcal R:=J^+(\Sigma_0)$ and its subsets $\mathcal{B}_\alpha$ (to the right of the blue curve), $\mathcal{R}\cap\{r\geq R\}\setminus \mathcal{B}_\alpha$, and $\mathcal{R}\cap\{r\leq R\}$. The blue curve, in turn, corresponds to $\gamma_\alpha$. }\label{fig:3}}
  {\includegraphics[width = 200pt]{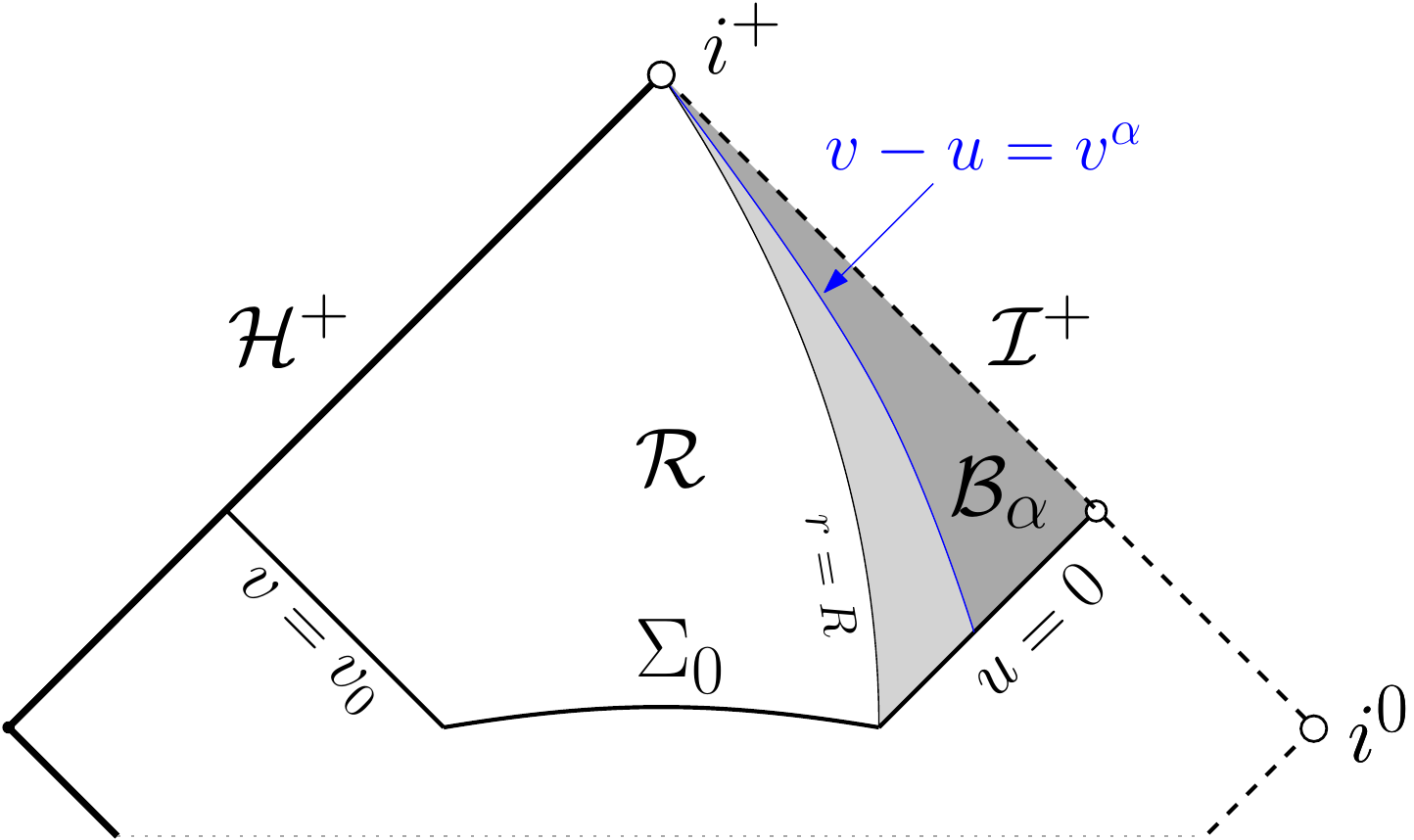}}
\end{figure}
For the reader's convenience, we here collect a few relations between $u,v$ and $r$ which will frequently be used in the following:
We have, throughout $\mathcal{B}_\alpha$, for sufficiently large $R$:
\begin{align}
r\gtrsim v-u\geq v^\alpha_{\gamma_\alpha}(u)\gtrsim(u+1)^{\alpha}\label{comp:rgreateru},\\
r\gtrsim v-u\geq v-u_{\gamma_\alpha}(v)=v^\alpha\label{comp:rgreaterv},\\
v\geq u+1\geq\frac{v_{\gamma_\alpha}(u)}{2}\geq \frac{u+1}{2}\label{comp:vgammalikeu}.
\end{align}
Moreover, we have throughout all of $\mathcal{R}\cap\{r\geq R\}$ that $\tau\sim u$ and that:
\begin{align}
|(v-u-1)-2r|\lesssim \log r\lesssim \log v\label{comp:rlikev-u-1},
\end{align}
and thus, in particular, $r\sim v-u$.
These relations can easily be checked using the definition of $\mathcal{B}_\alpha$ and eq.\ \eqref{eq:geometry:tortoise}. The implicit constants in $\sim$ and $\lesssim$ depend only on $M$ and $R$. Since $R>M$, they can, in fact, be chosen to depend only on $R$.

\subsection{Asymptotics for \texorpdfstring{$v^2\pv(r\phi)$}{v2 dv(rphi)} in the region \texorpdfstring{$\mathcal{B}_\alpha$}{B-alpha}}
\newcommand{\Plog}{P_{\ill,I_0',\beta}}
\newcommand{\Plogone}{P_{\ill,I_0',\beta;1}}
\newcommand{\Elog}{E^\epsilon_{0,\ill\neq0;0}}
Throughout the rest of this section, we assume that $\phi$ is a smooth, spherically symmetric solution arising from initial data on $\Sigma_0$. In addition to assuming that $\ill[\phi]<\infty$, it will be convenient to also assume that the following limit is finite on initial data:
\newcommand{\ifake}{I_0'[\phi]}
\begin{equation}\label{eq:ifake}
\lim_{r\to\infty}r^2\left(\partial_r(r\phi)(u_0,r)-\ill[\phi]\frac{\log r-\log 2}{r^2}\right):=\ifake.
\end{equation}
Let us then introduce the following $L^\infty$-norm on $\Sigma_0$ for $\beta>0$:
\begin{equation}\label{eq:definitionofP}
\Plog[\phi]:=\left|\left|v^{2+\beta}\left(\pv(r\phi)-2\frac{\ill[\phi]\log v}{v^2}-2\frac{\ifake}{v^2}\right)\right|\right|_{L^\infty(\Sigma_0)}.
\end{equation}
Our first proposition then concerns the asymptotics of $\pv(r\phi)$ in $\mathcal B_\alpha$:
\begin{prop}\label{prop8.1}
Let $\alpha\in(\frac{2}{3},1)$, $\epsilon\in(0,\frac{3\alpha-2}{2})$, and assume that $\Elog[\phi]<\infty$.
If there exists $\beta>0$ such that
\begin{equation}
\Plog[\phi]<\infty,
\end{equation}
then we have for all $(u,v)\in \mathcal{B}_\alpha$ that there exists a constant $C(R,\alpha,\epsilon)>0$ such that
\begin{equation}
|v^2\pv(r\phi)(u,v)-2\ill[\phi]\log v-2\ifake|\leq C\sqrt{\Elog[\phi]}\frac{1}{v^{3\alpha-2-2\epsilon}}+\Plog[\phi] v^{-\beta}.
\end{equation}
\end{prop}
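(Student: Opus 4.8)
The plan is to obtain the asymptotics of $\pv(r\phi)$ in $\mathcal B_\alpha$ by transporting the data for $\pv(r\phi)$ from the outgoing null segment $\{u=0\}\cap\{r\geq R\}$ of $\Sigma_0$ along lines of constant $v$, using the radial wave equation, and to absorb the transport error into the stated error terms via the almost-sharp pointwise decay of Proposition~\ref{prop5.2}. For a spherically symmetric solution the equation $\Box_g\phi=0$ reduces to
\begin{equation*}
\pu\pv(r\phi)=-\frac{MD}{2r^3}(r\phi),
\end{equation*}
which I read as a transport equation for $\pv(r\phi)$ along the integral curves of $\underline{L}=\pu$, i.e.\ along lines of constant $v$.

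First I would fix $(u,v)\in\mathcal B_\alpha$ and follow the constant-$v$ line back to $u'=0$. Since $\pu r=-D/2<0$, decreasing $u'$ increases $r$, so the whole segment $\{(u',v):0\leq u'\leq u\}$ lies in $\mathcal R\cap\{r\geq R\}$ and its past endpoint $(0,v)$ lies on the null part of $\Sigma_0$. Integrating the transport equation yields
\begin{equation*}
\pv(r\phi)(u,v)=\pv(r\phi)(0,v)-\int_0^u\frac{MD}{2r^3}(r\phi)(u',v)\dd u'.
\end{equation*}
By the definition \eqref{eq:definitionofP} of $\Plog[\phi]$ (together with the finiteness of $\ifake$ from \eqref{eq:ifake}), the boundary term obeys $\big|\pv(r\phi)(0,v)-2\ill[\phi]v^{-2}\log v-2\ifake\, v^{-2}\big|\leq\Plog[\phi]\,v^{-2-\beta}$, which after multiplication by $v^2$ contributes exactly the term $\Plog[\phi]\,v^{-\beta}$ in the claimed estimate. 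It therefore remains to show that $v^2$ times the transport integral is bounded by $C\sqrt{\Elog[\phi]}\,v^{-(3\alpha-2-2\epsilon)}$.

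This is the technical heart of the argument. Here I would invoke the pointwise bound \eqref{eq:prop52rhophi} of Proposition~\ref{prop5.2} with $k=0$; since $\rho=r$ and $\tau\sim u$ throughout $\mathcal R\cap\{r\geq R\}$, it gives $|r\phi|(u',v)\lesssim\sqrt{\Elog[\phi]}\,(1+u')^{-1+\epsilon}$. Using $D\leq 1$ and $r\sim v-u'$ from \eqref{comp:rgreateru}--\eqref{comp:rgreaterv},
\begin{equation*}
v^2\left|\int_0^u\frac{MD}{2r^3}(r\phi)(u',v)\dd u'\right|\lesssim\sqrt{\Elog[\phi]}\,v^2\int_0^u\frac{(1+u')^{-1+\epsilon}}{(v-u')^3}\dd u'.
\end{equation*}
To extract the right power I would split the $u'$-integral at $u'=v/2$. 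On $[0,v/2]$ one has $v-u'\gtrsim v$, so this part contributes $\lesssim v^{-1+\epsilon}$; on $[v/2,u]$ one uses $(1+u')^{-1+\epsilon}\lesssim v^{-1+\epsilon}$ together with $\int_{v/2}^u(v-u')^{-3}\dd u'\lesssim(v-u)^{-2}\lesssim v^{-2\alpha}$ (by \eqref{comp:rgreaterv}), giving the dominant contribution $\lesssim v^{1-2\alpha+\epsilon}$. Since $\alpha<1$ forces $-1+\epsilon\leq 2-3\alpha+2\epsilon$ and $1-2\alpha+\epsilon\leq 2-3\alpha+2\epsilon=-(3\alpha-2-2\epsilon)$, both pieces are bounded by $C\sqrt{\Elog[\phi]}\,v^{-(3\alpha-2-2\epsilon)}$; the hypotheses $\alpha>\tfrac23$ and $\epsilon<\tfrac{3\alpha-2}{2}$ ensure that this exponent is negative, so the error genuinely decays. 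Combining this with the boundary term completes the proof.

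The main obstacle is entirely in the bookkeeping of this transport integral: near $\gamma_\alpha$ the weight $(v-u')^{-3}\sim r^{-3}$ is weakest (there $v-u'$ degenerates to $\sim v^\alpha$), and one must check that this loss is compensated by the decay $(1+u')^{-1+\epsilon}$ of $r\phi$. The constraints on $\alpha$ and $\epsilon$ are exactly what make this balance close with a decaying remainder. A secondary point to be careful about is that \eqref{eq:prop52rhophi} is phrased in the $(\tau,\rho)$-variables, so one must use $\rho=r$ and $\tau\sim u$ in $\mathcal R\cap\{r\geq R\}$ to transfer it to the $(u,v)$-integral above; this is also where the finiteness of $\Elog[\phi]$ enters.
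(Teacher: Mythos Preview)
Your proof is correct and follows the same strategy as the paper: integrate the radial wave equation $\pu\pv(r\phi)=-\tfrac{DD'}{4r}r\phi$ in $u$ from the null part of $\Sigma_0$, control the boundary term by the definition of $\Plog[\phi]$, and bound the transport integral using the almost-sharp decay \eqref{eq:prop52rhophi}. The only (minor) difference is in how the integral $\int_0^u r^{-3}|r\phi|\,\dd u'$ is estimated: you split the domain at $u'=v/2$ and use $r\sim v-u'$, whereas the paper instead splits the weight $r^{-3}$ using both $r\gtrsim v^\alpha$ and $r\gtrsim (u'+1)^\alpha$ from \eqref{comp:rgreateru}--\eqref{comp:rgreaterv} to get directly $v^2 r^{-3}|r\phi|\lesssim v^{-(3\alpha-2-2\epsilon)}(u'+1)^{-1-\epsilon}$, which integrates uniformly in $u$; either bookkeeping yields the stated error.
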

\begin{proof}
The proof follows by integrating the wave equation for $r\phi$,
\begin{equation}\label{waveequationradiationfield}
\pu\pv(r\phi)=-\frac{DD'}{	4r}r\phi\left(\sim- \frac{r\phi}{r^3}\right)
\end{equation}
(which is implied by \eqref{waveequation} and where $'$ denotes $r$-differentiation), in $u$ from initial data.
This gives
\begin{align*}
|v^2\pv(r\phi)(u,v)-v^2\pv(r\phi)(0,v)|\leq C v^{-(3\alpha-2-2\epsilon)}&\int_0^u r^{-3}v^{3\alpha+2\epsilon}|r\phi|(u',v)\dd u'\\
\leq C\sqrt{\Elog[\phi]}v^{-(3\alpha-2-2\epsilon)}&\int_0^u(u'+1)^{-1-\epsilon}\dd u',
\end{align*}
 where we used the estimates \eqref{comp:rgreateru}, \eqref{comp:rgreaterv} and the almost-sharp decay estimate \eqref{eq:prop52rhophi} for $r\phi$ with $k=0$ (recall that $\tau\sim u$ in $\mathcal B_\alpha$). (Compare with the proof of Proposition~8.1 in~\cite{Angelopoulos2018Late-timeSpacetimes}.)
\end{proof}

\subsection{Asymptotics for the radiation field \texorpdfstring{$r\phi$}{rphi} in \texorpdfstring{$\mathcal{B}_\alpha$}{B-alpha}}
We will now use the asymptotics for $\pv(r\phi)$ obtained above to obtain decay for $r\phi$:
\begin{prop}\label{prop8.2}
Under the assumptions of Proposition~\ref{prop8.1}, with additionally $\alpha\in[\frac57,1)$ and $\epsilon\in(0,\frac16(1-\alpha))$, we have for all $(u,v)\in \mathcal{B}_\alpha$ that
\begin{align}\label{eq:prop8.2,1}
\begin{split}
&\left|r\phi(u,v)-2\ill[\phi]\left(\frac{\log(u+1)+1}{u+1}-\frac{\log (v)+1}{v}\right)-2\ifake\left(\frac{1}{u+1}-\frac{1}{v}\right)\right|\\
 \leq &C\left(\sqrt{\Elog[\phi]}+\ill[\phi]+\ifake\right)(u+1)^{\frac{\alpha}{2}-\frac32+2\epsilon}+C\Plog[\phi](u+1)^{-1-\beta},
\end{split}
\end{align}
where $C=C(R, \epsilon, \alpha)>0$ is a constant. 
 In fact, if we further impose $\frac{1-\alpha}{2}<\beta+2\epsilon$, then the estimate above provides asymptotics for $r\phi$ in the region $\mathcal{B}_\delta\subset\mathcal B_{\alpha}$, where $\delta$ is chosen such that $1>\delta>\frac{\alpha+1}{2}+2\epsilon>\alpha+2\epsilon$. 

In particular, setting $v=0$, the estimate  \eqref{eq:prop8.2,1} provides us with asymptotics for $r\phi$ along $\mathcal I^+$.
%
\end{prop}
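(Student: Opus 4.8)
The plan is to recover $r\phi$ from the asymptotics of $\pv(r\phi)$ established in Proposition~\ref{prop8.1} by integrating in $v$ at fixed $u$, starting from the inner boundary $\gamma_\alpha$ of $\mathcal{B}_\alpha$. At fixed $u\geq 0$ the smallest $v$ with $(u,v)\in\mathcal{B}_\alpha$ is $v=v_{\gamma_\alpha}(u)$, so the fundamental theorem of calculus gives
\[
r\phi(u,v)=r\phi(u,v_{\gamma_\alpha}(u))+\int_{v_{\gamma_\alpha}(u)}^{v}\pv(r\phi)(u,v')\,\dd v'.
\]
Into the integral I substitute $\pv(r\phi)=v'^{-2}\bigl(2\ill[\phi]\log v'+2\ifake\bigr)+\mathcal{E}(u,v')$, where $\mathcal{E}$ collects the two error terms of Proposition~\ref{prop8.1} divided by $v'^2$. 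The principal part integrates explicitly via $\int v'^{-2}\log v'\,\dd v'=-v'^{-1}(\log v'+1)$ and $\int v'^{-2}\,\dd v'=-v'^{-1}$; evaluating between $v_{\gamma_\alpha}(u)$ and $v$ and then replacing the lower endpoint $v_{\gamma_\alpha}(u)$ by $u+1$ produces precisely the four main terms of \eqref{eq:prop8.2,1}.

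Two endpoint contributions must then be shown to be subleading. First, the boundary value $r\phi(u,v_{\gamma_\alpha}(u))$: the crude bound \eqref{eq:prop52rhophi} only gives $\lesssim(u+1)^{-1+\epsilon}$, which is \emph{too weak} to be an error term here, so instead I invoke the $\sqrt{\rho}$-weighted estimate \eqref{eq:prop52sqrtrhophi}. Since $\rho=r\sim v-u\sim(u+1)^\alpha$ on $\gamma_\alpha$ by \eqref{comp:rgreateru} and $\tau\sim u$, writing $|r\phi|=\rho|\phi|\lesssim\sqrt{\rho}\,\bigl(\sqrt{\rho+1}\,|\phi|\bigr)$ yields $|r\phi|_{\gamma_\alpha}\lesssim\sqrt{\Elog[\phi]}\,(u+1)^{\frac{\alpha}{2}-\frac32+\epsilon}$, comfortably inside the claimed error. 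Second, the replacement $v_{\gamma_\alpha}(u)\mapsto u+1$ in the explicit antiderivatives: since $\gamma_\alpha=\{v-u=v^\alpha\}$ and $v_{\gamma_\alpha}(u)\sim u+1$ by \eqref{comp:vgammalikeu}, one has $v_{\gamma_\alpha}(u)=(u+1)+\mathcal{O}((u+1)^\alpha)$, and a mean-value estimate on $w\mapsto w^{-1}(\log w+1)$ (with derivative $-w^{-2}\log w$) bounds the discrepancy by $\lesssim(\ill[\phi]+\ifake)(u+1)^{\alpha-2}\log(u+1)$, which again decays faster than the error because $\alpha-2<\frac{\alpha}{2}-\frac32$.

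It remains to integrate $\mathcal{E}$. The $\Plog[\phi]$-part contributes $\int_{v_{\gamma_\alpha}(u)}^{v}\Plog[\phi]\,v'^{-2-\beta}\,\dd v'\lesssim\Plog[\phi](u+1)^{-1-\beta}$, matching the final term of \eqref{eq:prop8.2,1}; the $\sqrt{\Elog[\phi]}$-part contributes $\int_{v_{\gamma_\alpha}(u)}^{v}v'^{-(3\alpha-2\epsilon)}\,\dd v'\lesssim(u+1)^{1-3\alpha+2\epsilon}$ provided $3\alpha-2\epsilon>1$, and demanding $1-3\alpha\leq\frac{\alpha}{2}-\frac32$ is exactly the constraint $\alpha\geq\frac57$, which is why this range is imposed. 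I expect \textbf{this bookkeeping of exponents against the leading $(u+1)^{-1}\log(u+1)$ behaviour to be the main obstacle}: the $\gamma_\alpha$-boundary term, the endpoint-shift term, and the two integrated error terms must each decay strictly faster than the main $\log$-term, and it is their simultaneous control that pins down the admissible ranges of $\alpha$, $\epsilon$ and $\beta$. Finally, for the \enquote{asymptotics} claim, passing to $\mathcal{B}_\delta$ with $\delta>\frac{\alpha+1}{2}+2\epsilon$ ensures that even at its inner edge $\gamma_\delta$ the difference of the $(u+1)$- and $v$-endpoint main terms is $\gtrsim(u+1)^{\delta-2}$ up to logs, which dominates the $(u+1)^{\frac{\alpha}{2}-\frac32+2\epsilon}$ error precisely when $\delta>\frac{\alpha+1}{2}+2\epsilon$, while the hypothesis $\frac{1-\alpha}{2}<\beta+2\epsilon$ guarantees the $\Plog[\phi]$-term is likewise subleading throughout $\mathcal{B}_\delta$.
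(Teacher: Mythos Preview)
Your proposal is correct and follows essentially the same route as the paper's proof: integrate $\pv(r\phi)$ from $\gamma_\alpha$ using Proposition~\ref{prop8.1}, control the boundary term $r\phi|_{\gamma_\alpha}$ via the $\sqrt{\rho}$-weighted estimate \eqref{eq:prop52sqrtrhophi}, shift the lower endpoint $v_{\gamma_\alpha}(u)\to u+1$ at the cost of $(u+1)^{\alpha-2}\log(u+1)$, and check that all error exponents are bounded above by $\frac{\alpha}{2}-\frac32+2\epsilon$ precisely when $\alpha\geq\frac57$. The only cosmetic difference is that the paper carries out the endpoint-shift estimate by an explicit $\log(1+x)\leq x$ computation rather than a mean-value argument, and it records the comparison $\max(-2+\alpha,\,1-3\alpha+2\epsilon)\leq\frac{\alpha}{2}-\frac32+2\epsilon$ in one line; your exponent bookkeeping and the argument for the $\mathcal{B}_\delta$ claim match the paper's.
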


\begin{proof}
Using the fundamental theorem of calculus, we write
\begin{equation}\label{eq:boundaryterm}
	r\phi(u,v)=r\phi(u,v_{\gamma_\alpha}(u))+\int_{v_{\gamma_\alpha}(u)}^v\pv(r\phi)(u,v')\dd v'.	
\end{equation}
The boundary term can be bounded by writing $r\phi=r^{\frac12} r^{\frac12}\phi$, writing $r^{\frac12}|_{\gamma_\alpha}\sim v_{\gamma_\alpha}^{\frac{\alpha}{2}}\sim(u+1)^{\frac{\alpha}{2}}$ by virtue of \eqref{comp:rlikev-u-1}, \eqref{comp:vgammalikeu} and the definition of $\gamma_\alpha$, and finally using the almost-sharp decay estimate \eqref{eq:prop52sqrtrhophi} with $k=0$. We thus obtain:
\[|r\phi(u,v_{\gamma_\alpha})|\leq C\sqrt{\Elog[\phi]}(u+1)^{\frac{\alpha}{2}-\frac32+\epsilon}.\]

In order to estimate the integral term, we plug in the result from the previous Proposition~\ref{prop8.1}, resulting in the estimate:
\begin{align}\label{proof:eq:prop8.2,1}
\begin{split}
\left|\int_{v_{\gamma_\alpha}}^v \pv(r\phi)(u,v')\dd v'-2\ill[\phi]\left(\frac{\log v_{\gamma_\alpha}}{v_{\gamma_\alpha}}-\frac{\log v}{v}\right)-2(\ill[\phi]+\ifake)\left(\frac{1}{v_{\gamma_\alpha}}-\frac{1}{v}\right)\right|\\
\leq C\sqrt{\Elog[\phi]}(v_{\gamma_\alpha}^{-3\alpha+1+2\epsilon}-v^{-3\alpha+1+2\epsilon})+C\Plog[\phi](v_{\gamma_\alpha}^{-1-\beta}-v^{-1-\beta}).
\end{split}
\end{align}
We first bound the terms from the LHS above. We write
\[v_{\gamma_\alpha}(u)^{-1}-v^{-1}=(v_{\gamma_\alpha}(u)^{-1}-(u+1)^{-1})+((u+1)^{-1}-v^{-1})\]
and estimate, using \eqref{comp:rgreateru},
\[		|v_{\gamma_\alpha}(u)^{-1}-(u+1)^{-1}|\leq(u+1)^{-1}v_{\gamma_\alpha}(u)^{\alpha-1}\leq C(u+1)^{-2+\alpha}.	\]
Similarly, we write 
\[\frac{\log v_{\gamma_\alpha}}{v_{\gamma_\alpha}}-\frac{\log v}{v}=\left(\frac{\log v_{\gamma_\alpha}(u)}{v_{\gamma_\alpha}(u)}-\frac{\log(u+1)}{u+1}\right)+\left(\frac{\log(u+1)}{u+1}-\frac{\log v}{v}\right)\]
and estimate
\begin{align*}
\frac{\log v_{\gamma_\alpha}(u)}{v_{\gamma_\alpha}(u)}-\frac{\log(u+1)}{u+1}=\log v_{\gamma_\alpha}(u)\left(\frac{1}{v_{\gamma_\alpha}}-\frac{1}{u+1}\right)+\frac{1}{u+1}\log \frac{v_{\gamma_\alpha}(u)}{u+1}\\
\leq C\frac{\log(u+1)}{(u+1)^{2-\alpha}}+\frac{1}{u+1}\log \left(1+\frac{v_{\gamma_\alpha}-u-1}{u+1}\right)\leq C\frac{\log(u+1)}{(u+1)^{2-\alpha}},
\end{align*}
where, in order to obtain the last inequality, we used
\[\log \left(1+\frac{v_{\gamma_\alpha}-u-1}{u+1}\right)\leq \frac{v_{\gamma_\alpha}-u-1}{u+1}\leq\frac{v_{\gamma_\alpha}^\alpha}{u+1}\leq C \frac{(u+1)^\alpha}{u+1}. \]

On the other hand, we estimate the terms on the RHS of \eqref{proof:eq:prop8.2,1} via
\[v_{\gamma_\alpha}(u)^{-1-\beta}-v^{-1-\beta}\leq v_{\gamma_\alpha}(u)^{-1-\beta}\leq (u+1)^{-1-\beta}, \]
and identically for the $(v_{\gamma_\alpha}^{-3\alpha+1+2\epsilon}-v^{-3\alpha+1+2\epsilon})$-term.

Finally, we can insert the estimates above back into \eqref{proof:eq:prop8.2,1} to find
\begin{align*}
\begin{split}
\left|\int_{v_{\gamma_\alpha}}^v \pv(r\phi)(u,v')\dd v'-2\ill[\phi]\left(\frac{\log(u+1)+1}{u+1}-\frac{\log (v)+1}{v}\right)-2\ifake\left(\frac{1}{u+1}-\frac{1}{v}\right)\right|\\
 \leq C\left(\sqrt{\Elog[\phi]}+\ill[\phi]+\ifake\right)(u+1)^{\frac{\alpha}{2}-\frac32+2\epsilon}+C\Plog[\phi](u+1)^{-1-\beta},
 \end{split}
\end{align*}
where we used that, for $\alpha\geq \frac57$, we have 
\[\max(-2+\alpha,1-3\alpha+2\epsilon)\leq \frac{\alpha}{2}-\frac32+2\epsilon.\]
This concludes the proof of the first statement \eqref{eq:prop8.2,1}.

To see that \eqref{eq:prop8.2,1} indeed gives the asymptotic behaviour in the region $\mathcal B_\delta$, we observe that, if $\epsilon\in(0,\frac{1}{6}(1-\alpha))$, we have $\frac{\alpha}{2}-\frac{3}{2}+2\epsilon<-1-\epsilon$. Furthermore, we have in the region $\mathcal{B}_\delta$ that
\begin{align*}
\left|\frac{1}{u+1}-\frac{1}{v}\right|\geq \left|\frac{1}{u+1}-\frac{1}{v_{\gamma_\delta}}\right|=\left|\frac{v_{\gamma_\delta}^\delta}{v_{\gamma_\delta}(u+1)}\right|\sim (1+u)^{-2+\delta}.
\end{align*}
Thus, if $1>\delta>\frac{\alpha+1}{2}+2\epsilon>\alpha+2\epsilon$, and if moreover $\frac{1-\alpha}{2}<\beta+2\epsilon$, we have $-2+\delta>-1-\beta$, and \eqref{eq:prop8.2,1} indeed gives the asymptotic behaviour in $\mathcal{B}_\delta$.
\end{proof}
\subsection{Asymptotics for \texorpdfstring{$T^k(r\phi)$}{Tk(rphi)} in the region \texorpdfstring{$\mathcal{B}_{\alpha_k}$}{B-alpha-k}}
We will now derive the asymptotics for $T^k(r\phi)$, $k>0$. This will be crucial later on when going back from the time integral of a solution to the original solution.

 In order to obtain the asymptotics for $T^k(r\phi)$ (Prop.~\ref{prop8.5}), we again first derive the asymptotics for $\pv(T^k(r\phi))$ (Prop.~\ref{prop8.4}). In turn, to derive the asymptotics for $\pv(T^k(r\phi))$, we will write $\pv T^k(r\phi)=\pv^{k+1}(r\phi)+\dots$, where the $\dots$-terms denote terms which, by the wave equation, decay faster. We will therefore first derive the asymptotics for $\pv^{k+1}(r\phi)$ in Proposition~\ref{prop8.3}.

In analogy to \eqref{eq:definitionofP}, we define  the following higher-order analogues of the norm $\Plog$ for $k\geq0$:
\newcommand{\Plogk}{P_{\ill,I_0',\beta;k}}
\newcommand{\Elogk}{E^\epsilon_{0,\ill\neq0;k}}
\begin{equation}\label{eq:definitionofPk}
\Plogk[\phi]:=\max_{0\leq j\leq k}\left|\left|v^{2+j+\beta}\pv^j\left(\pv(r\phi)-2\frac{\ill[\phi]\log v}{v^2}-2\frac{\ifake}{v^2}\right)\right|\right|_{L^\infty(\Sigma_0)}.
\end{equation}
We then have
\begin{prop}\label{prop8.3}
Let $k\in\mathbb{N}_0$, $\alpha_k\in(\frac{k+2}{k+3},1)$, let $\epsilon\in(0,\frac12(k+3)\alpha-\frac12(k+2))$, and assume that $\Elog[\phi]<\infty$. If moreover there exists $\beta>0$ such that
$
\Plogk[\phi]<\infty,
$
then we have for all $(u,v)\in \mathcal{B}_{\alpha_k}$:
\begin{align}
\begin{split}
&\left|	\pv^k \left(	\pv(r\phi)(u,v)-2\ill[\phi]\frac{\log v}{v^2}+2\ifake\frac{1}{v^2}\right)			\right| \\
			&\leq C \Plogk[\phi]v^{-2-k-\beta} +C\left(\sqrt{\Elog[\phi]} +\ill[\phi]+\ifake\right){v^{-(k+3)\alpha+2\epsilon}},
		\end{split}
\end{align}
where $C=C(R, \epsilon, \alpha_k,k)>0$ is a constant. 
\end{prop}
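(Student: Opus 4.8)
The plan is to prove the estimate by induction on $k$, with the base case $k=0$ being exactly Proposition~\ref{prop8.1} (after multiplying by $v^2$ and relabelling the error exponents). The mechanism is the same as there: integrate a suitable $v$-differentiated version of the wave equation \eqref{waveequationradiationfield} in $u$ from the initial segment $N^\mathcal{I}_0=\{u=0\}\cap\{r\geq R\}$ of $\Sigma_0$.

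Concretely, write $Q:=\pv(r\phi)-2\ill[\phi]\frac{\log v}{v^2}-2\ifake\frac{1}{v^2}$, so that the left-hand side of the proposition is $|\pv^k Q|$. The crucial structural observation is that the subtracted profile depends on $v$ only, so that $\pu\pv^k\big(2\ill[\phi]\frac{\log v}{v^2}+2\ifake\frac{1}{v^2}\big)=0$; hence, commuting $\pu$ and $\pv$, the quantity $\pv^k Q$ satisfies
\begin{equation*}
\pu\big(\pv^k Q\big)=\pv^k\big(\pu\pv(r\phi)\big)=-\pv^k\Big(\frac{DD'}{4r}\,r\phi\Big),
\end{equation*}
where I used \eqref{waveequationradiationfield}. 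Integrating in $u$ from $u=0$ gives
\begin{equation*}
\pv^k Q(u,v)=\pv^k Q(0,v)-\int_0^u \pv^k\Big(\frac{DD'}{4r}\,r\phi\Big)(u',v)\,\dd u'.
\end{equation*}
The first term is an initial-data term living on $\Sigma_0$, and by the very definition \eqref{eq:definitionofPk} of $\Plogk[\phi]$ (its $j=k$ entry is $v^{2+k+\beta}\pv^k Q$) it is bounded by $\Plogk[\phi]\,v^{-2-k-\beta}$, which is the first term in the claimed estimate. It thus remains to bound the $u$-integral by $C(\sqrt{\Elog[\phi]}+\ill[\phi]+\ifake)\,v^{-(k+3)\alpha+2\epsilon}$.

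For the integral I would expand the integrand by the Leibniz rule,
\begin{equation*}
\pv^k\Big(\frac{DD'}{4r}\,r\phi\Big)=\sum_{j=0}^k\binom{k}{j}\,\pv^{k-j}\Big(\frac{DD'}{4r}\Big)\,\pv^{j}(r\phi),
\end{equation*}
and estimate the factors separately. Since $\pv r=D/2$ is bounded and $\frac{DD'}{4r}=\mathcal{O}(r^{-3})$, each $v$-derivative produces an extra power $r^{-1}$, so $\pv^{k-j}(\frac{DD'}{4r})=\mathcal{O}(r^{-3-(k-j)})$. For $\pv^j(r\phi)$ I distinguish $j=0$, handled by the almost-sharp decay estimate \eqref{eq:prop52rhophi} giving $|r\phi|\lesssim\sqrt{\Elog[\phi]}(u'+1)^{-1+\epsilon}$, from $j\geq1$, where $\pv^j(r\phi)=\pv^{j-1}(\pv(r\phi))$ is controlled by the \emph{inductive hypothesis} (Proposition~\ref{prop8.3} at index $j-1<k$) together with the explicit profile, yielding $|\pv^j(r\phi)|\lesssim(\ill[\phi]+\ifake)\frac{\log v}{v^{j+1}}+(\text{errors})$. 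Here one invokes the nesting $\mathcal{B}_{\alpha_k}\subset\mathcal{B}_{\alpha_{j-1}}$, valid because $\alpha_k>\frac{k+2}{k+3}$ permits a choice $\alpha_{j-1}<\alpha_k$ with $\alpha_{j-1}>\frac{j+1}{j+2}$, so the lower-order asymptotics hold throughout $\mathcal{B}_{\alpha_k}$.

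Finally, I would carry out the $u$-integration exactly as in Proposition~\ref{prop8.1}: using $r\sim v-u$ and the comparison relations \eqref{comp:rgreateru}, \eqref{comp:rgreaterv} (so that $r\gtrsim v^{\alpha}$ and $r\gtrsim(u+1)^{\alpha}$), I pull out the target factor $v^{-(k+3)\alpha+2\epsilon}$ and reduce the remaining $u$-integral to one of the form $\int_0^u(u'+1)^{-1-\epsilon}\,\dd u'\leq C$, absorbing each $\log v$ into a power $v^{\epsilon}$. The restrictions $\alpha\in(\frac{k+2}{k+3},1)$ and $\epsilon\in(0,\frac12(k+3)\alpha-\frac12(k+2))$ are precisely what guarantee both that $r^{-3-(k-j)}$ times the decay of $\pv^j(r\phi)$ leaves an integrable $u$-weight and that the resulting $v$-power is the advertised $-(k+3)\alpha+2\epsilon$. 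I expect the main obstacle to be the bookkeeping in this last step: one must verify that the worst term in the Leibniz sum — the $j=0$ term $r^{-3-k}\,r\phi$, rather than any of the lower-derivative coupling terms — dictates the final rate, and that the $\log$-factors and the error terms inherited from the inductive hypothesis are genuinely subleading for every $0\leq j\leq k$ under the stated constraints.
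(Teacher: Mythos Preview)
Your proposal is correct and matches the paper's own (very brief) proof, which likewise commutes the wave equation \eqref{waveequationradiationfield} $k$ times with $\pv$, integrates in $u$ from $N_0^{\mathcal I}$, and proceeds inductively. The paper phrases the bookkeeping as ``multiply by $v^{k+2}$ and then integrate in $u$'', referring to Proposition~8.3 of \cite{Angelopoulos2018Late-timeSpacetimes} for details, but this is equivalent to what you wrote; your Leibniz expansion and the use of the inductive hypothesis for the $\pv^j(r\phi)$-factors with $j\geq1$ are exactly the missing details.
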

\begin{proof}
This proof proceeds in the same way as the proof of Proposition~\ref{prop8.1}, with the only difference being that we now inductively commute the wave equation \eqref{waveequationradiationfield} $k$ times with $\pv$, multiply it with $v^{k+2}$, and only then integrate in $u$. See the proof of Proposition~8.3 of~\cite{Angelopoulos2018Late-timeSpacetimes} for details.
\end{proof}

\begin{prop}\label{prop8.4}
Fix $k\in\mathbb{N}$. Under the assumptions of Proposition~\ref{prop8.3} and the additional assumption that $\Elogk[\phi]<\infty$, we have that
\begingroup\allowdisplaybreaks
\begin{align}\nonumber
&\left|		\pv T^k(r\phi)(u,v)-\pv^k\left(2\ill[\phi]\frac{\log v}{v^2}+2\ifake\frac{1}{v^2}\right)			\right| \\
			\leq& C \Plogk[\phi]v^{-2-k-\beta} +C\left(\sqrt{\Elog[\phi]} +\ill[\phi]+\ifake\right){v^{-(k+3)\alpha+2\epsilon}}\\
			 +&C\left(\sqrt{\Elogk[\phi]} +\ill[\phi]+\ifake\right)\sum_{l=0}^{k-1}r^{-3-l}(u+1)^{-k+l+\epsilon}\nonumber
\end{align}\endgroup
for all $(u,v)\in\mathcal{B}_{\alpha_k}$, where $C=C(R, \epsilon, \alpha_k,k)>0$ is a constant. 
\end{prop}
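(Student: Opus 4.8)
The plan is to reduce the mixed derivative $\pv T^k(r\phi)$ to the pure outgoing derivative $\pv^{k+1}(r\phi)$ — whose asymptotics are already supplied by Proposition~\ref{prop8.3} — modulo error terms that decay faster and are generated by the wave equation. The starting point is the decomposition $T=L+\underline{L}=\pv+\pu$ together with the fact that, since $T$ is Killing, $T^m\phi$ is again a solution of \eqref{waveequation}; applying \eqref{waveequationradiationfield} to $T^m\phi$ and using $T^m(r\phi)=r\,T^m\phi$ (recall $Tr=0$) yields the commutation identity
\[ \pv\pu\,T^m(r\phi)=-\frac{DD'}{4r}\,T^m(r\phi),\qquad m\geq 0. \]
The right-hand side carries the crucial extra $r^{-3}$ factor that every $\pu$-for-$\pv$ trade will supply.

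Using this, I would first establish, by a telescoping induction on $j$ based on the elementary step $\pv^j T^{k+1-j}(r\phi)=\pv^{j+1}T^{k-j}(r\phi)+\pv^{j-1}\!\left(-\frac{DD'}{4r}T^{k-j}(r\phi)\right)$, the exact identity
\[ \pv T^k(r\phi)=\pv^{k+1}(r\phi)+\sum_{l=0}^{k-1}\pv^{l}\!\left(-\frac{DD'}{4r}\,T^{k-1-l}(r\phi)\right). \]
The leading term $\pv^{k+1}(r\phi)=\pv^{k}(\pv(r\phi))$ is handled directly by Proposition~\ref{prop8.3}: it produces the announced main profile $\pv^{k}\big(2\ill[\phi]\tfrac{\log v}{v^2}+2\ifake\tfrac{1}{v^2}\big)$ together with the two error contributions $\Plogk[\phi]\,v^{-2-k-\beta}$ and $\big(\sqrt{\Elog[\phi]}+\ill[\phi]+\ifake\big)v^{-(k+3)\alpha+2\epsilon}$.

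It then remains to bound the correction sum. Expanding each summand by the Leibniz rule gives $\pv^{l}\!\left(-\frac{DD'}{4r}T^{k-1-l}(r\phi)\right)=-\sum_{a=0}^{l}\binom{l}{a}\big(\pv^{l-a}\tfrac{DD'}{4r}\big)\,\pv^{a}T^{k-1-l}(r\phi)$. Since $\frac{DD'}{4r}=\frac{MD}{2r^3}$ is a smooth function of $r$ alone, each $\pv$ acting on it costs one power of $r$, so $|\pv^{l-a}\tfrac{DD'}{4r}|\lesssim r^{-3-(l-a)}$. It therefore suffices to prove the mixed spatial–temporal pointwise bound
\[ |\pv^{a}T^{m}(r\phi)|\lesssim\Big(\sqrt{\Elogk[\phi]}+\ill[\phi]+\ifake\Big)\,r^{-a}(u+1)^{-1-m+\epsilon} \]
for $a+m\leq k-1$ throughout $\mathcal B_{\alpha_k}$; multiplying the two factors then yields precisely the $l$-th summand $r^{-3-l}(u+1)^{-k+l+\epsilon}$. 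For $a=0$ this is the almost-sharp estimate \eqref{eq:prop52rhophi} of Proposition~\ref{prop5.2} (using $\tau\sim u$, $\rho\sim r$ in the far region); for $a\geq1$ one re-expands $\pv^{a}T^{m}$ in the same way, invoking Proposition~\ref{prop8.3} for the pure $\pv$-derivatives ($m=0$) and the commutation identity for the mixed ones — in other words, the whole statement is run as an induction on $k$.

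The main obstacle is exactly establishing the mixed pointwise bound for $a\geq1$. The black-box estimates (Propositions~\ref{prop5.2} and~\ref{cor7.6}) control $T$-derivatives and only a single $Y$- (or $N$-) derivative of $r\phi$, with decay measured purely in $\tau$ and weighted by powers of $\sqrt{\rho}$, whereas the bound above demands a clean split into $r^{-a}$ spatial decay and $(u+1)^{-1-m+\epsilon}$ temporal decay. Reconciling the two forces one to exploit that $\mathcal B_{\alpha_k}$ lies far enough out that $r\gtrsim(u+1)^{\alpha_k}$ (cf.\ \eqref{comp:rgreateru}): the hypotheses $\alpha_k\in(\tfrac{k+2}{k+3},1)$ and $\epsilon\in(0,\tfrac12(k+3)\alpha-\tfrac12(k+2))$ are precisely what make the converted estimate close, the binding condition being essentially $a(1-\alpha_k)\lesssim\epsilon$, and care is needed to ensure the $\epsilon$-loss is not amplified as the number of derivatives grows. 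Once this bound is in hand, summing the Leibniz terms and collecting the three error types completes the argument.
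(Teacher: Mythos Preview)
Your proposal is correct and follows essentially the same strategy as the paper: both rest on the decomposition $\pv T^k(r\phi)=\pv^{k+1}(r\phi)+\text{(wave-equation corrections)}$, invoke Proposition~\ref{prop8.3} for the leading piece, and control the corrections via the almost-sharp decay of Proposition~\ref{prop5.2}. Your telescoping identity is in fact the explicit form of the paper's schematic expansion (your Leibniz-expanded coefficients are $\mathcal{O}(r^{-3-s})$, slightly sharper than the paper's stated $\mathcal{O}(r^{-2-s})$), and your inductive treatment of the mixed terms $\pv^a T^m(r\phi)$ makes explicit what the paper defers to the reference; the only cosmetic difference is that the paper cites \eqref{eq:prop52sqrtrhophi} whereas your base case $a=0$ uses \eqref{eq:prop52rhophi}, which is the more natural choice for the targeted bound $|T^m(r\phi)|\lesssim (u+1)^{-1-m+\epsilon}$.
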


\begin{proof}
This proof is a consequence of the fact that
\[\pv T^k(r\phi)=\pv^{k+1}(r\phi)+\sum_{s=0}^{k-1}\sum_{\substack{l,m\geq 0;\\l+m=k-1-s}}\mathcal{O}(r^{-2-s})\pv^l T^m(r\phi),\]
combined with the results of the previous Proposition~\ref{prop8.3} and the estimate \eqref{eq:prop52sqrtrhophi} from Proposition~\ref{prop5.2}. See the proof of Proposition~8.4 in~\cite{Angelopoulos2018ASpacetimes} for details.
\end{proof}
Before we move on to the next proposition, we define a set of constants $c_k$ via the relations
\begin{equation}\label{eq:Ck}
\pv^{k-1}\left(\frac{\log v}{v^2}\right)=:(-1)^k k!\frac{c_k+\log v}{v^{k+1}}
\end{equation}
for $k\geq 1$ and set $c_0:=1$. Note that $c_1=0$.
\begin{prop}\label{prop8.5}
Fix $k\in\mathbb{N}$. Under the assumptions of Proposition~\ref{prop8.4} and the additional assumptions that $\alpha_k\in[\frac{2k+5}{2k+7},1)$ and $\epsilon\in(0,\frac{1}{6}(1-\alpha_k))$, we have that, for all $(u,v)\in\mathcal{B}_{\alpha_k}$,
\begin{align}\label{eq:prop85,1}
\begin{split}
	\left|	T^k(r\phi)(u,v)-2(-1)^k k!\left(\ill[\phi] \left(	\frac{\log(u+1)+c_k}{(u+1)^{k+1}}-\frac{\log(v)+c_k}{v^{k+1}}	\right)\right.\right.\ \ \ &\\
	\left.\left.+\ifake\left(	\frac{1}{(u+1)^{k+1}}-\frac{1}{v^{k+1}}\right)\right)	\right|&\\
	\leq C\left(\sqrt{\Elogk[\phi]} +\ill[\phi]+\ifake\right) (u+1)^{-\frac32-k+\frac{\alpha_k}{2}+2\epsilon}\\
	+C\Plogk[\phi](u+1)^{-1-k-\beta},
\end{split}
\end{align}
where $C=C(R, \epsilon, \alpha_k,k)>0$ is a constant and $c_k$ is defined in \eqref{eq:Ck}. 

In fact, if we further impose $\frac{1-\alpha_k}{2}<\beta+2\epsilon$, then the estimate above provides the asymptotics for $r\phi$ in the region $\mathcal{B}_{\delta_k}\subset \mathcal B_{\alpha_k}$ for $1>\delta_k>\frac{\alpha_k+1}{2}+2\epsilon>\alpha_k+2\epsilon$. 

In particular, we obtain the following asymptotics along $\mathcal{I}^+$:
\begin{align}
\begin{split}
	\left|	T^k(r\phi)(u,\infty)-2(-1)^k k!\left(\ill[\phi] 	\frac{\log(u+1)+c_k}{(u+1)^{k+1}}\right.\right.
	\left.\left.+\ifake	\frac{1}{(u+1)^{k+1}}\right)	\right|\\
	\leq C\left(\sqrt{\Elogk[\phi]} +\ill[\phi]+\ifake\right) (u+1)^{-1-k-\epsilon}\\
	+C\Plogk[\phi](u+1)^{-1-k-\beta}.
\end{split}
\end{align}
\end{prop}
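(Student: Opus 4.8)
The plan is to repeat the argument of Proposition~\ref{prop8.2} almost verbatim, now at the level of $T^k(r\phi)$ and feeding in the outgoing-derivative asymptotics of Proposition~\ref{prop8.4} in place of those of Proposition~\ref{prop8.1}. Fixing $(u,v)\in\mathcal B_{\alpha_k}$, I would integrate in the outgoing direction starting from the timelike curve $\gamma_{\alpha_k}$, writing
\[
T^k(r\phi)(u,v)=T^k(r\phi)(u,v_{\gamma_{\alpha_k}}(u))+\int_{v_{\gamma_{\alpha_k}}(u)}^{v}\pv T^k(r\phi)(u,v')\dd v'.
\]
This splits the claimed estimate \eqref{eq:prop85,1} into a boundary contribution on $\gamma_{\alpha_k}$ and a bulk integral, to be treated separately.

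For the boundary term I would argue as in Proposition~\ref{prop8.2}: writing $T^k(r\phi)=r^{1/2}\cdot r^{1/2}T^k\phi$, using $r^{1/2}|_{\gamma_{\alpha_k}}\sim v_{\gamma_{\alpha_k}}^{\alpha_k/2}\sim(u+1)^{\alpha_k/2}$ by \eqref{comp:rlikev-u-1}, \eqref{comp:vgammalikeu} and the definition of $\gamma_{\alpha_k}$, and applying the almost-sharp estimate \eqref{eq:prop52sqrtrhophi} of Proposition~\ref{prop5.2} with this $k$ (recall $\tau\sim u$). This gives $|T^k(r\phi)(u,v_{\gamma_{\alpha_k}})|\lesssim\sqrt{\Elogk[\phi]}\,(u+1)^{\alpha_k/2-3/2-k+\epsilon}$, already of the order of the first error term in \eqref{eq:prop85,1}.

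For the bulk integral I would substitute the pointwise bound of Proposition~\ref{prop8.4}. The only genuine content is integrating the leading term $\pv^k(2\ill[\phi]\tfrac{\log v}{v^2}+2\ifake\tfrac{1}{v^2})$ once in $v'$: this lowers the derivative count to $\pv^{k-1}$, and evaluating at the endpoints via \eqref{eq:Ck} and $\pv^{k-1}(v^{-2})=(-1)^{k-1}k!\,v^{-k-1}$ reproduces, after letting $v_{\gamma_{\alpha_k}}\to u+1$, precisely the explicit $\ill[\phi]$- and $\ifake$-terms of \eqref{eq:prop85,1} (including the constants $c_k$). The three error terms of Proposition~\ref{prop8.4} integrate, respectively, to the separate contribution $\Plogk[\phi](u+1)^{-1-k-\beta}$, to $(\sqrt{\Elog[\phi]}+\ill[\phi]+\ifake)(u+1)^{-(k+3)\alpha_k+1+2\epsilon}$, and—using $r\sim v'-u$ together with $r|_{\gamma_{\alpha_k}}\sim(u+1)^{\alpha_k}$ from \eqref{comp:rgreateru}—to $(\sqrt{\Elogk[\phi]}+\dots)\sum_{l}(u+1)^{-k-2\alpha_k+l(1-\alpha_k)+\epsilon}$. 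A short exponent check shows that the latter two, together with the boundary term, are all $\lesssim(u+1)^{-3/2-k+\alpha_k/2+2\epsilon}$ exactly under the hypothesis $\alpha_k\geq\frac{2k+5}{2k+7}$ (the $(k+3)\alpha_k$-term saturates this, the sum being strictly better). Finally, again as in Proposition~\ref{prop8.2}, I would convert the endpoint $v_{\gamma_{\alpha_k}}$ to $u+1$ in the leading terms, bounding differences such as $v_{\gamma_{\alpha_k}}^{-k-1}-(u+1)^{-k-1}$ and $\tfrac{\log v_{\gamma_{\alpha_k}}}{v_{\gamma_{\alpha_k}}^{k+1}}-\tfrac{\log(u+1)}{(u+1)^{k+1}}$ by $\lesssim\log(u+1)\,(u+1)^{-k-2+\alpha_k}$, which is again subleading. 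Collecting everything yields \eqref{eq:prop85,1}.

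The two addenda are then routine. For the region $\mathcal B_{\delta_k}$, the leading difference is bounded below by $\sim(u+1)^{-k-2+\delta_k}$, so that $1>\delta_k>\tfrac{\alpha_k+1}{2}+2\epsilon$ and $\tfrac{1-\alpha_k}{2}<\beta+2\epsilon$ push both error exponents below $-k-2+\delta_k$, exactly as in Proposition~\ref{prop8.2}. For the $\Iplus$ asymptotics I would let $v\to\infty$ in \eqref{eq:prop85,1}, whereupon the $v^{-k-1}$-terms drop out; moreover $\epsilon<\tfrac16(1-\alpha_k)$ forces $\alpha_k<1-6\epsilon$, so the bulk error $(u+1)^{-3/2-k+\alpha_k/2+2\epsilon}$ is bounded by the $\alpha_k$-independent $(u+1)^{-1-k-\epsilon}$, giving the stated limit along $\Iplus$. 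I expect the main obstacle to be bookkeeping rather than conceptual: tracking the constants $c_k$ and the logarithmic factors correctly through the single $v'$-integration and through the $v_{\gamma_{\alpha_k}}\to u+1$ conversion, and checking that the sum term of Proposition~\ref{prop8.4} (which is absent when $k=0$) stays subleading, all uniformly in the admissible ranges of $\alpha_k$, $\epsilon$ and $\beta$.
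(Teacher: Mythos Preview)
Your proposal is correct and follows essentially the same route as the paper's proof: the fundamental theorem of calculus from $\gamma_{\alpha_k}$, the boundary term handled via \eqref{eq:prop52sqrtrhophi}, the bulk integral estimated through Proposition~\ref{prop8.4} with the leading term integrated using \eqref{eq:Ck}, and the endpoint conversion $v_{\gamma_{\alpha_k}}\to u+1$ all appear in the same order with the same exponent checks. Your observation that the $(k+3)\alpha_k$-term saturates the condition $\alpha_k\geq\frac{2k+5}{2k+7}$ while the sum term of Proposition~\ref{prop8.4} is strictly better, and your treatment of the $\mathcal B_{\delta_k}$ and $\mathcal I^+$ addenda, match the paper exactly.
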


\begin{proof}
The proof proceeds similarly to the proof of Proposition~\ref{prop8.2}. We already proved the case $k=0$ and can therefore restrict this proof to $k\geq 1$. We again apply the fundamental theorem of calculus in the $v$-direction, integrating from $\gamma_{\alpha_k}$:
\begin{equation}	T^k(r\phi)(u,v)=T^k(r\phi)(u,v_{\gamma_{\alpha_k}}(u))+\int_{v_{\gamma_{\alpha_k}}(u)}^v\pv T^k(r\phi)(u,v')\dd v.	\label{eq:funofcalc2}
\end{equation}

We use the result of Proposition~\ref{prop8.4} to estimate the integral term:
\begin{align*}
&\left|\int_{v_{\gamma_{\alpha_k}}(u)}^v\pv T^k(r\phi)(u,v')\dd v'		-2\left.\left(\pv^{k-1}\left(	\frac{\ill[\phi]\log v}{v^2}+\frac{\ifake}{v^2}	\right)\right)\right|^{v}_{v_{\gamma_{\alpha_k}}(u)}	\right|\\
\leq& C\left(\sqrt{\Elogk[\phi]}+\ill[\phi]+\ifake\right)(v_{\gamma_{\alpha_k}}^{1-(k+3)\alpha_k+2\epsilon}-v^{1-(k+3)\alpha_k+2\epsilon})\\
+&C\Plogk  (v_{\gamma_{\alpha_k}}^{-1-k-\beta}-v^{-1-k-\beta})\\
+&C\left(\sqrt{\Elogk[\phi]}+\ill[\phi]+\ifake\right)\sum_{l=0}^{k-1}\int_{v_{\gamma_{\alpha_k}}}^v r^{-3-l}(u+1)^{-k+l-\epsilon}\dd v'.
\end{align*}
We deal with the terms arising from the LHS by appealing to \eqref{eq:Ck} and using
\[|v_{\gamma_{\alpha_k}}(u)^{-k-1}-(u+1)^{-k-1}|\leq C(u+1)^{-2-k+\alpha_k}\]
as well as
\[|v_{\gamma_{\alpha_k}}(u)^{-k-1}\log( v)-(u+1)^{-k-1}\log(u+1)|\leq C(u+1)^{-2-k+\alpha_k}\log(u+1).\]
Estimating the first and the second term on the RHS is done as in the proof of Proposition~\ref{prop8.2}; we are hence left with the third term: Recalling \eqref{comp:rgreateru}, we find
\[		\sum_{l=0}^{k-1}\int_{v_{\gamma_{\alpha_k}}}^v r^{-3-l}(u+1)^{-k+l+\epsilon}\dd v'\leq  \sum_{l=0}^{k-1} \frac{C}{(u+1)^{k-l-\epsilon+(2+l)\alpha_k}}\leq C (u+1)^{-1+\epsilon-(k+1)\alpha_k}	.\]

On the other hand, to estimate the boundary term in \eqref{eq:funofcalc2}, we appeal to \eqref{eq:prop52sqrtrhophi}. This shows that the boundary term is bounded by (cf.\ \eqref{eq:boundaryterm})
\[|T^k(r\phi)(u,v_{\gamma_{\alpha_k}})|\leq C\sqrt{\Elogk[\psi]}(u+1)^{-\frac32-k+\frac{\alpha_k}{2}+\epsilon}.\]

The first statement of the proposition, \eqref{eq:prop85,1}, now follows since, in view of $\alpha_k\geq\frac{2k+5}{2k+7}$ and $\epsilon\in(0,\frac12(k+3)\alpha-\frac12(k+2))$, all the relevant exponents of $(u+1)$ appearing above are dominated by $-\frac{3}{2}-k+\frac{\alpha_k}{2}+2\epsilon$.

On the other hand, to see that \eqref{eq:prop85,1} provides asymptotics for $r\phi$ in the region $\mathcal B_{\delta_k}$ as specified, one proceeds as in the proof of Proposition~\ref{prop8.2}. Compare with the proof of Proposition~8.5 in~\cite{Angelopoulos2018Late-timeSpacetimes}.
\end{proof}

\subsection{Global asymptotics for the scalar field \texorpdfstring{$\phi$}{phi}}
In this section, we propagate the asymptotics obtained for $r\phi$ in $\mathcal{B}_\alpha$ into all of $\mathcal{R}$ and, in particular, into the region where $r\leq R$. 
In the region where $r$ is large, this requires another splitting into different spacetime regions. 
On the other hand, in the region where $r$ is small, we exploit that $\partial_\rho\phi$ exhibits good decay properties.
\newcommand{\Elogtilde}{\widetilde{E}^\epsilon_{0,\ill\neq0;0}}
\newcommand{\Elogtildeone}{\widetilde{E}^\epsilon_{0,\ill\neq0;1}}
\newcommand{\Elogtildek}{\widetilde{E}^\epsilon_{0,\ill\neq0;k}}
\newcommand{\Elogtildekplusone}{\widetilde{E}^\epsilon_{0,\ill\neq0;k+1}}
\begin{prop}\label{prop8.6}
Let $\epsilon\in(0,\min(\frac{1}{98},\beta))$, and assume that $\phi$ satisfies
\[\Plog[\phi]<\infty\]
as well as
\[\Elogtildeone[\phi]<\infty. \]

Then we have for all $(u,v)\in \mathcal{R}\cap\{r\geq R\}$:
\begin{align}
\begin{split}\label{eq:prop8.6,1}
	\left|	\phi(u,v)-\frac{4\ill[\phi]}{v-u-1}\left(	\frac{\log(u+1)}{u+1}-\frac{\log v}{v}	\right)	-4\frac{\ill[\phi]+\ifake}{(u+1)v}\right|&\\
	\leq C\left(	\sqrt{\Elogtildeone[\phi]}	+\ill[\phi]+\ifake+\Plog[\phi]	\right)(u+1)^{-1-\epsilon}v^{-1},
\end{split}
\end{align}
where $C=C(R,\epsilon)>0$ is a constant. On the other hand, we have, for another constant $C=C(R,\epsilon)>0$, in all of $\mathcal{R}\cap\{r\leq R\}$:
\begin{align}
\begin{split}
	\left| 	\phi(\tau,\rho)-4\ill[\phi]\frac{\log(\tau+1)}{(\tau+1)^2}-4\ifake\frac{1}{(\tau+1)^2}	\right|\\
	\leq C\left(	\sqrt{\Elogtildeone[\phi]}	+\ill[\phi]+\ifake+\Plog[\phi]	\right)(\tau+1)^{-2-{\epsilon}}.
\end{split}
\end{align}
\end{prop}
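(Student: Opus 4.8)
The plan is to propagate the asymptotics already established in $\mathcal B_\alpha$ (Proposition~\ref{prop8.2}) into the remaining regions of $\mathcal R\cap\{r\ge R\}$ and then into $\mathcal R\cap\{r\le R\}$, treating the two regimes by different mechanisms. The key observation is that Proposition~\ref{prop8.2}, for a suitable $\delta$ with $1>\delta>\frac{\alpha+1}{2}+2\epsilon$, already yields the asymptotics of $r\phi$ in $\mathcal B_\delta$; hence along the curve $\gamma_\del$ (where $v-u=v^\delta$) I have good control on $r\phi$, and $\gamma_\delta$ serves as the inner boundary data for the complementary region $\{r\ge R\}\setminus\mathcal B_\delta$. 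In that complementary region one has $v-u\lesssim v^\delta$, so $u\sim v$ and $r\sim v-u$ is comparatively small relative to $v$; the idea is to integrate $\pv(r\phi)$ (controlled by Proposition~\ref{prop8.1}) from $\gamma_\delta$ and convert the resulting estimate on $r\phi$ into an estimate on $\phi=r\phi/r$ using $r\sim v-u-1$ via \eqref{comp:rlikev-u-1}. Dividing by $r\sim v-u-1$ is exactly what produces the factor $\tfrac{1}{v-u-1}$ multiplying the bracket in \eqref{eq:prop8.6,1}.

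First I would fix the admissible parameter range (so that Propositions~\ref{prop8.1} and~\ref{prop8.2} apply with $\epsilon<\min(\tfrac{1}{98},\beta)$; the numerology $\tfrac{1}{98}$ is presumably what makes all the competing exponents close), and record the asymptotics of $r\phi$ on $\gamma_\delta$ from Proposition~\ref{prop8.2}. Next, in $\{r\ge R\}\setminus\mathcal B_\delta$, I would write
\[
r\phi(u,v)=r\phi(u,v_{\gamma_\delta}(u))+\int_{v_{\gamma_\delta}(u)}^v\pv(r\phi)(u,v')\dd v'
\]
and insert the integrated form of Proposition~\ref{prop8.1}; carrying out the elementary $v'$-integrals of $v'^{-2}\log v'$ and $v'^{-2}$ reproduces the $\tfrac{\log(u+1)}{u+1}-\tfrac{\log v}{v}$ and $\tfrac1{u+1}-\tfrac1v$ structure, with the boundary term from $\gamma_\delta$ absorbed into the error since $u\sim v$ there. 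Dividing through by $r$ and replacing $\tfrac1r$ by $\tfrac{1}{v-u-1}$ up to the logarithmic error in \eqref{comp:rlikev-u-1} then yields \eqref{eq:prop8.6,1}. The error terms from the $\log r$ discrepancy must be checked to be no worse than $(u+1)^{-1-\epsilon}v^{-1}$, which is where the smallness of $\epsilon$ is used.

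For the region $\mathcal R\cap\{r\le R\}$ I would switch to the $(\tau,\rho)$-variables and use that $\tau\sim u\sim v$ there. The strategy is to integrate $Y\phi=\partial_\rho\phi$ inward from $r=R$ along $\Sigma_\tau$, using the good decay of $Y(r\phi)$ and $N(r\phi)$ supplied by Proposition~\ref{cor7.6} (this is precisely why the norm $\Elogtildeone[\phi]$, rather than $\Elog[\phi]$, enters the hypotheses). The boundary value at $r=R$ is furnished by the first part of the proposition, evaluated on $\{r=R\}$ where $v-u-1\sim 1$, giving $\phi|_{r=R}\sim \ill[\phi]\tfrac{\log(\tau+1)}{(\tau+1)^2}+\ifake(\tau+1)^{-2}$ up to the stated error; since $\rho$ ranges over the bounded interval $[2M,R]$, integrating the $O((1+\tau)^{-5/2+\epsilon})$-type bound from Proposition~\ref{cor7.6} costs only a bounded factor and leaves the leading $(\tau+1)^{-2}\log$ behaviour intact, with a clean $(\tau+1)^{-2-\epsilon}$ remainder. \textbf{The main obstacle} I anticipate is the bookkeeping of error exponents in the large-$r$ region: one must verify that replacing $r$ by $v-u-1$ (an $O(\log r)$-relative error by \eqref{comp:rlikev-u-1}), together with the boundary contribution on $\gamma_\delta$ and the $\mathcal B_\alpha$-error $\sqrt{\Elog[\phi]}\,v^{-(3\alpha-2-2\epsilon)}$ after division by $r$, all collapse uniformly into the single clean bound $(u+1)^{-1-\epsilon}v^{-1}$ throughout $\{r\ge R\}\setminus\mathcal B_\delta$ as well as inside $\mathcal B_\delta$ — this is essentially an exercise in choosing $\delta,\alpha,\epsilon$ compatibly and is the reason for the delicate constraint $\epsilon<\tfrac1{98}$.
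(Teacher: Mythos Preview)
Your approach in the region $\{r\le R\}$ is exactly the paper's: evaluate the first estimate on $r=R$, convert to $\tau$-decay, and integrate $\partial_\rho\phi$ inward using Proposition~\ref{cor7.6}. Likewise, inside $\mathcal B_\alpha$ the paper also divides the $r\phi$-asymptotics of Proposition~\ref{prop8.2} by $r$ and replaces $\tfrac1r$ by $\tfrac{1}{v-u-1}$ via \eqref{comp:rlikev-u-1}; you should be aware, though, that this division does \emph{not} produce a uniform $(u+1)^{-1-\epsilon}v^{-1}$ error on all of $\mathcal B_{\alpha}$ --- the paper has to restrict to $\mathcal B_{\alpha+6\epsilon}$ and further split into $\{v-u-1>\tfrac{v}{2}\}$ and $\{v-u-1\le\tfrac{v}{2}\}$, treating the two pieces by different balancing of $u$- and $v$-weights. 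You flag this bookkeeping as ``the main obstacle'' but do not identify the splitting, which is the actual mechanism.

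There is, however, a genuine gap in your treatment of $\{r\ge R\}\setminus\mathcal B_\delta$. You propose to integrate $\pv(r\phi)$ from $\gamma_\delta$ and ``insert the integrated form of Proposition~\ref{prop8.1}''. But Proposition~\ref{prop8.1} is only valid in $\mathcal B_\alpha$: its proof integrates the wave equation in $u$ and uses $r\gtrsim v^\alpha$ and $r\gtrsim(u+1)^\alpha$ along the path, both of which fail once you leave $\mathcal B_\alpha$ (there $r$ can be as small as $R$ while $u\sim v$ is large). Since $\{r\ge R\}\setminus\mathcal B_\delta\supset\{r\ge R\}\setminus\mathcal B_\alpha$, a portion of your integration path lies outside the domain where Proposition~\ref{prop8.1} applies, and you have no control on $\pv(r\phi)$ there. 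The paper does \emph{not} integrate $\pv(r\phi)$ in this region; instead it integrates $\pv\phi$ directly and invokes the almost-sharp decay $r^{1/2}|\pv\phi|\lesssim\sqrt{\Elogtildeone[\phi]}(1+u)^{-5/2+\epsilon'}$ from Proposition~\ref{cor7.6}, obtaining $|\phi(u,v)-\phi(u,v_{\gamma_{\alpha+6\epsilon}})|\lesssim v^{-1}(1+u)^{-1-\epsilon/4}$. This is why the norm $\Elogtildeone$ enters already in the large-$r$ estimate, not only in the $\{r\le R\}$ step as you suggest.
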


\begin{proof}
Let us first look at the case $(u,v)\in\mathcal{B}_\alpha$, with 
\begin{equation}\label{eq:alpha}
\frac{5}{7}<\alpha<1-6\epsilon
\end{equation}
for some $\epsilon\in(0,\min(\frac{1}{21},\beta))$ (which is four times the $\epsilon$ in the proposition).
 We essentially want to divide the estimate \eqref{eq:prop8.2,1} (which is not an asymptotic estimate in all of $\mathcal B_{\alpha}$!) by $r$. Recalling \eqref{comp:rlikev-u-1}, we have
\[\left|\frac{1}{2r}-\frac{1}{v-u-1}\right|\leq C\frac{\log v}{(v-u-1)^2}\]
and, using also \eqref{comp:vgammalikeu},
\[\left|\frac1r\left(\frac{1}{u+1}-\frac1v\right)-\frac{2}{(u+1)v}\right|\leq C\frac{\log v}{r(u+1)v} \leq \frac{C}{(v-u-1)(u+1)^{3/2}}.\]
Dividing now \eqref{eq:prop8.2,1} by $r$ and making use of the two estimates above and also the fact that $\beta>\epsilon$, we obtain that
\begingroup
\allowdisplaybreaks
\begin{align}\label{eq:Prop8.6proof,1}
\left|	\phi(u,v)-\frac{4\ill[\phi]}{v-u-1}\left(	\frac{\log(u+1)}{u+1}-\frac{\log v}{v}	\right)	-4\frac{\ill[\phi]+\ifake}{(u+1)v}\right|	\nonumber\\
\leq C\left(	\sqrt{\Elog[\phi]}+\ill[\phi]+\ifake+\Plog[\phi] 	\right)\frac{(u+1)^{\frac{\alpha-3}{2}+2\epsilon}}{v-u-1}\\
+C\ill[\phi]\frac{\log v}{(v-u-1)^2}\left(\frac{\log(u+1)}{u+1}-\frac{\log v}{v}\right).\nonumber
\end{align}\endgroup
In order to estimate the RHS, we need to restrict the region under consideration, namely $\mathcal{B}_\alpha$, to a smaller one, namely $\mathcal{B}_{\alpha+6\epsilon}$, and, moreover, partition this smaller region $\mathcal{B}_{\alpha+6\epsilon}$ into a region where $v$ is large and one where $v\sim u+1$:

\paragraph{Asymptotics in $\mathcal{B}_{\alpha+6\epsilon}\cap\{v-u-1>\frac{v}{2}\}$:}
In the region $\mathcal{B}_{\alpha+6\epsilon}\cap\{v-u-1>\frac{v}{2}\}$, we estimate the first term on the RHS of \eqref{eq:Prop8.6proof,1} as follows:
\[\frac{(u+1)^{\frac{\alpha-3}{2}+2\epsilon}}{v-u-1}\leq 2\frac{(u+1)^{\frac{\alpha-3}{2}+2\epsilon}}{v}<2v^{-1}(1+u)^{-1-\epsilon}.\]
Here, we used \eqref{eq:alpha} in the last estimate.
Similarly, we estimate the second term of the RHS in \eqref{eq:Prop8.6proof,1} via
\[\frac{\log v}{(v-u-1)^2}\left(\frac{\log(u+1)}{u+1}-\frac{\log v}{v}\right)\leq C \frac{\log v}{v^2}\frac{\log(u+1)}{u+1}\leq Cv^{-1}(1+u)^{-1-\epsilon},\]
where we converted some $v$-decay into $u$-decay in the last estimate.
This proves \eqref{eq:prop8.6,1} in the region $\mathcal{B}_{\alpha+6\epsilon}\cap\{v-u-1>\frac{v}{2}\}$.
\paragraph{Asymptotics in $\mathcal{B}_{\alpha+6\epsilon}\cap\{v-u-1\leq\frac{v}{2}\}$:}
In the region $\mathcal{B}_{\alpha+6\epsilon}\cap\{v-u-1\leq\frac{v}{2}\}$, we have, in particular, that $v\sim u+1$. Thus, using also that $v-u-1\gtrsim v^{\alpha+6\epsilon}$ by  definition of $\mathcal{B}_{\alpha+6\epsilon}$, we can estimate the first term on the RHS of \eqref{eq:Prop8.6proof,1} according to
\begin{align*}
\frac{(u+1)^{\frac{\alpha-3}{2}+2\epsilon}}{v-u-1}\leq C v^{-1}v^{1-\alpha-6\epsilon}(u+1)^{\frac{\alpha-3}{2}+2\epsilon}\leq C v^{-1}(u+1)^{-\frac{1+\alpha}{2}-4\epsilon}.
\end{align*}
If we, in addition to \eqref{eq:alpha}, also require that 
$
1-\alpha\leq 7\epsilon,
$
then we in fact have\footnote{Note that this calculation would not have worked in the region $\mathcal{B}_\alpha$, hence the restriction to $\mathcal{B}_{\alpha+6\epsilon}$.} 
\[\frac{(u+1)^{\frac{\alpha-3}{2}+2\epsilon}}{v-u-1}\leq Cv^{-1}(u+1)^{-1-\frac{\epsilon}{2}}.\]
As for the second term on the RHS of \eqref{eq:Prop8.6proof,1}, we simply write
\[\frac{\log v}{(v-u-1)^2}\left(\frac{\log(u+1)}{u+1}-\frac{\log v}{v}\right)	\leq 	C \frac{1}{v^{2\alpha+12\epsilon}}\frac{\log^2(u+1)}{u+1}\leq Cv^{-1}(1+u)^{-1-\epsilon}. \]
This proves \eqref{eq:prop8.6,1} in the region $\mathcal{B}_{\alpha+6\epsilon}\cap\{v-u-1\leq\frac{v}{2}\}$.
%
\paragraph{Asymptotics in $\mathcal{R}\cap\{r\geq R\}\setminus\mathcal{B}_{\alpha+6\epsilon}$:}
We will use the fundamental theorem of calculus, integrating \textit{inwards} along $N_\tau:=\{r\geq R\}\cap\{u=\tau\}$ from $\gamma_{\alpha+6\epsilon}$, recalling the almost-sharp decay estimate
\begin{align*}
r^{\frac12}|\pv \phi|\leq C \sqrt{\widetilde{E}^{\epsilon'}_{0,\ill\neq 0;1}}(1+u)^{-\frac{5}{2}+\epsilon'}
\end{align*}
for $\epsilon'=\epsilon/4$, which follows directly from \eqref{eq:cor76}. Fixing moreover now $$\alpha=1-7\epsilon,$$ we can then follow the exact same steps of the proof of Proposition~8.6, pp.~59-60 in~\cite{Angelopoulos2018Late-timeSpacetimes} to show that
\[|\phi(u,v)-\phi(u,v_{\gamma_{\alpha+6\epsilon}})|\leq C\sqrt{\widetilde{E}^{\epsilon'}_{0,\ill\neq 0;1}} v^{-1}(1+u)^{-1-\frac{\epsilon}{4}}. \]
Plugging in the asymptotics for $\phi(u,v_{\gamma_{\alpha+6\epsilon}})$, which we have obtained already, and also noting that $\Elog[\phi]\leq \widetilde{E}^{\epsilon'}_{0,\ill\neq 0;1}[\phi] $ for $\epsilon>\epsilon'$, we conclude the proof of \eqref{eq:prop8.6,1} (notice that the $\epsilon$ in the proposition corresponds to $\epsilon'$ in the proof).

\paragraph{Asymptotics in $\mathcal{R}\cap\{r\leq R\}$:}
We finally extend the asymptotics into the region where $r\leq R$. 
We first need to convert the $u$- and $v$-decay from \eqref{eq:prop8.6,1} into $\tau$-decay on $r=R$. By definition, we have on $r=R$ that $v-u=R$ and $\tau=u$. Therefore, we have on $r=R$:
\begin{align*}
\frac{1}{v-u-1}\left( \frac{\log(u+1)}{u+1}-\frac{\log v}{v}\right)&=	\log(u+1)\frac{1}{v(u+1)}	-\frac{\log(1+\frac{R-1}{u+1})}{v(v-u-1)}\\
&=\frac{\log(\tau+1)}{(\tau+1)^2}-\frac{1}{(\tau+1)^2} +\mathcal{O}\left(\frac{\log(1+\tau)}{(1+\tau)^3}\right),
\end{align*}
where we used a standard estimate for $\log(1+x)$ in the last line.
By the asymptotic estimate \eqref{eq:prop8.6,1}, we thus have
\begin{nalign}\label{ingredient1}
\left|\phi|_{\rho=R}(\tau)-4\ill[\phi]\frac{\log(\tau+1)}{(\tau+1)^2}-4\left((1-1)\ill[\phi]+\ifake\right)\frac{1}{(\tau+1)^2}	\right|\\
	\leq C\left(	\sqrt{\Elogtildeone[\phi]}	+\ill[\phi]+\ifake+\Plog[\phi]	\right)(\tau+1)^{-2-{\epsilon}}.
\end{nalign}
Finally, we have, by Proposition~\ref{cor7.6}, that 
\[\rho^{\frac12}|\partial_\rho \phi|\leq C\sqrt{\widetilde{E}^{\epsilon}_{0,\ill\neq0;1}}(1+\tau)^{-\frac52+\epsilon}.\]
Therefore, integrating along $\Sigma_\tau\cap\{r\leq R\}$ (recall that we set $r_{\mathcal{I}}=R$), we find
\begin{nalign}\label{ingredient2}
|\phi(\tau,\rho)-\phi(\tau,R)|&=\left|\int_\rho^R \partial_\rho \phi(\tau,\rho')\dd \rho'\right|\\
			\leq \int_{r}^R \rho^{-\frac12}\rho^{\frac12}|\partial_\rho\phi(\tau,\rho')\dd \rho'|
			&\leq C\sqrt{\widetilde{E}^{\epsilon}_{0,\ill\neq0;1}}(1+\tau)^{-\frac52+\epsilon}.
\end{nalign}
Combining \eqref{ingredient1} and \eqref{ingredient2} completes the proof of the proposition.
\end{proof}

\subsection{Global asymptotics for  \texorpdfstring{$T^k \phi$}{Tk phi}}
In order to apply our results to time derivatives of time integrals, we once again need to commute the global asymptotics of Proposition~\ref{prop8.6} with $T$.
\begin{prop}\label{prop8.7}
Let $k\in\mathbb{N}_0$. There exists an $\epsilon>0$ suitably small such that, under the assumptions
$\Elogtildekplusone[\phi]<\infty$
and
$\Plogk[\phi]<\infty$
for some $\beta>\epsilon$, we have that, for all $(u,v)\in \mathcal{R}\cap\{r\geq R\}$:
	\begin{multline}\left| T^k \phi(u,v)-4(-1)^k k!\left(	\frac{\ill[\phi]}{v-u-1}\left( \frac{\log(u+1)}{(u+1)^{k+1}}-\frac{\log v}{v^{k+1}}\right)	\right.\right.\ \ \ \\
	+\left.\left.\left( c_k\ill[\phi]+\ifake\right)\frac{1}{(u+1)^{k+1}v}\left(	1+\sum_{j=1}^k\left(\frac{u+1}{v}\right)^j	\right) 		\right)\right|\\
	\leq C\left(	\sqrt{\Elogtildekplusone[\phi]}	+\ill[\phi]+\ifake+\Plogk[\phi]	\right)(u+1)^{-k-1-\epsilon}v^{-1},
\end{multline}
where $C=C(R,k,\epsilon)>0$ and $c_k=c_k(k)$ are the constants defined in \eqref{eq:Ck}. In particular, $c_1=0$.

On the other hand, we have throughout $\mathcal{R}\cap\{r\leq R\}$ the estimate
\begin{align}
\begin{split}
	\left| 	T^k \phi(\tau,\rho)-4(-1)^k k!\left({\ill[\phi]}\frac{\log(\tau+1)}{(\tau+1)^{2+k}}+((c_k-1)\ill[\phi]+\ifake)\frac{k+1}{(\tau+1)^{2+k}}\right)	\right|\\
	\leq C\left(	\sqrt{\Elogtildekplusone[\phi]}	+\ill[\phi]+\ifake+\Plogk[\phi]	\right)(\tau+1)^{-2-k-\epsilon},
\end{split}
\end{align}
where $C=C(R,k,\epsilon)>0$ is a constant.
\end{prop}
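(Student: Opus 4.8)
The plan is to mirror the three-region argument of Proposition~\ref{prop8.6} (which is exactly the case $k=0$), feeding in the asymptotics of $T^k(r\phi)$ from Proposition~\ref{prop8.5} in place of those of $r\phi$ from Proposition~\ref{prop8.2}. The key simplification is that $T$ is the stationary Killing field with $T(r)=0$, so $T$ commutes with multiplication by $r$ and hence $T^k\phi=r^{-1}T^k(r\phi)$ pointwise; consequently the proof of Proposition~\ref{prop8.6} transfers almost verbatim, the only genuinely new feature being the more intricate algebraic form of the leading-order coefficients. I would first fix $\epsilon>0$ small and an exponent $\alpha_k=1-7\epsilon$ close to $1$, work in $\mathcal{B}_{\alpha_k}$ restricted to $\mathcal{B}_{\alpha_k+6\epsilon}$, and then propagate outward and inward as below.

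In the region $\{r\ge R\}$ I would divide the estimate \eqref{eq:prop85,1} for $T^k(r\phi)$ by $r$, using $\frac{1}{2r}=\frac{1}{v-u-1}+\mathcal{O}\!\left(\frac{\log v}{(v-u-1)^2}\right)$ from \eqref{comp:rlikev-u-1}. Splitting the constant $c_k$ off the logarithmic difference in \eqref{eq:prop85,1} and then invoking the exact algebraic identity
\begin{equation*}
\frac{1}{v-u-1}\left(\frac{1}{(u+1)^{k+1}}-\frac{1}{v^{k+1}}\right)=\frac{1}{(u+1)^{k+1}v}\sum_{j=0}^{k}\left(\frac{u+1}{v}\right)^{j}
\end{equation*}
produces precisely the geometric-sum structure of the stated leading term, with the overall factor $2$ doubling to $4$ through $\frac1r\sim\frac{2}{v-u-1}$. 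The $\mathcal{O}(\log v/(v-u-1)^2)$ correction is then controlled exactly as in Proposition~\ref{prop8.6}, by partitioning $\mathcal{B}_{\alpha_k+6\epsilon}$ into $\{v-u-1>v/2\}$ and $\{v-u-1\le v/2\}$ and converting surplus $v$-decay into $u$-decay in each. To reach the rest of $\mathcal{R}\cap\{r\ge R\}$ I would integrate inward along $\{u=\mathrm{const}\}$ from $\gamma_{\alpha_k+6\epsilon}$, using the almost-sharp decay of $\pv(T^k\phi)$ obtained by commuting Corollary~\ref{cor7.6} with $T$, which bounds the difference $T^k\phi(u,v)-T^k\phi(u,v_{\gamma_{\alpha_k+6\epsilon}})$ by the asserted error.

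To extend into $\{r\le R\}$ I would first restrict the $\{r\ge R\}$ asymptotics to the curve $r=R$, where $v-u$ is constant and $\tau=u$, and Taylor-expand $\log v/v^{k+1}$, $1/v^{k+1}$ and the geometric sum about $v=u+1$. On $r=R$ the ratios $\frac{u+1}{v}$ tend to $1$, so the geometric sum collapses to $k+1$; collecting the logarithmic and non-logarithmic contributions and using the constants $c_k$ from \eqref{eq:Ck} then yields the $(c_k-1)$-combination and the factor $k+1$ in the stated $\{r\le R\}$ formula, with all Taylor remainders of size $(\tau+1)^{-2-k-\epsilon}$. Finally, I would integrate $\partial_\rho(T^k\phi)$ along $\Sigma_\tau\cap\{r\le R\}$ from $r=R$ inward, using the $T$-commuted Corollary~\ref{cor7.6} bound, whose decay is faster than the leading term by (almost) half a power and hence absorbable into the error.

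The main obstacle is the algebraic bookkeeping of this last, $r=R$-conversion step: one must expand each term of the $\{r\ge R\}$ leading expression to the correct order in $(\tau+1)^{-1}$, verify that the constants from \eqref{eq:Ck} recombine exactly into the stated coefficients (in particular that the logarithmic and $1/(u+1)^{k+1}$ pieces of the first summand conspire with the collapsed geometric sum of the second), and confirm that no spurious term of size larger than $(\tau+1)^{-2-k-\epsilon}$ survives. Commuting Propositions~\ref{prop5.2} and~\ref{cor7.6} with $T$ while tracking the correct loss of $\tau$-powers per commutation is routine but must be done consistently throughout.
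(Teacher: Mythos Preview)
Your proposal is correct and follows essentially the same route as the paper: the paper's own proof simply says that one repeats the three-region argument of Proposition~\ref{prop8.6}, now feeding in the $T^k$-commuted asymptotics (the paper cites Proposition~\ref{prop8.4}, but your use of Proposition~\ref{prop8.5} is the natural reading), together with the algebraic identity rewriting $\tfrac{1}{(u+1)^{k+1}}-\tfrac{1}{v^{k+1}}$ as a geometric sum in $(u+1)/v$. Your formulation of that identity is in fact cleaner than the paper's, which contains a typographical slip in the placement of the factor $v-u-1$.
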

\begin{proof}
The proof follows the same structure as the proof of Proposition~\ref{prop8.6}, now based on Proposition~\ref{prop8.4} instead of Proposition~\ref{prop8.2}. The only additional ingredient required is the identity
\begin{equation}
\frac{1}{(u+1)^{k+1}}-\frac{1}{v^{k+1}}=\frac{1}{(u+1)^{k+1}v^{k+1}(v-u-1)}\left(1+\sum_{j=0}^k (u+1)^jv^{k-j}\right).
\end{equation}
We refer the reader to the proof of Proposition~8.7 in~\cite{Angelopoulos2018Late-timeSpacetimes} for more  details.
\end{proof}

\section{Time inversion for \texorpdfstring{$\ill[\phi]=0$}{Ilog[phi]=0} and \texorpdfstring{$\ilog[\phi]\neq 0$}{Ilog2[phi]!=0}}\label{sec:timeinversion}
In the previous section, we have derived the precise late-time asymptotics for solutions with $\ill[\phi]\neq 0$. We now want to consider solutions with $\ill[\phi]=0$ and $\ilog[\phi]\neq 0$. As explained in the introduction, we can reduce to the case $\ill[\phi]\neq 0$ by considering the time integral $\phi^{(1)}$ of $\phi$. The purpose of this section is thus to extract the conditions needed on $\phi$ so that we can apply the results of the previous section to its time integral $\phi^{(1)}$.
\subsection{Construction of the time integral \texorpdfstring{$\phi^{(1)}$}{phi1}}
The approach of~\cite{Angelopoulos2018Late-timeSpacetimes} does not allow one to directly construct the time integral for data with $\ilog[\phi]\neq0$. Therefore, we follow the more elegant approach of~\cite{Angelopoulos:2018uwb}:
\begin{defi}
Let $\phi$ be a smooth, spherically symmetric solution to \eqref{waveequation} in the sense of Proposition~\ref{prop:Cauchy}, with $\ill[\phi]$ a well-defined limit.
We then define the \textit{time integral} $\phi^{(1)}$ of $\phi$ to be the unique spherically symmetric function $\phi^{(1)}:J^+(\Sigma_0)\to \mathbb{R}$ s.t.
\begin{align*}
T\phi^{(1)}=\phi, && \Box_g \phi^{(1)}=0,\\
\lim_{v\to\infty}\phi^{(1)}(u_0,v)=0,&& \lim_{u\to\infty}\underline{L}\phi^{(1)}(u,v_0)=0.
\end{align*}
\end{defi}
\begin{prop}
If $\phi$ is as in the definition above, then its time integral $\phi^{(1)}$ satisfies along $N_0^{\mathcal{I}}$ the following relation:
\begin{equation}
2r^2L\phi^{(1)}(u_0,v)=C'_0[\phi]+2\int_{N_0^\mathcal{I}\cap\{v'\leq v\}} rL(r\phi)(u_0,v')\dd v',
\end{equation}
where the constant $C'_0[\phi]$ is given by (writing $\Sigma_0\cap\{r_\mathcal{H}\leq r \leq r_\mathcal{I}\}=\Sigma_0^*$)
\begin{equation}
4\pi C'_0[\phi]:=2\int_{N_0^\mathcal{H}}r\underline{L}(r\phi)\dd u'+2\int_{\Sigma^*_0}n_{\Sigma_0}(\phi)\dd \mu_{\Sigma_0}+4\pi (r\phi|_{\Sigma_0\cap\{r=r_\mathcal{H}\}}+r\phi|_{\Sigma_0\cap\{r=r_\mathcal{I}\}}).
\end{equation}

Let us moreover assume that $\ilog[\phi]=\lim_{r\to\infty}\frac{r^3}{\log r}\partial_r(r\phi)<\infty$, and define 
$
4\pi C_0[\phi]:=4\pi C'_0[\phi]+2\int_{N_0^{\mathcal{I}}}rL(r\phi)(u,v')\, {\dd\omega}\dd v'
$.

Then we obtain the following additional relations along $N_0^\mathcal{I}$:
\begin{align}
	\begin{split}
\phi^{(1)}|_{N_0^\mathcal{I}}(r)&=-C_0\int_r^\infty \frac{1}{Dr'^2}\dd r'+2\int_r^\infty \frac{1}{Dr'^2}\int_{r'}^\infty r''\partial_r(r\phi)(u_0,r'')\dd r''\dd r',
	\end{split}
\end{align}\begin{align}
\label{eq:pr timeintegral}
	\begin{split}
	\partial_r(r\phi^{(1)})|_{N_0^\mathcal{I}}(r)&=C_0\left(\frac{1}{Dr}-\int_r^\infty \frac{1}{Dr'^2}\dd r'\right)\\
	+2\int_r^\infty& \frac{1}{Dr'^2}\int_{r'}^\infty r''\partial_r(r\phi)(u_0,r'')\dd r''\dd r'-\frac{2}{Dr}\int_{r}^\infty r'\partial_r(r\phi)(u_0,r')\dd r',
	\end{split}
\end{align}\begin{align}
\label{eq:pr2 timeintegral}
	\begin{split}
	\partial_r^2(r\phi^{(1)})|_{N_0^\mathcal{I}}(r)&=-C_0\frac{D'}{D^2r}
	+2\frac{D'}{D^2 r}\int_{r}^\infty r'\partial_r(r\phi)(u_0,r')\dd r'+\frac{2}{D}\partial_r (r\phi_0)|_{N_0^\mathcal{I}}(r).
	\end{split}
\end{align}

\end{prop}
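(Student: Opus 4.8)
The overall strategy is to construct $\phi^{(1)}$ by specifying its data on $\Sigma_0$ (in particular on the null pieces $N_0^{\mathcal H}$ and $N_0^{\mathcal I}$ and the spacelike part $\Sigma_0^*$) and then to integrate the wave equation along $N_0^{\mathcal I}$ to obtain the explicit formulae. The starting point is the $\ell=0$ wave equation in the form \eqref{waveequationradiationfield}, $\pu\pv(r\phi)=-\frac{DD'}{4r}r\phi$. First I would rewrite the defining condition $T\phi^{(1)}=\phi$ using $T=L+\underline L$ and the relation $\underline L=-\frac{D}{2}\partial_r$ on $N_0^{\mathcal I}=\{u=u_0\}$, where $L=\pv$ acts tangentially. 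Since $T\phi^{(1)}=\phi$ and $\phi$ solves the wave equation, applying $T$ to the radiation-field wave equation and using that $T$ is Killing gives a transport equation for $L(r\phi^{(1)})$ along $N_0^{\mathcal I}$ whose source is $r\phi$ itself. Concretely, I expect the identity $2r^2 L\phi^{(1)}=C_0'[\phi]+2\int rL(r\phi)\,\dd v'$ to follow by integrating a total $v$-derivative: one checks that $\partial_v\bigl(2r^2 L\phi^{(1)}\bigr)=2rL(r\phi)$ on $N_0^{\mathcal I}$ once $T\phi^{(1)}=\phi$ is imposed, so the formula is just the fundamental theorem of calculus, and $C_0'[\phi]$ is the value of $2r^2 L\phi^{(1)}$ at the corner $r=r_{\mathcal I}$, i.e.\ where $N_0^{\mathcal I}$ meets $\Sigma_0^*$.

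The expression for $C_0'[\phi]$ as a sum of an integral over $N_0^{\mathcal H}$, an integral of $n_{\Sigma_0}(\phi)$ over $\Sigma_0^*$, and the two corner values $r\phi|_{r=r_{\mathcal H}}+r\phi|_{r=r_{\mathcal I}}$ should come from a conservation law: integrating $\Box_g\phi^{(1)}=0$ (equivalently the divergence of the current associated to $T$) over $\Sigma_0$ and using the matching conditions $\lim_{u\to\infty}\underline L\phi^{(1)}(u,v_0)=0$ on the horizon null piece. The point is that the flux of $\phi$ through the whole of $\Sigma_0$, together with the vanishing boundary term at $\Hp$, fixes the single integration constant $C_0'$ at the corner. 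I would assemble this by writing the $T$-flux identity in the volume forms specified in the notational conventions ($r^2\dd\omega\,\dd u$ on $N_0^{\mathcal H}$, $\dd\mu_{\Sigma_0}$ on $\Sigma_0^*$), which accounts for the $4\pi$ normalisation on the spherically symmetric modes.

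For the three displayed formulae \eqref{eq:pr timeintegral}, \eqref{eq:pr2 timeintegral} and the preceding one for $\phi^{(1)}|_{N_0^{\mathcal I}}$, the plan is purely computational given $2r^2L\phi^{(1)}$. Under the extra assumption $\ilog[\phi]<\infty$ the constant is upgraded from $C_0'$ to $C_0$ by absorbing the tail integral $2\int_{N_0^{\mathcal I}}rL(r\phi)\,\dd\omega\,\dd v'$ into the constant, which makes the remaining $v$-integral convergent at $v=\infty$; this is what lets me integrate \emph{from} $r=\infty$ \emph{inward}. Since on $N_0^{\mathcal I}$ we have $L=\pv=\frac D2\partial_r$, one converts $2r^2L\phi^{(1)}=\frac{C_0}{\cdots}+\cdots$ into an expression for $\partial_r\phi^{(1)}$, integrates once in $r$ from $\infty$ (using $\lim_{v\to\infty}\phi^{(1)}=0$) to get $\phi^{(1)}$, and then differentiates the product $r\phi^{(1)}$ using $\partial_r\tfrac1{Dr^2}$-type identities and $D'=2M/r^2$ to read off $\partial_r(r\phi^{(1)})$ and $\partial_r^2(r\phi^{(1)})$. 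The factor $1/D$ and the $D'/D^2$ terms are exactly what arise from converting between $\partial_r$ and the null derivative $L$, and from differentiating $1/D$.

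The main obstacle is not any single calculation but the \emph{convergence and finiteness} bookkeeping that justifies integrating from $r=\infty$: one must verify that, under $\ill[\phi]=0$ and $\ilog[\phi]<\infty$, the iterated integrals $\int_r^\infty\frac1{Dr'^2}\int_{r'}^\infty r''\partial_r(r\phi)\,\dd r''\,\dd r'$ actually converge and that the constant $C_0$ is the unique choice rendering $\partial_r(r\phi^{(1)})$ free of a spurious $1/(Dr)$ growth that would otherwise violate $\lim_{v\to\infty}\phi^{(1)}=0$. In other words, the delicate step is matching the integration constant so that the decaying, rather than growing, solution of the inhomogeneous ODE along $N_0^{\mathcal I}$ is selected; once $C_0$ is pinned down this way, uniqueness of $\phi^{(1)}$ from the Cauchy theory (Proposition~\ref{prop:Cauchy}) together with the two limiting conditions closes the argument.
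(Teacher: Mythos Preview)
Your plan is correct and matches the paper's approach. The paper itself cites an external reference (Proposition~10.1 of \cite{Angelopoulos:2018uwb}) for the first displayed identity and the formula for $C_0'[\phi]$, then proceeds exactly as you describe: rewrite in terms of $C_0$, change to $(u,r)$-coordinates via $L=\tfrac{D}{2}\partial_r$, integrate from $\mathcal I^+$ using $\lim_{v\to\infty}\phi^{(1)}=0$ to obtain $\phi^{(1)}|_{N_0^{\mathcal I}}$, and then multiply by $r$ and apply $\partial_r$, $\partial_r^2$ for the remaining two identities.

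Your sketch in fact supplies more detail than the paper on the first step: the transport identity $\partial_v\bigl(2r^2 L\phi^{(1)}\bigr)=2rL(r\phi)$ along $N_0^{\mathcal I}$ is exactly the computation underlying the cited result (it follows from combining $T\phi^{(1)}=\phi$ with the spherically symmetric wave equation $\partial_u(r^2\partial_v\phi^{(1)})+\partial_v(r^2\partial_u\phi^{(1)})=0$), and the evaluation of $C_0'$ via a flux identity across $N_0^{\mathcal H}\cup\Sigma_0^*$ using the horizon condition $\lim_{u\to\infty}\underline L\phi^{(1)}=0$ is the content of the cited proposition. Your remarks on convergence of the iterated integrals under $\ilog[\phi]<\infty$ and on the role of $C_0$ in selecting the decaying solution are also to the point, though the paper leaves this implicit.
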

\begin{proof}
A proof of the first statement is provided in Proposition~10.1 of~\cite{Angelopoulos:2018uwb}.  We rewrite it as
\[
L\phi^{(1)}(u_0,v)=\frac{C_0[\phi]}{2r^2}-\frac{1}{r^2}\int_v^\infty rL(r\phi)(u_0,v')\dd v'
.\]
We then switch to $(u,r)$-coordinates and integrate (recall that $L=\frac{D}{2}\partial_r$) the above equality from $\mathcal{I}^+$, where $\phi^{(1)}$ vanishes by definition, to obtain the second statement. 
The last two statements are then obtained by multiplying by $r$ and acting with $\partial_r$, $\partial_r^2$, respectively. (Recall that $D'=\frac{\dd}{\dd r} D(r)$ etc.)
\end{proof}
\subsection{The time-inverted Newman--Penrose constant \texorpdfstring{$I_0^{\log,(1)}[\phi]$}{ILog1[phi]}}
The following proposition expresses the Newman--Penrose constant $\ill[\phi^{(1)}]$ in terms of $\phi$:
\begin{prop}\label{prop:P}
Let $\phi$ be a smooth, spherically symmetric solution in the sense of  Proposition~\ref{prop:Cauchy}, and let $\beta\in(0,1)$ and $J$ be constants. Assume that, on initial data, $\phi$ satisfies for some constant $P$:
\begin{equation}\label{eq:prop:Pasymptoticsassumption}
\left|\partial_r(r\phi)|_{N_0^\mathcal{I}}(r)-\ilog[\phi] \frac{\log r}{r^3}-\frac{J}{r^3}\right|\leq P r^{-3-\beta}.
\end{equation}
Then there is a constant $C(R,P,\beta)$ such that the time integral $\phi^{(1)}$ of $\phi$ satisfies
\begingroup\allowdisplaybreaks
\begin{align}
\left|\partial_r(r\phi^{(1)})|_{N_0^\mathcal{I}}(r)+\ilog[\phi] \frac{\log r}{r^2}+\frac{J+\frac12 \ilog[\phi]-MC_0[\phi]}{r^2}\right|\leq Cr^{-2-\beta},\label{eq:P timeintegral}\\
\left|\partial_r^2(r\phi^{(1)})|_{N_0^\mathcal{I}}(r)-2\ilog[\phi] \frac{\log r}{r^3}-\frac{2J-2MC_0[\phi]}{r^3}\right|\leq Cr^{-3-\beta}\label{eq:P timeintegral pr2}.
\end{align}\endgroup
In particular, we have the following identities (recall the definition \eqref{eq:ifake}):
\begin{align}
I_0^{\log,(1)}[\phi]:=\ill[\phi^{(1)}]=-\ilog[\phi],\label{eq:I0logphi1}\\
I_0'[\phi^{(1)}]+\log (2)\ill[\phi^{(1)}]=MC_0[\phi]-J-\frac12 \ilog[\phi].\label{eq:I0primephi1}
\end{align}
\end{prop}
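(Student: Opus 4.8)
The plan is to feed the assumed initial-data asymptotics \eqref{eq:prop:Pasymptoticsassumption} into the explicit integral representations \eqref{eq:pr timeintegral} and \eqref{eq:pr2 timeintegral} for $\partial_r(r\phi^{(1)})$ and $\partial_r^2(r\phi^{(1)})$ along $N_0^{\mathcal I}$, and to evaluate the resulting nested integrals to orders $r^{-2}$ and $r^{-3}$, respectively, keeping the logarithm explicit and pushing everything else into the error. First I would record the two elementary integrals that generate all the structure. Writing $r'\partial_r(r\phi)=\ilog[\phi]\frac{\log r'}{r'^2}+\frac{J}{r'^2}+\mathcal O(r'^{-2-\beta})$ from \eqref{eq:prop:Pasymptoticsassumption}, integration by parts gives
\[\int_r^\infty r'\partial_r(r\phi)\dd r'=\ilog[\phi]\Big(\frac{\log r}{r}+\frac1r\Big)+\frac{J}{r}+\mathcal O(r^{-1-\beta}),\]
and a second integration (after multiplying by $r'^{-2}$) produces analogous $\frac{\log r}{r^2}+\frac1{r^2}$-type terms. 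The error stays $\mathcal O(r^{-1-\beta})$ precisely because $\beta>0$, and all the numerical constants ($\tfrac14,\tfrac12,\dots$) that ultimately appear are exactly those produced by these by-parts integrations.

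The next step is to expand $\tfrac1D=1+\tfrac{2M}{r}+\mathcal O(r^{-2})$ throughout \eqref{eq:pr timeintegral}. The leading piece $\tfrac1D\approx1$ of the double integral, together with the last term $-\tfrac{2}{Dr}\int_r^\infty r'\partial_r(r\phi)\dd r'$, combine to give the coefficient $-\ilog[\phi]$ in front of $\frac{\log r}{r^2}$ and the coefficient $-J-\tfrac12\ilog[\phi]$ in front of $\frac1{r^2}$; the subleading $\tfrac{2M}{r}$-corrections only contribute at order $r^{-3}\log r=o(r^{-2-\beta})$ (here one uses $\beta<1$) and are harmless. The sole surviving $M$-dependence at order $r^{-2}$ comes from the $C_0[\phi]$-prefactor: the bracket $\frac1{Dr}-\int_r^\infty\frac1{Dr'^2}\dd r'$ has a leading cancellation at order $r^{-1}$ and equals $\frac{M}{r^2}+\mathcal O(r^{-3})$, contributing $\frac{MC_0[\phi]}{r^2}$. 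Collecting all contributions yields \eqref{eq:P timeintegral}. The second estimate \eqref{eq:P timeintegral pr2} is obtained the same way from \eqref{eq:pr2 timeintegral}, now using $D'=\tfrac{2M}{r^2}$, $\frac{D'}{D^2 r}=\frac{2M}{r^3}+\mathcal O(r^{-4})$ and $\frac2D\partial_r(r\phi)=2\ilog[\phi]\frac{\log r}{r^3}+\frac{2J}{r^3}+\mathcal O(r^{-3-\beta})$; alternatively it follows by differentiating \eqref{eq:P timeintegral}, which gives a useful consistency check on the coefficients.

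Finally, the two identities are read off directly from the asymptotic estimates. By definition $\ill[\phi^{(1)}]$ is the coefficient of $\frac{\log r}{r^2}$ in $\partial_r(r\phi^{(1)})$, so \eqref{eq:P timeintegral} immediately gives $\ill[\phi^{(1)}]=-\ilog[\phi]$, which is \eqref{eq:I0logphi1}. For \eqref{eq:I0primephi1} I would insert \eqref{eq:P timeintegral} into the definition \eqref{eq:ifake} of $I_0'[\phi^{(1)}]$: after subtracting $\ill[\phi^{(1)}]\frac{\log r-\log 2}{r^2}$, the $\log r$-terms cancel against the leading log of $\partial_r(r\phi^{(1)})$, and the limit isolates the $\frac1{r^2}$-coefficient $MC_0[\phi]-J-\tfrac12\ilog[\phi]$ up to the $\log 2$ reference term, which is exactly what combines with $\log(2)\,\ill[\phi^{(1)}]$ on the left-hand side to produce \eqref{eq:I0primephi1} after rearranging.

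The main obstacle I anticipate is purely the bookkeeping of constants: getting the numerical coefficients ($\tfrac12\ilog[\phi]$, the factor $M$ multiplying $C_0[\phi]$, and especially the $\log 2$ shift inherited from the reference point in \eqref{eq:ifake}) exactly right requires tracking the integration-by-parts constants and the subleading terms of the $\tfrac1D$-expansion with care, and confirming at each stage that the genuinely discarded terms decay strictly faster than $r^{-2-\beta}$ (resp.\ $r^{-3-\beta}$) — which is where the hypothesis $\beta\in(0,1)$ enters. The analytic content is light; the risk lies entirely in a misplaced constant or sign.
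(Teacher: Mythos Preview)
Your proposal is correct and follows essentially the same approach as the paper: plug the assumed asymptotics into the explicit identities \eqref{eq:pr timeintegral} and \eqref{eq:pr2 timeintegral}, use the key expansion $\frac{1}{Dr}-\int_r^\infty\frac{1}{Dr'^2}\dd r'=\frac{M}{r^2}+\mathcal O(r^{-3})$ and $D'=2M/r^2$, and read off the identities from the resulting coefficients. Your write-up simply makes explicit the integration-by-parts bookkeeping that the paper suppresses.
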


More generally, we can also show that if the asymptotics for $\partial_r(r\phi)$ above commute with $\partial_r^{k-1}$ on data, then the asymptotics for $\partial_r(r\phi^{(1)})$ commute with $\partial_r^k$ on data.

\begin{proof}
We obtain \eqref{eq:P timeintegral} by plugging the estimate \eqref{eq:prop:Pasymptoticsassumption} into identity \eqref{eq:pr timeintegral} and using that
\[\frac{1}{Dr}-\int_r^{\infty}\frac{1}{Dr'^2}\dd r'=\frac{M}{r^2}+\mathcal{O}(r^{-3}).\]
Similarly, we obtain \eqref{eq:P timeintegral pr2} by plugging  \eqref{eq:prop:Pasymptoticsassumption} into identity \eqref{eq:pr2 timeintegral}, noting that $D'=2Mr^{-2}$.
\end{proof}
We thus have as a direct corollary:
\begin{cor}\label{cor:Pfinite}
Under the assumptions of Proposition~\ref{prop:P}, we have that
\begin{equation}P_{\ill,I_0', \beta}[\phi^{(1)}]<\infty\end{equation}
 for $\ill[\phi^{(1)}]$ and $I_0'[\phi^{(1)}]$ as in \eqref{eq:I0logphi1}, \eqref{eq:I0primephi1}. Moreover, we have 
\begin{equation}P_{\ill,I_0' ,\beta,k}[\phi^{(1)}]<\infty\end{equation}
for $k=1$, where the norms $\Plog$, $\Plogk$ have been defined in eqns.\ \eqref{eq:definitionofP} and \eqref{eq:definitionofPk}, respectively.
\end{cor}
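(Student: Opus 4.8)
The plan is to read both norms directly off the initial-data asymptotics of Proposition~\ref{prop:P}, after translating the $r$-derivative expansions \eqref{eq:P timeintegral}--\eqref{eq:P timeintegral pr2} into the $v$-derivative language of \eqref{eq:definitionofP} and \eqref{eq:definitionofPk}. On $\Sigma_0$ we have $v\geq v_0>0$, and on the compact-$r$ portion of $\Sigma_0$ both $\pv(r\phi^{(1)})$ and the subtracted profile are smooth and bounded; hence the suprema defining $\Plog[\phi^{(1)}]$ and $\Plogone[\phi^{(1)}]$ can only fail to be finite as $r\to\infty$ along $N_0^\mathcal{I}=\{u=u_0\}\cap\{r\geq R\}$. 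There $\pv=L$ is the null generator, so $\pv=\frac{D}{2}\partial_r$ and every $\pv^j$ in \eqref{eq:definitionofPk} is a tangential derivative computable from data; the problem thus reduces to an asymptotic computation on $N_0^\mathcal{I}$.

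For $j=0$ I would write $\pv(r\phi^{(1)})=\frac{D}{2}\partial_r(r\phi^{(1)})$ and insert \eqref{eq:P timeintegral}. Since $\frac{D}{2}=\frac12-\frac{M}{r}$, the $-\frac{M}{r}$ correction turns the leading $\frac{\log r}{r^2}$ and $\frac1{r^2}$ terms into $\mathcal{O}(r^{-3}\log r)$, absorbed into $\mathcal{O}(r^{-2-\beta})$ because $\beta<1$. Integrating \eqref{eq:geometry:tortoise} gives $v=u_0+2r^*=2r+4M\log r+\mathcal{O}(1)$ on $N_0^\mathcal{I}$, so $r=\frac v2+\mathcal{O}(\log v)$ and
\[
\frac1{r^2}=\frac4{v^2}+\mathcal{O}\!\left(\frac{\log v}{v^3}\right),\qquad
\frac{\log r}{r^2}=\frac{4(\log v-\log 2)}{v^2}+\mathcal{O}\!\left(\frac{\log^2 v}{v^3}\right),
\]
both remainders being $\mathcal{O}(v^{-2-\beta})$ since $\beta<1$. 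Substituting and using $\ill[\phi^{(1)}]=-\ilog[\phi]$ from \eqref{eq:I0logphi1} produces exactly $2\ill[\phi^{(1)}]\frac{\log v}{v^2}$ at leading order, while the $\frac1{v^2}$-coefficient becomes $2\big(\log 2\,\ilog[\phi]-J-\tfrac12\ilog[\phi]+MC_0[\phi]\big)$, which by \eqref{eq:I0primephi1} equals $2I_0'[\phi^{(1)}]$. Hence $\pv(r\phi^{(1)})-2\ill[\phi^{(1)}]\frac{\log v}{v^2}-2\frac{I_0'[\phi^{(1)}]}{v^2}=\mathcal{O}(v^{-2-\beta})$ and $\Plog[\phi^{(1)}]<\infty$.

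The step requiring the most care is this logarithmic $v$--$r$ dictionary: because $v$ differs from $2r$ by a $4M\log r$ term, one must check both that the mismatch genuinely lies in the $\mathcal{O}(v^{-2-\beta})$ class — this is where $\beta<1$ is used, to swallow the $\log^2v/v^3$ contribution — and that the additive $-\log 2$ shift incurred when passing from $\log r$ to $\log v$ is absorbed precisely by the $\log(2)\,\ill[\phi^{(1)}]$ term carried by the identity \eqref{eq:I0primephi1}. This is exactly the purpose of the $\log 2$ normalisation built into the definition \eqref{eq:ifake} of $I_0'$.

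For the case $j=1$ needed for $\Plogone[\phi^{(1)}]$, I would run the same scheme one order lower, using the second-order expansion \eqref{eq:P timeintegral pr2} together with $\pv^2=\frac{D^2}{4}\partial_r^2+\frac{DD'}{4}\partial_r$ on $N_0^\mathcal{I}$; here the $\frac{DD'}{4}\partial_r$ piece is $\mathcal{O}(v^{-3-\beta})$ while $\frac{D^2}{4}\partial_r^2(r\phi^{(1)})$ supplies the leading $r^{-3}\log r$ behaviour. Comparing against $\pv$ of the subtracted profile, namely $\pv(\frac{\log v}{v^2})=\frac{1-2\log v}{v^3}$ and $\pv(\frac1{v^2})=-\frac2{v^3}$, and converting $r^{-3}\log r$ and $r^{-3}$ into $v^{-3}\log v$ and $v^{-3}$ exactly as above (with remainders $\mathcal{O}(v^{-3-\beta})$), one finds that the leading $v^{-3}\log v$ and $v^{-3}$ terms cancel — the former automatically and the latter precisely by \eqref{eq:I0primephi1} — so that $v^{3+\beta}\pv\big(\pv(r\phi^{(1)})-2\ill[\phi^{(1)}]\frac{\log v}{v^2}-2\frac{I_0'[\phi^{(1)}]}{v^2}\big)$ stays bounded. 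Together with the $j=0$ bound this gives $\Plogone[\phi^{(1)}]<\infty$, as claimed.
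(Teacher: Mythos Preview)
Your proof is correct and is precisely the ``direct'' verification the paper has in mind: the paper states Corollary~\ref{cor:Pfinite} as an immediate consequence of Proposition~\ref{prop:P} without further argument, and your translation of the $r$-expansions \eqref{eq:P timeintegral}--\eqref{eq:P timeintegral pr2} into $v$-expansions via $\pv=\tfrac{D}{2}\partial_r$ and $v=2r+\mathcal{O}(\log r)$ on $N_0^{\mathcal{I}}$, together with the observation that the compact-$r$ part of $\Sigma_0$ is harmless, is exactly the intended content of ``direct''. Your emphasis on the $\beta<1$ condition (to absorb the $\log^2 v/v^3$ remainders) and on the role of the $\log 2$ normalisation in \eqref{eq:ifake} is well placed.
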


\subsection{Initial energy norms for \texorpdfstring{$\phi^{(1)}$}{phi1}}
Finally, in order to apply the results from section~\ref{sec:asymptotics} to time integrals $\phi^{(1)}$ of initial data with $\ilog[\phi]<\infty$, we need to estimate the relevant energy norms (namely $\Elogk$ and $\Elogtildek$) of $\phi^{(1)}$ in terms of initial data energy norms for $\phi$. As these energy norms are blind to logarithmic corrections, we can simply quote the following result from Proposition~9.6 in~\cite{Angelopoulos2018Late-timeSpacetimes}:
\begin{prop}\label{prop:energyfinite}
Let $k\in\mathbb{N}_0$, and let $\phi$ be a solution to the wave equation such that
\begin{equation}\widetilde{E}^{\epsilon}_{0,\ill=0;k}[\phi]+\int_{\Sigma_0}J^N[N^2\phi]\cdot n_{\Sigma_0} \dd \mu_{\Sigma_0}<\infty
\end{equation}
for some $\epsilon>0$. Then there exist a constant $C=C(R, \Sigma_0, \epsilon,k)>0$ such that the time integral $\phi^{(1)}$ of $\phi$ satisfies
\begin{align}
E^{\epsilon}_{0,\ill\neq 0;k+1}[\phi^{(1)}]\leq C\cdot \left(E^{\epsilon}_{0,\ill=0;k}[\phi]+\int_{\Sigma_0}J^N[N\phi]\cdot n_{\Sigma_0} \dd \mu_{\Sigma_0}\right),\\
\widetilde{E}^{\epsilon}_{0,\ill\neq 0;k+1}[\phi^{(1)}]\leq C\cdot \left(\widetilde{E}^{\epsilon}_{0,\ill=0;k}[\phi]+\int_{\Sigma_0}J^N[N^2\phi]\cdot n_{\Sigma_0} \dd \mu_{\Sigma_0}\right).
\end{align}
\end{prop}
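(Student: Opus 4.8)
The plan is to exploit that these energy norms are $L^2$-quantities weighted only by powers of $r$ and of $(1+\tau)$, and are therefore \emph{blind} to the logarithmic factors which distinguish our data from those of \cite{Angelopoulos2018Late-timeSpacetimes}: a weight such as $\log r / r^2$ and a weight such as $r^{-2+\epsilon}$ are mutually dominated once integrated against the $\epsilon$-lossy $r$-weights built into $E^\epsilon$ and $\widetilde{E}^\epsilon$. Consequently the estimate reduces to exactly the polynomially-decaying statement proved as Proposition~9.6 in \cite{Angelopoulos2018Late-timeSpacetimes}, and the only genuine task is to verify that every ingredient of that argument survives the replacement of $I_0$-type data by $\ill$- and $\ilog$-type data. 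Note also that the regime switch from $\ill=0$ (for $\phi$) to $\ill\neq0$ (for $\phi^{(1)}$) is natural, since the time integral decays one power slower near $\Iplus$.

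First I would decompose $\Sigma_0 = N_0^\mathcal{H}\cup\Sigma_0^*\cup N_0^\mathcal{I}$ and organise the norm of $\phi^{(1)}$ by splitting each derivative into its $T$-part and its transversal ($Y$- or $N$-) part. The time derivatives cost nothing: since $T\phi^{(1)}=\phi$, one has $T^{j}\phi^{(1)}=T^{j-1}\phi$ for $j\geq1$, so any contribution to $E^\epsilon_{0,\ill\neq0;k+1}[\phi^{(1)}]$ carrying at least one factor of $T$ is immediately controlled by the corresponding contribution to $E^\epsilon_{0,\ill=0;k}[\phi]$ — the drop from $k+1$ to $k$ being precisely this consumed derivative. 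This reduces the problem to the purely transversal derivatives of $\phi^{(1)}$, i.e.\ the terms built only from $Y$ and $N$.

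On $N_0^\mathcal{I}$ I would use the explicit representation formulas \eqref{eq:pr timeintegral} and \eqref{eq:pr2 timeintegral} (together with their higher-order analogues, whose existence is asserted in the remark following Proposition~\ref{prop:P}) to bound the weighted radial derivatives $\partial_r^j(r\phi^{(1)})$ directly by the iterated integrals of $\partial_r(r\phi)$; finiteness of the $\ill=0$ norm of $\phi$ ensures these integrals converge and reproduce the slower $\ill\neq0$-type $r$-weights carried by $\phi^{(1)}$. On $\Sigma_0^*$ and $N_0^\mathcal{H}$, where $\phi^{(1)}$ enjoys no decay of its own, I would instead trade the top-order transversal derivative for a time derivative via the wave equation $\pu\pv(r\phi^{(1)})=-\frac{DD'}{4r}(r\phi^{(1)})$, which expresses the mixed second derivative $\underline{L}L(r\phi^{(1)})$ algebraically in terms of the undifferentiated field. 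The genuinely transversal top-order terms $N\phi^{(1)}$ and $N^2\phi^{(1)}$ that remain must then be controlled separately, and it is exactly here that the extra flux terms $\int_{\Sigma_0}J^N[N\phi]\cdot n_{\Sigma_0}\,\dd\mu_{\Sigma_0}$ and $\int_{\Sigma_0}J^N[N^2\phi]\cdot n_{\Sigma_0}\,\dd\mu_{\Sigma_0}$ enter, via a red-shift-weighted energy estimate propagating transversal regularity inward from the data for $\phi$.

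The main obstacle is precisely this last step: the transversal derivatives of $\phi^{(1)}$ near $\Hp$ cannot be recovered algebraically from $\phi$ and require a genuine red-shift energy estimate, which is the reason the two additional flux terms appear and the reason the two stated inequalities differ by exactly one order of $N$-regularity. Since the logarithmic corrections never enter these $L^2$ weights, this step is identical to the one in \cite{Angelopoulos2018Late-timeSpacetimes}, which is what ultimately permits quoting their Proposition~9.6 verbatim.
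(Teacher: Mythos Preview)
Your proposal is correct and matches the paper's approach: the paper simply observes that the energy norms are blind to logarithmic corrections and quotes Proposition~9.6 of \cite{Angelopoulos2018Late-timeSpacetimes} verbatim, without reproducing any of the argument. Your sketch of the mechanism behind that proposition (the $T\phi^{(1)}=\phi$ reduction, the representation formulae on $N_0^{\mathcal{I}}$, and the red-shift estimate near $\mathcal{H}^+$ accounting for the extra $J^N[N\phi]$ and $J^N[N^2\phi]$ fluxes) goes further than the paper itself does, but is consistent with the cited reference.
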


\section{Asymptotics II: The case \texorpdfstring{$\ill[\phi]=0$}{Ilog[phi]=0} and \texorpdfstring{$\ilog[\phi]\neq 0$}{Ilog2[phi]!=0}}\label{sec:asymptotics2}
We can now combine the results of sections~\ref{sec:asymptotics} and~\ref{sec:timeinversion} to derive the precise late-time asymptotics for solutions coming from smooth spherically symmetric initial data with $\ilog[\phi]<\infty$. This is done by simply writing $\phi$ as a time derivative of its time integral $\phi^{(1)}$,
\[r\phi=T(r\phi^{(1)}),\]
for which then Propositions~\ref{prop8.4},~\ref{prop8.5} and~\ref{prop8.7} hold, assuming that the relevant energy and $L^\infty$-norms ($\Plog$ etc.) of $\phi^{(1)}$ are finite. This latter assumption can in turn be shown to hold using Proposition~\ref{prop:energyfinite} and Corollary~\ref{cor:Pfinite}, respectively. 

We summarise our findings in 
\begin{thm}\label{thm:asyprecise}
Let $\epsilon>0$ be sufficiently small. Let $\phi$ be a spherically symmetric solution to the wave equation arising from smooth initial data  on $\Sigma_0$ (in the sense of Proposition~\ref{prop:Cauchy}) with $\ilog[\phi]<\infty$, and assume that
\begin{equation}
\widetilde{E}^{\epsilon}_{0,\ill=0;1}[\phi]+\int_{\Sigma_0}J^N[N^2\phi]\cdot n_{\Sigma_0} \dd \mu_{\Sigma_0}<\infty.
\end{equation}
Assume moreover that there exist constants $J$, $P$ and $\beta\in(\epsilon,1)$ such that on initial data, for all $r\geq R$:
\begin{equation}
\left|\partial_r(r\phi)|_{N_0^\mathcal{I}}(r)-\ilog[\phi] \frac{\log r}{r^3}-\frac{J}{r^3}\right|\leq P r^{-3-\beta}.
\end{equation}
Then the time integral $\phi^{(1)}$ satisfies 
\begin{align}
\ill[\phi^{(1)}]&=-\ilog[\phi]\label{Thm:eq:illfinite},\\
I_0'[\phi^{(1)}]&=MC_0[\phi]-J-\left(\frac12+\log 2\right) \ilog[\phi]\label{Thm:eq:ifakefinite},
\end{align}
and
\begin{equation}\label{Thm:eq:PandEarefinite}
\Plog[\phi^{(1)}]+\Plogone[\phi^{(1)}]+E^\epsilon_{0,\ill\neq0;2}[\phi]+\widetilde{E}^\epsilon_{0,\ill\neq0;2}[\phi^{(1)}]<\infty,
\end{equation}
where the $P$-norms have been defined in \eqref{eq:definitionofP} and \eqref{eq:definitionofPk}, respectively.

Moreover, we have the following asymptotic estimates for $\phi$:
For all $(u,v)\in\mathcal{R}\cap\{r\leq R\}$, we have:
\begin{align}
	&\left| 	 \phi(\tau,\rho)+4\left({\ill[\phi^{(1)}]}\frac{\log(\tau+1)}{(\tau+1)^{3}}+\left(I_0'[\phi^{(1)}]-\ill[\phi^{(1)}]\right)\frac{2}{(\tau+1)^{3}}\right)	\right|\\
	\leq& C\left(	\sqrt{\widetilde{E}^\epsilon_{0,\ill\neq0;2}[\phi^{(1)}]}	+\ill[\phi^{(1)}]+I_0'[\phi^{(1)}]+\Plogone[\phi^{(1)}]	\right)(\tau+1)^{-3-\epsilon}.\nonumber
\end{align}
On the other hand, we have that, for all $(u,v)\in \mathcal{R}\cap\{r\geq R\}$:
\begin{align}
	&\left|  \phi(u,v)+	\frac{4\ill[\phi^{(1)}]}{v-u-1}\left( \frac{\log(u+1)}{(u+1)^{2}}-\frac{\log v}{v^{2}}\right)	+\frac{4I_0'[\phi^{(1)}]}{(u+1)^{2}v}\left(	1+\frac{u+1}{v}	\right) 		\right|\\
	\leq& C\left(	\sqrt{\widetilde{E}^\epsilon_{0,\ill\neq0;2}[\phi^{(1)}]}	+\ill[\phi^{(1)}]+I_0'[\phi^{(1)}]+\Plogone[\phi^{(1)}]	\right)(u+1)^{-2-\epsilon}v^{-1}.\nonumber
\end{align}
In particular, the following asymptotics hold along $\mathcal{I}^+$:
\begin{align}
	&\left|	r\phi(u,\infty)+2\left(\ill[\phi^{(1)}] 	\frac{\log(u+1)}{(u+1)^{2}}+I_0'[\phi^{(1)}]	\frac{1}{(u+1)^{2}}\right)	\right|\\
	\leq& C\left(\sqrt{E^\epsilon_{0,\ill\neq0;1}} +\ill[\phi^{(1)}]+I_0'[\phi^{(1)}]+\Plogone[\phi^{(1)}]\right) (u+1)^{-2-\epsilon}.\nonumber
\end{align}
In each case, $C=C(R,\epsilon)>0$ is a constant.
\end{thm}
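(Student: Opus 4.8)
The plan is to exploit the relation $\phi=T\phi^{(1)}$ and thereby reduce the present case ($\ill[\phi]=0$, $\ilog[\phi]\neq 0$) to the case already treated in Section~\ref{sec:asymptotics}, applied to the time integral $\phi^{(1)}$. The key structural fact, supplied by the time-inversion theory of Section~\ref{sec:timeinversion}, is that $\phi^{(1)}$ has $\ill[\phi^{(1)}]=-\ilog[\phi]\neq 0$, so $\phi^{(1)}$ falls squarely within the scope of Propositions~\ref{prop8.1}--\ref{prop8.7}. Since all the genuine analytic work (integrating the wave equation, the splitting into $\mathcal B_\alpha$, the conversion of $u$- and $v$-decay into $\tau$-decay) has already been carried out there for arbitrary $T^k$, what remains is essentially an \emph{assembly}: verify that $\phi^{(1)}$ satisfies the hypotheses of those propositions, then read off the asymptotics of $\phi=T^1\phi^{(1)}$ by specialising to $k=1$.

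First I would verify the hypotheses and deduce the auxiliary identities. The theorem's assumption on $\partial_r(r\phi)|_{N_0^\mathcal{I}}$ is exactly the hypothesis \eqref{eq:prop:Pasymptoticsassumption} of Proposition~\ref{prop:P}; applying that proposition immediately yields \eqref{eq:I0logphi1}, i.e.\ $\ill[\phi^{(1)}]=-\ilog[\phi]$ (which is \eqref{Thm:eq:illfinite}), together with \eqref{eq:I0primephi1}. Eliminating $\ill[\phi^{(1)}]$ between \eqref{eq:I0logphi1} and \eqref{eq:I0primephi1} then produces the stated value \eqref{Thm:eq:ifakefinite} of $I_0'[\phi^{(1)}]$ in terms of $MC_0[\phi]$, $J$ and $\ilog[\phi]$; this is a one-line computation with no analytic content. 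The finiteness statement \eqref{Thm:eq:PandEarefinite} is assembled from two earlier results: Corollary~\ref{cor:Pfinite} (whose only hypothesis is again \eqref{eq:prop:Pasymptoticsassumption}) gives $\Plog[\phi^{(1)}]+\Plogone[\phi^{(1)}]<\infty$, while Proposition~\ref{prop:energyfinite} with $k=1$, using the assumed bound $\widetilde{E}^{\epsilon}_{0,\ill=0;1}[\phi]+\int_{\Sigma_0}J^N[N^2\phi]\cdot n_{\Sigma_0}\dd\mu_{\Sigma_0}<\infty$, gives the finiteness of the energy norms $E^\epsilon_{0,\ill\neq0;2}[\phi^{(1)}]$ and $\widetilde{E}^\epsilon_{0,\ill\neq0;2}[\phi^{(1)}]$. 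Because lower-order energy norms are dominated by higher-order ones, these also supply precisely the inputs ($\Elog[\phi^{(1)}]$, $\Elogk[\phi^{(1)}]$ at $k=1$ and $\Elogtildeone[\phi^{(1)}]$) required to run Propositions~\ref{prop8.5} and~\ref{prop8.7} for $\phi^{(1)}$.

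With the hypotheses in hand, I would simply invoke Proposition~\ref{prop8.7} and the $\mathcal I^+$-part of Proposition~\ref{prop8.5}, both for $\phi^{(1)}$ at $k=1$, using $T\phi^{(1)}=\phi$ and $T(r\phi^{(1)})=r\phi$. The two spacetime estimates (in $\{r\geq R\}$ and $\{r\leq R\}$) come directly from Proposition~\ref{prop8.7} at $k=1$, and the null-infinity estimate comes from Proposition~\ref{prop8.5} at $k=1$. Here the crucial simplification is $c_1=0$ from \eqref{eq:Ck}, which collapses the general-$k$ constants: the prefactor $4(-1)^1 1!=-4$ (respectively $2(-1)^1 1!=-2$ along $\mathcal I^+$) reproduces the signs in the theorem, the sum $\sum_{j=1}^{k}\big((u+1)/v\big)^j$ reduces to the single factor $1+(u+1)/v$, and the coefficient $c_k\ill[\phi^{(1)}]+I_0'[\phi^{(1)}]$ reduces to $I_0'[\phi^{(1)}]$. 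The error exponents $-2-k-\epsilon$, $-k-1-\epsilon$ and $-1-k-\epsilon$ all evaluate at $k=1$ to the displayed exponents $-3-\epsilon$ and $-2-\epsilon$.

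The only point requiring genuine care — and hence the main (minor) obstacle — is the \emph{simultaneous} choice of the auxiliary parameters. At $k=1$, Propositions~\ref{prop8.5} and~\ref{prop8.7} impose $\alpha_1\in[\tfrac{7}{9},1)$, $\epsilon$ suitably small, $\beta>\epsilon$, and $\tfrac{1-\alpha_1}{2}<\beta+2\epsilon$; the last of these is precisely what guarantees that Proposition~\ref{prop8.5} yields asymptotics all the way to $\mathcal I^+$ rather than only in a strict subregion $\mathcal B_{\delta_1}$. Since the theorem assumes $\beta\in(\epsilon,1)$ and permits $\epsilon>0$ to be chosen arbitrarily small, one fixes $\epsilon$ small first and then picks $\alpha_1<1$ close enough to $1$ that $\tfrac{1-\alpha_1}{2}<\beta+2\epsilon$, so that every constraint is met. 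Beyond this bookkeeping there is no new analytic difficulty, as every estimate has already been established for general $T^k$ in Sections~\ref{sec:asymptotics} and~\ref{sec:timeinversion}.
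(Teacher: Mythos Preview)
Your proposal is correct and follows essentially the same approach as the paper: invoke Proposition~\ref{prop:P} and Corollary~\ref{cor:Pfinite} for the identities \eqref{Thm:eq:illfinite}, \eqref{Thm:eq:ifakefinite} and the finiteness of the $P$-norms, Proposition~\ref{prop:energyfinite} (at $k=1$) for the energy norms, and then apply Propositions~\ref{prop8.5} and~\ref{prop8.7} to $\phi^{(1)}$ at $k=1$, using $c_1=0$. The paper's proof is in fact only two sentences long and omits the parameter bookkeeping you spell out; your more detailed account of the constraints on $\alpha_1$, $\epsilon$ and $\beta$ is entirely in line with how those propositions were set up.
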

\begin{proof}
By Proposition~\ref{prop:energyfinite} and Corollary~\ref{cor:Pfinite}, the statements \eqref{Thm:eq:illfinite}, \eqref{Thm:eq:ifakefinite} and \eqref{Thm:eq:PandEarefinite} for the time integral $\phi^{(1)}$ of $\phi$ follow. 
The remaining statements then follow by applying Propositions~\ref{prop8.5} and~\ref{prop8.7} to $\phi^{(1)}$ and $k=1$, recalling that $c_k=0$ for $k=1$.
\end{proof}

\section{Proof of Theorem~\ref{nt:thm:main}}\label{sec:connectiontokerrburger}
We finally apply our results to the data of~\cite{Kerrburger} and, thus, prove the main theorem (Thm.~\ref{nt:thm:main}):
\begin{thm}\label{thm:connectoin}
Consider smooth, spherically initial/scattering data for \eqref{waveequation} as in~\cite{Kerrburger}, that is, assume that, on $\mathcal{C}_{\mathrm{in}}:=\{v=v_0\}$,
\begin{equation}\label{xx}
\partial_u (r\phi)(u,v_0)=\frac{2I^{(\mathrm{past})}_0[\phi]}{(1+u)^2}+\mathcal{O}_4((1+u)^{-2-\epsilon_\phi})
\end{equation}
for $u<0$ and some $\epsilon_\phi>0$, and that
\begin{equation}
\lim_{u\to -\infty} r\phi(u,v)=0
\end{equation}
for all $v\geq v_0$.
Moreover, assume that the data on $\{v=v_0\}$ extend smoothly to the event horizon $\mathcal{H}^+$.

Then there exist constants $J$, $P>0$ and $\beta\in(0,\epsilon_\phi)$ such that the arising scattering solution satisfies
\begin{equation}\label{x}
\left|\partial_r(r\phi)|_{N_0^\mathcal{I}}(r)-\ilog[\phi] \frac{\log r}{r^3}-\frac{J}{r^3}\right|\leq P r^{-3-\beta},
\end{equation}
where $\ilog[\phi]$ is given by
\begin{equation}
\ilog[\phi]=-2M I_0^{(\mathrm{past})}[\phi].
\end{equation}
Moreover, we have on $\Sigma_0$ that
\begin{equation}\label{xt}
\widetilde{E}^{\epsilon}_{0,\ill=0;1}[\phi]+\int_{\Sigma_0}J^N[N^2\phi]\cdot n_{\Sigma_0} \dd \mu_{\Sigma_0}<\infty.
\end{equation}
In particular, Theorem~\ref{thm:asyprecise} applies to $\phi$.
\end{thm}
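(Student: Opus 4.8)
The plan is to treat the data \eqref{xx} together with the no incoming radiation condition as precisely the scattering data analysed in~\cite{Kerrburger}, to import from there the behaviour of the solution in a neighbourhood of $i^0$, and then to extract from this (i) the value of the conserved modified Newman--Penrose constant $\ilog[\phi]$, (ii) the full expansion \eqref{x} on the outgoing cone $N_0^\mathcal{I}=\{u=0,\,r\geq R\}$, and (iii) the finiteness of the energy norms \eqref{xt}. The only genuinely new analytic content beyond citing~\cite{Kerrburger} is the passage from the large-$|u|$ asymptotics established there to the fixed cut $u=0$, together with the verification of the energy norms using the smooth extension of the data to $\Hp$.

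First I would relate the datum $I_0^{(\mathrm{past})}[\phi]$ appearing in \eqref{xx} to the radiation field on $\Iminus$. Integrating $\pu(r\phi)(u,v_0)=2I_0^{(\mathrm{past})}[\phi](1+u)^{-2}+\mathcal{O}_4((1+u)^{-2-\epsilon_\phi})$ inwards from $\Iminus$ and using the no incoming radiation condition $\lim_{u\to-\infty}r\phi=0$ yields $r\phi(u,v_0)=-2I_0^{(\mathrm{past})}[\phi](1+u)^{-1}+\dots$, so that $\Phi^-:=\lim_{u\to-\infty}|u|\,r\phi$ is a fixed nonzero multiple of $I_0^{(\mathrm{past})}[\phi]$. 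With $\Phi^-$ so identified, the results of~\cite{Kerrburger} (the expansion \eqref{nt:eq:intro:pvrphilog}) give the leading behaviour of $\pv(r\phi)$ on cones of constant $u$ for large $|u|$; the coefficient of $\tfrac{\log r}{r^3}$ there is $\ilog[\phi]$, and the mass-dependent factor in $\ilog[\phi]=-2MI_0^{(\mathrm{past})}[\phi]$ originates from the $\sim -\tfrac{M}{2r^3}(r\phi)$ source in the wave equation \eqref{waveequationradiationfield}, which is what converts the $|u|^{-1}$ decay of $r\phi$ near $\Iminus$ into the logarithm. The precise numerical constant is then pinned down by matching the $\pu$-normalisation of \eqref{xx} to the $\partial_r$-normalisation in the definition of $\ilog$.

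Next I would establish \eqref{x} itself. Because $\ilog[\phi](u)=\lim_{r\to\infty}\tfrac{r^3}{\log r}\partial_r(r\phi)(u,r)$ is conserved in $u$ (it is of the form \eqref{nt:eq:intro:PNconstantf} with $f=r^3/\log r$, for which $f/r^3\to0$), its value on $N_0^\mathcal{I}$ equals the value computed above as $u\to-\infty$, namely $-2MI_0^{(\mathrm{past})}[\phi]$. For the subleading term one keeps track of the next order in the $r\to\infty$ expansion of $\partial_r(r\phi)(u,r)$: for a fixed finite $u$ this expansion has a finite $r^{-3}$ coefficient, of which the $\log|u|$ in \eqref{nt:eq:intro:pvrphilog} is merely the large-$|u|$ form, and evaluating at $u=0$ defines the constant $J$ in \eqref{x}. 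The remainder is $\mathcal{O}(r^{-3-\beta})$ for any $\beta<\epsilon_\phi$, the loss of $\epsilon_\phi-\beta$ absorbing logarithmic factors; the four derivatives controlled in \eqref{xx} guarantee that this expansion remains stable under the differentiations needed later.

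Finally, for \eqref{xt} I would split $\Sigma_0$ into its three pieces. On $N_0^\mathcal{H}=\{v=v_0\}\cap\{r\leq r_\mathcal{H}\}$ and on the spacelike part, finiteness of $\widetilde{E}^{\epsilon}_{0,\ill=0;1}[\phi]$ and of $\int_{\Sigma_0}J^N[N^2\phi]\cdot n_{\Sigma_0}\,\dd\mu_{\Sigma_0}$ is immediate from the hypothesis that the data extend smoothly to $\Hp$. On $N_0^\mathcal{I}$ one invokes the expansion \eqref{x}: since $\partial_r(r\phi)$ decays like $\tfrac{\log r}{r^3}$ one has $\ill[\phi]=0$, which is exactly the regime for which the weights in $\widetilde{E}^{\epsilon}_{0,\ill=0;1}$ are designed, and this decay (together with the $\mathcal{O}_4$ control of four derivatives) is more than enough to make the weighted $r$-integrals converge. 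This verifies every hypothesis of Theorem~\ref{thm:asyprecise}. The main obstacle is the second step: transferring the asymptotics of~\cite{Kerrburger}, which are naturally phrased as $u\to-\infty$ and whose explicit form \eqref{nt:eq:intro:pvrphilog} degenerates through $\log|u|$ as $u\to0$, into a clean finite expansion on the fixed cone $u=0$. This is not a naive substitution $u=0$ into \eqref{nt:eq:intro:pvrphilog}; it requires the estimates of~\cite{Kerrburger} to hold uniformly down to $u=0$ and the subleading coefficient to remain finite there. A secondary, purely bookkeeping, difficulty is tracking the precise constants and signs relating the $\pu$-normalised datum in \eqref{xx}, the radiation field $\Phi^-$, and the $\partial_r$-normalised constant $\ilog[\phi]$.
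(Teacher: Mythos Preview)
Your plan is broadly correct and, for the energy estimate \eqref{xt}, matches the paper's argument (smoothness near $\Hp$ plus the decay \eqref{x} and its $\pv$-commuted analogues on $N_0^{\mathcal I}$). But there is a genuine gap at exactly the step you flag as ``the main obstacle'', and your proposal does not close it. The conservation of $\ilog$ as an $I_0^f$-type quantity with $f=r^3/\log r$ transfers only the \emph{leading} coefficient from $u\to-\infty$ to $u=0$; it says nothing about the subleading $J/r^3$ term or the $\mathcal O(r^{-3-\beta})$ remainder. Your sentence ``for a fixed finite $u$ this expansion has a finite $r^{-3}$ coefficient'' is precisely the content of \eqref{x} and cannot be assumed.

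The paper closes this gap by a concrete two-step argument that never invokes the conservation law. First, it re-enters the proof of Theorem~4.2 of~\cite{Kerrburger}: starting from $|r\phi-\Phi^-/|\bar u||\lesssim|\bar u|^{-1-\epsilon_\phi}$ (valid for large negative $\bar u$), one integrates the wave equation \eqref{waveequationradiationfield} from $\Iminus$ to get a first estimate $\partial_{\bar v}(r\phi)+2M\int_{-\infty}^{\bar u}\frac{D\Phi^-}{r^3|\bar u'|}\,d\bar u'=\mathcal O(r^{-3}|\bar u|^{-\epsilon_\phi})$, then differentiates the quantity $r^3\partial_{\bar v}(r\phi)+2Mr^3\int_{-\infty}^{\bar u}\frac{D\Phi^-}{r^3|\bar u'|}\,d\bar u'$ in $\bar u$, bounds the result by $C|\bar u|^{-1-\epsilon_\phi}+Cr^{-1}|\bar u|^{-\epsilon_\phi}$, and integrates back. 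Evaluating the remaining explicit integral by partial fractions (using $|\bar v-\bar u-r|\lesssim\log r$) yields the full expansion $\ilog[\phi]\frac{\log r}{r^3}+\frac{J(\bar u)}{r^3}+\mathcal O(r^{-3-\beta})$ with a finite, $\bar u$-dependent $J(\bar u)$, valid for sufficiently large negative $\bar u$. Second---and this is the missing idea in your outline---the paper propagates this expansion from large negative $\bar u$ to $u=0$ by simply integrating the wave equation \eqref{waveequationradiationfield} once more over the compact $u$-interval in between: since the right-hand side of \eqref{waveequationradiationfield} is $\mathcal O(r^{-3})$, integration over a finite $u$-range preserves the form of the expansion (only the constant $J$ changes). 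This replaces your unspecified requirement that ``the estimates of~\cite{Kerrburger} hold uniformly down to $u=0$''.
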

\begin{rem}
An almost identical statement holds for the boundary value problem with polynomially decaying data on a spherically symmetric past-complete timelike hypersurface $\Gamma$ as considered in~\cite{Kerrburger}, see Theorem~2.1 therein. Moreover, a more detailed analysis shows that $\beta$ can be chosen to equal $\epsilon_\phi$ if $\epsilon_\phi<1$, and that the RHS of \eqref{x} can be replaced by $Pr^{-4}\log r$ if $\epsilon_\phi\geq1$. Lastly, it suffices to assume finite regularity for the data on $v=v_0$ and to assume that the energy expression 
\[\sum_{i\leq 8,j\leq1} \int_{\{v=v_0\}} \left(J^N[T^i\phi]+J^N[NT^j\phi]+J^N[N^2\phi]\right)\cdot \underline{L}\dd u\] remains locally finite in a neighbourhood of $\mathcal H^+$.
\end{rem}
\begin{proof}
We will, in this proof, change coordinates $(\bar{u},\bar{v})=(\frac{u}{2},\frac{v}{2})$. Moreover, to match with the notation of~\cite{Kerrburger}, we will also write $\Phi^-:=I_0^{(\mathrm{past})}[\phi]$.
By the results of~\cite{Kerrburger} (Theorem~2.4 or, more precisely, Theorems~4.1 and~4.2 therein), a unique smooth solution assuming the initial/scattering data exists up to $\Sigma_0$, and we have, for sufficiently large negative values of $u$, that
\begin{equation}
\left|r\phi(u,v)-\frac{\Phi^-}{|\bar{u}|}\right|\leq C|\bar u|^{-1-\epsilon_\phi}
\end{equation}
for some uniform constant $C$.

In order to prove \eqref{x}, we need to re-examine the steps of the proof of Theorem~4.2. By inserting the above estimate into the wave equation~\eqref{waveequationradiationfield} and integrating the latter from past null infinity, we obtain that
\begin{align*}
\left|	\partial_{\bar{v}}(r\phi)(\bar{u},\bar{v})+2M\int_{-\infty}^{\bar{u}} D\frac{\Phi^-}{r^3|\bar{u}'|}\dd \bar{u}'\right|\leq C r^{-3}|\bar{u}|^{-\epsilon_\phi}
\end{align*}
for sufficiently large negative values of $u$.
Consider now the expression
\begin{nalign}
&\partial_{\bar{u}}\left(r^3\partial_{\bar{v}}(r\phi)+2Mr^3\int_{-\infty}^{\bar{u}} D\frac{\Phi^-}{r^3|\bar{u}'|}\dd \bar{u}'\right)\\
=&\underbrace{2MD\left(	\frac{\Phi^- }{|\bar{u}|}-r\phi	\right)}_{\leq C|u|^{-1-\epsilon_\phi}}-\underbrace{3Dr^2\left(	\partial_{\bar{v}}(r\phi)+2M\int_{-\infty}^{\bar{u}} D\frac{\Phi^-}{r^3|\bar{u}'|}\dd \bar{u}'\right)}_{\leq Cr^{-1}|u|^{-\epsilon_\phi}},
\end{nalign}
and, after estimating, say, $r^{-1+\beta}$ against $|u|^{-1+\beta}$ for some $\beta<\epsilon_\phi$, integrate it from $-\infty$ to some fixed value $\bar{u}$ to obtain the more precise asymptotic expression:
\begin{equation}\label{asy2}
\left|r^3	\partial_{\bar{v}}(r\phi)(\bar{u},\bar{v})+2Mr^3\int_{-\infty}^{\bar{u}} D\frac{\Phi^-}{r^3|\bar{u}'|}\dd \bar{u}'	-2M\int_{-\infty}^{\bar{u}}\left(\frac{\Phi^-}{|\bar{u}'|}-r\phi(\bar{u}', \bar{v})		\right) \dd \bar{u}'		\right|\leq \frac{C}{r^\beta}.
\end{equation}
As in the proof of Theorem 4.2 in~\cite{Kerrburger}, we evaluate the integral on the LHS by using $|\bar{v}-\bar{u}-r|\lesssim \log r$ and decomposing into partial fractions.
 We then have, \textit{for fixed $\bar u$},
\[\int_{-\infty}^{\bar u}\frac{D}{r^3|\bar{u}'|}\dd \bar{u}'=\int_{-\infty}^{\bar{u}} \frac{1}{(\bar{v}-\bar{u}')^3|\bar{u}'|}\dd  \bar{u}'+\mathcal{O}\left(\frac{1}{r^{3+\beta}}\right)=\frac{\log r}{r^3}+\frac{J(\bar{u})}{r^3}+\mathcal{O}\left(\frac{1}{r^{3+\beta}}\right)\]
for some function $J(\bar u)$.

Inserting this estimate back into \eqref{asy2} gives the asymptotics for $\pv(r\phi)$ for fixed, sufficiently large negative values of $\bar{u}$. 
However, in view of the wave equation for the radiation field \eqref{waveequationradiationfield}, we can, in fact, propagate them for any finite $u$-distance; in particular, we can propagate them up to $\Sigma_0$. This proves \eqref{x}.

It is left to prove \eqref{xt}. Proving the finiteness of the $J^N$-based energies contained in the definition of $\widetilde{E}^{\epsilon}_{0,\ill=0;1}[\phi]$ is standard.
On the other hand, to show the finiteness of terms like e.g.\
\[\int_{N_0^\mathcal{I}} r^{5+2k-\epsilon}(\partial_r^{1+k}(r\phi))^2\dd r\]
for $k=1$, we use the asymptotic expression \eqref{x} as well as the fact that the 
asymptotics obtained in Theorem~4.2 of~\cite{Kerrburger} commute with $\pv$: For instance, we have that $\pv^2(r\phi)\sim r^{-4}\log r$ (see Remark~4.5 in~\cite{Kerrburger}).
Lastly, we similarly deal with terms such as
\[\int_{N_0^\mathcal{I}} r^{4-\epsilon}(\partial_r^{1}T^3(r\phi))^2\dd r\]
by writing $T=\pu+\pv$ and using the wave equation \eqref{waveequationradiationfield} for $\pu\pv$-terms and the improved estimates mentioned above for $\pv\pv$-terms etc. 
\end{proof}

\section{Comments on higher-order asymptotics}\label{sec:hot}
In this section, we shall briefly discuss the issue of deriving higher-order asymptotics for $\phi$. 

In the settings where the solution is conformally smooth on $u=0$, i.e.,
	\begin{equation}\label{eq:hot:1}
		\partial_r(r\phi)(0,r)=\frac{I_0}{r^2}+\mathcal{O}(r^{-3}) 
	\end{equation}
or
	\begin{equation}\label{eq:hot:2}
		\partial_r(r\phi)(0,r)=\frac{J_0}{r^3}+\mathcal{O}(r^{-4}) 
	\end{equation}
	for some constants $I_0$, $J_0$, the next-to-leading-order asymptotics for $\phi$ have been derived in~\cite{Angelopoulos2019LogarithmicInfinity}. It was  found there that these next-to-leading-order asymptotics contain logarithmic terms which, on the one hand, have contributions from the leading-order asymptotics of the solution itself. On the other hand, they have contributions from the simple fact that, on Schwarzschild, assuming smoothness in the variable $s=1/r$ is incompatible with assuming smoothness in the variable $s'=1/v$. (This is because $r+2M\log r- (v-u)/2=\mathcal{O}(1)$.) In other words, the above initial data assumptions imply
	\begin{equation}
		\pv(r\phi)(0,r)=\frac{2I_0}{v^2}+\frac{16MI_0\log v}{v^3}+\mathcal{O}(v^{-3}) 
	\end{equation}
	for \eqref{eq:hot:1}
and, similarly, for \eqref{eq:hot:2}:
\begin{equation}
		\pv(r\phi)(0,r)=\frac{4J_0}{v^3}+\frac{48MJ_0\log v}{v^4}+\mathcal{O}(v^{-4}) .
	\end{equation}
These logarithmic higher-order asymptotics on initial data then lead (together with the contribution coming from the leading-order asymptotics of the solution itself) to logarithmic higher-order corrections in the asymptotic expansion of $\phi$ near $i^+$. 

Now, from the viewpoint of~\cite{Kerrburger}, the asymptotics \eqref{eq:hot:1}, \eqref{eq:hot:2} are of course inappropriate since they assume conformal smoothness or compact support: Instead, the results of~\cite{Kerrburger} motivate the following asymptotics on $u=0$, where $J_0$, $J_0'$ and $J_0''$ are constants:
	\begin{equation}\label{eq:hot:1log}
		\partial_r(r\phi)(0,r)=\frac{I^{\frac{\log r}{r^3}}_0[\phi]\log r}{r^3}+\frac{J_0}{r^3}+\mathcal{O}(\max(r^{-3-\epsilon_\phi},r^{-4}\log r)) 
	\end{equation}
	or
	\begin{equation}\label{eq:hot:2log}
		\partial_r(r\phi)(0,r)=\frac{J'_0}{r^3}+\frac{J''_0\log r}{r^4}+\mathcal{O}(r^{-4}) ,
	\end{equation}
where the latter expansion was obtained by e.g.\ considering the scattering problem on Schwarzschild with smooth compactly supported scattering data on $\mathcal{I}^-$ and $\mathcal{H}^-$ (and is also the generic case if the past Newman--Penrose constant vanishes and the data on $v=1$ are conformally smooth), see Theorems~1.5 or~6.2 in~\cite{Kerrburger}.

In the latter case \eqref{eq:hot:2log}, the situation is similar to \eqref{eq:hot:2}, with the difference being that, now, there are two contributions to the logarithmic corrections on data, so, in principle, there could be cancellations in the higher-order late-time asymptotics (this would depend on the extension of the data to $\mathcal H^+$):
	\begin{equation}\label{eq:hot:1logg}
		\partial_v(r\phi)(0,r)=\frac{4J'_0}{v^3}+\frac{(8J''_0+48MJ'_0)\log v}{v^4}+\mathcal{O}(v^{-4}).
	\end{equation}
On the other hand, in the former case \eqref{eq:hot:1log}, we have
\begin{equation}\label{eq:hot:2logg}
		\partial_v(r\phi)(0,r)=\frac{4I^{\frac{\log r}{r^3}}_0[\phi](\log v-\log 2)}{v^3}+\frac{4J_0}{v^3}+\frac{48MI^{\frac{\log r}{r^3}}_0\log^2 v}{v^4}+\mathcal{O}(\max(v^{-3-\epsilon_\phi},v^{-4}\log v)).
	\end{equation}
From this, one already sees that, if one were to suitably adapt the methods of $\cite{Angelopoulos2019LogarithmicInfinity}$, and if one assumes that $\epsilon_\phi\geq 1$, then one would find corrections to the asymptotics of $\phi$ near $i^+$  which contain $\log^2$-terms. (Again, there would also be a contribution coming from the leading-order asymptotics themselves.) For instance, for the radiation field along future null infinity, we would obtain a correction at order $u^{-3}\log^2 u$. We leave the details to the reader.

\section{Comments on the non-linear problem}\label{sec:nonlinearcomments}
We have shown in this paper that, in the case of the linear wave equation on a fixed Schwarzschild background, the logarithmic terms obtained near spacelike infinity in~\cite{Kerrburger} can be translated into leading order logarithmic asymptotics for the radiation field near $i^+$, $r\phi|_{\mathcal I^+}=Cu^{-2}\log u+\mathcal O(u^{-2})$. 
Now, the results of~\cite{Kerrburger} were, in fact, derived for the non-linear Einstein-Scalar field system and then obtained for the linear problem as a corollary.
It is therefore interesting to ask if one can show analogous results to the ones obtained here for the non-linear case.
We here only make two brief comments on two works which are related to this problem. 
\paragraph{The black hole case}
If we consider the Einstein-Scalar field system under spherical symmetry, with assumptions as in~\cite{Dafermos2005} (in particular, we assume that an event horizon exists), and the additional assumption that, on some outgoing null hypersurface, 
\begin{equation}|\pv(r\phi)|\leq C r^{-3}\log r
\end{equation}
then we are in the realm of the non-linear proof of (almost) Price's law by Dafermos--Rodnianski~\cite{Dafermos2005} (again with the exception of the logarithmic terms). Carefully following their arguments, one finds the following results (with a choice of coordinates as in~\cite{Dafermos2005}, see their Thm.~1.1): 
Along the event horizon, one obtains
\begin{equation}\label{Price1}
    |\phi|+|\pv\phi|\leq C_\epsilon v^{-3+\epsilon},
\end{equation}
whereas, along null infinity, one has
\begin{align}
    |r\phi|\leq Cu^{-2}\log u,&&
    |\pu(r\phi)|\leq C_\epsilon u^{-3+\epsilon},\label{Price3}
\end{align}
for any $\epsilon>0$ and a constant $C>0$. $C_\epsilon $ is a constant that blows up as $\epsilon\to 0$.
Note that these are only upper bounds. In particular, we expect that one can replace the $\epsilon$-growth in \eqref{Price1}, \eqref{Price3} with a logarithm.
\paragraph{The dispersive case}
If we consider the Einstein-Scalar field system under spherical symmetry and, instead of assuming that a black hole forms, assume that the solution disperses (and has a regular centre $r=0$), then, under the assumptions of~\cite{Luk2015QuantitativeSymmetry} and the additional assumption that 
\begin{equation}\pv(r\phi)\sim C (r+1)^{-3}\log (r+1)\end{equation}
on an outgoing null ray, we are in the setting of the proof of Luk--Oh~\cite{Luk2015QuantitativeSymmetry} (except for the $\log$-term).
A very slight adaptation of their methods then gives the following \textit{sharp} rates near timelike infinity: There exist positive constants $A,B$ such that, near $i^+$: 
\begin{align}
  A\min\{u^{-3}\log u,r^{-1}u^{-2}\log u\}&\leq  \phi\leq B\min\{u^{-3}\log u,r^{-1}u^{-2}\log u\},\\
  A \min\{u^{-3}\log u,r^{-3}\log r\}&\leq  -\pv(r\phi)\leq B \min\{u^{-3}\log u,r^{-3}\log r\},\\
   Au^{-3}\log u&\leq \pu(r\phi) \leq B u^{-3}\log u.
\end{align}
See also Thms.~3.1 and~3.16 in~\cite{Luk2015QuantitativeSymmetry}. To obtain the corresponding results for the logarithmic terms, one needs to slightly change the proof of Lemma~6.6 as well as the proof in section~10.
\section*{Acknowledgements}\addcontentsline{toc}{section}{Acknowledgements}
The author would like to thank Dejan Gajic for helpful discussions, and John Anderson, Mihalis Dafermos and Dejan Gajic for comments on drafts of the present work.

\appendix
\stoptocwriting
\newpage
\section{Definitions of the main energy norms}\label{app}

\subsection{The basic energy currents}
We define, with respect to any coordinate basis, the following energy momentum tensor for any scalar field $\phi$:
\begin{align*}
\mathbf{T}_{\mu\nu}[\phi]:=\partial_\mu \phi\partial_\nu \phi-\frac12 g_{\mu\nu} \partial^\xi\phi\partial_\xi\phi.
\end{align*}
Note that if $\phi$ is a solution to the wave equation, then $\mathbf{T}[\phi]$ is divergence-free.

For any vector field $V$, we further define the energy current $J^V[\phi]$ according to
\begin{align*}
J^V[\phi](\cdot):=\mathbf{T}[\phi](V,\cdot).
\end{align*}
\subsection{Definition of the energy norms}
We work in $(u,r,\theta, \varphi)$-coordinates, where $\partial_r=\frac{D}{2}L$. 

Let $\phi$ be a \underline{spherically symmetric} solution to $\Box_g\phi=0$ in the sense of Proposition~\ref{prop:Cauchy}, and let $\epsilon>0$.  Then we define the following initial data energy norms on $\Sigma_0$, where the subscript "${}_0$" of the energy norms below denotes the fact that these are the energy norms for the $\ell=0$-mode.
\begin{align}
E^N_{k}[\phi]:=\sum_{i\leq k }\int_{\Sigma_0}J^N[T^i\phi]\cdot n_{\Sigma_0}\dd \mu_{\Sigma_0},
\end{align}
\begin{nalign}
 E^\epsilon_{0,I_0\neq 0; k}&[\phi]:=E^N_{3+3k}[\phi]+\int_{N_0^\mathcal{I}} r^{3+2k-\epsilon}(\partial_r^{1+k}(r\phi))^2\dd r\\
+&\sum_{i\leq 2k}\int_{N_0^\mathcal{I}} r^{3-\epsilon}(\partial_r T^i(r\phi))^2+r^2(\partial_r T^{1+i}(r\phi))^2+r(\partial_r T^{2+i}(r\phi))^2\dd r\\
+&\sum_{\substack{m\leq k;\\  i\leq 2k-2m+\min(k,1)}}\int_{N_0^\mathcal{I}} r^{2+2m-\epsilon}(\partial_r^{1+m}T^i(r\phi))^2\dd r,
\end{nalign}
\begin{nalign}
 E^\epsilon_{0,I_0= 0; k}&[\phi]:=E^N_{5+3k}[\phi]+\int_{N_0^\mathcal{I}} r^{5+2k-\epsilon}(\partial_r^{1+k}(r\phi))^2\dd r\\
+&\sum_{i\leq 2k}\int_{N_0^\mathcal{I}} r^{5-\epsilon}(\partial_r T^i(r\phi))^2+r^{4-\epsilon}(\partial_r T^{1+i}(r\phi))^2+r^{3-\epsilon}(\partial_r T^{2+i}(r\phi))^2\\
&\phantom{\sum_{i\leq 2k}\int_{N_0^\mathcal{I}}}+ r^2(\partial_r T^{3+i}(r\phi))^2+r(\partial_r T^{4+i}(r\phi))^2\dd r\\
+&\sum_{\substack{m\leq k;\\  i\leq 2k-2m+\min(k,1)}}\int_{N_0^\mathcal{I}} r^{4+2m-\epsilon}(\partial_r^{1+m}T^i(r\phi))^2\dd r.
\end{nalign}
We further define 
\begin{nalign}
\widetilde{E}^\epsilon_{0,I_0\neq 0; k}&[\phi]:=E^\epsilon_{0,I_0\neq 0; k}[\phi]+\sum_{i\leq k }\int_{\Sigma_0}J^N[NT^i\phi]\cdot n_{\Sigma_0}\dd \mu_{\Sigma_0}
\end{nalign}
and
\begin{nalign}
\widetilde{E}^\epsilon_{0,I_0= 0; k}&[\phi]:=E^\epsilon_{0,I_0= 0; k}[\phi]+\sum_{i\leq k }\int_{\Sigma_0}J^N[NT^i\phi]\cdot n_{\Sigma_0}\dd \mu_{\Sigma_0}.
\end{nalign}
In an abuse of notation, we finally define
\begin{nalign}
E^\epsilon_{0,\ill\neq 0; k}[\phi]:=E^\epsilon_{0,I_0\neq 0; k}[\phi],&&\widetilde{E}^\epsilon_{0,\ill\neq 0; k}[\phi]:=\widetilde{E}^\epsilon_{0,I_0\neq 0; k}[\phi]
\end{nalign}
and 
\begin{nalign}
E^\epsilon_{0,\ill=0; k}[\phi]:=E^\epsilon_{0,I_0=0 ; k}[\phi],&&\widetilde{E}^\epsilon_{0,\ill=0; k}[\phi]:=\widetilde{E}^\epsilon_{0,I_0=0 ; k}[\phi].
\end{nalign}
This notation reflects the fact that the energy norms above "are blind" to logarithmic terms.
\resumetocwriting
{\small\bibliographystyle{ieeetr} 
\bibliography{references,references2,references3}}

\begin{thebibliography}{10}

\bibitem{Kerrburger}
L.~M.~A. {Kehrberger}, ``{The Case Against Smooth Null Infinity I: Heuristics
  and Counter-Examples},'' {\em arXiv e-prints}, p.~arXiv:2105.08079, May 2021.

\bibitem{CHRISTODOULOU2002}
D.~Christodoulou, ``{The Global Initial Value Problem in General Relativity},''
  in {\em The Ninth Marcel Grossmann Meeting}, pp.~44--54, World Scientific
  Publishing Company, Dec. 2002.

\bibitem{Kerrburger3}
L.~M.~A. {Kehrberger}, ``{The Case Against Smooth Null Infinity III: Early-Time
  Asymptotics for Higher $\ell$-Modes of Linear Waves on a Schwarzschild
  Background},'' {\em arXiv e-prints}, p.~arXiv:2106.00035, May 2021.

\bibitem{Price72}
R.~H. Price, ``{Nonspherical Perturbations of Relativistic Gravitational
  Collapse. I. Scalar and Gravitational Perturbations},'' {\em Phys. Rev. D},
  vol.~5, pp.~2419--2438, May 1972.

\bibitem{PriceGundlachPullin94}
C.~Gundlach, R.~H. Price, and J.~Pullin, ``{Late-time behavior of stellar
  collapse and explosions. I. Linearized perturbations},'' {\em Phys. Rev. D},
  vol.~49, pp.~883--889, Jan. 1994.

\bibitem{Angelopoulos2018ASpacetimes}
Y.~Angelopoulos, S.~Aretakis, and D.~Gajic, ``{A Vector Field Approach to
  Almost-Sharp Decay for the Wave Equation on Spherically Symmetric, Stationary
  Spacetimes},'' {\em Annals of PDE}, vol.~4, pp.~1--120, Dec. 2018.

\bibitem{Angelopoulos2018Late-timeSpacetimes}
Y.~Angelopoulos, S.~Aretakis, and D.~Gajic, ``{Late-time asymptotics for the
  wave equation on spherically symmetric, stationary spacetimes},'' {\em
  Advances in Mathematics}, vol.~323, pp.~529--621, Jan. 2018.

\bibitem{Hintz}
P.~{Hintz}, ``{A sharp version of Price's law for wave decay on asymptotically
  flat spacetimes},'' {\em arXiv e-prints}, p.~arXiv:2004.01664, Apr. 2020.

\bibitem{Ma2021}
S.~{Ma} and L.~{Zhang}, ``{Price's law for spin fields on a Schwarzschild
  background},'' {\em arXiv e-prints}, p.~arXiv:2104.13809, Apr. 2021.

\bibitem{Donninger11}
R.~Donninger, W.~Schlag, and A.~Soffer, ``{A proof of Price's Law on
  Schwarzschild black hole manifolds for all angular momenta},'' {\em Advances
  in Mathematics}, vol.~226, pp.~484--540, Jan. 2011.

\bibitem{Donninger12}
R.~Donninger, W.~Schlag, and A.~Soffer, ``{On Pointwise Decay of Linear Waves
  on a Schwarzschild Black Hole Background},'' {\em Communications in
  Mathematical Physics}, vol.~309, pp.~51--86, Nov. 2012.

\bibitem{Angelopoulos2019LogarithmicInfinity}
Y.~Angelopoulos, S.~Aretakis, and D.~Gajic, ``{Logarithmic corrections in the
  asymptotic expansion for the radiation field along null infinity},'' {\em
  Journal of Hyperbolic Differential Equations}, vol.~16, pp.~1--34, Mar. 2019.

\bibitem{Dafermos2010ASpacetimes}
M.~Dafermos and I.~Rodnianski, ``{A New Physical-Space Approach to Decay for
  the Wave Equation with Applications to Black Hole Spacetimes},'' in {\em
  XVIth International Congress on Mathematical Physics}, pp.~421--432, WORLD
  SCIENTIFIC, Mar. 2010.

\bibitem{Dejantobepublished}
Y.~{Angelopoulos}, S.~{Aretakis}, and D.~{Gajic}, ``{Price's law and precise
  late-time asymptotics for subextremal Reissner-Nordstr{\"o}m black holes},''
  {\em arXiv e-prints}, p.~arXiv:2102.11888, Feb. 2021.

\bibitem{gomezwinicourschmidt}
R.~G\'omez, J.~Winicour, and B.~G. Schmidt, ``{Newman-Penrose constants and the
  tails of self-gravitating waves},'' {\em Phys. Rev. D}, vol.~49,
  pp.~2828--2836, Mar 1994.

\bibitem{BASKIN2018160}
D.~Baskin, A.~Vasy, and J.~Wunsch, ``{Asymptotics of scalar waves on long-range
  asymptotically Minkowski spaces},'' {\em Advances in Mathematics}, vol.~328,
  pp.~160--216, 2018.

\bibitem{ching}
E.~S.~C. Ching, P.~T. Leung, W.~M. Suen, and K.~Young, ``Wave propagation in
  gravitational systems: Late time behavior,'' {\em Phys. Rev. D}, vol.~52,
  pp.~2118--2132, Aug 1995.

\bibitem{morgan}
K.~{Morgan} and J.~{Wunsch}, ``{Generalized Price's law on fractional-order
  asymptotically flat stationary spacetimes},'' {\em arXiv e-prints},
  p.~arXiv:2105.02305, May 2021.

\bibitem{Angelopoulos:2018uwb}
Y.~Angelopoulos, S.~Aretakis, and D.~Gajic, ``{Late-time asymptotics for the
  wave equation on extremal Reissner-Nordstr\"om backgrounds},'' {\em Adv.
  Math.}, vol.~375:107363, pp.~1--139, July 2020.

\bibitem{AAGPRL}
Y.~Angelopoulos, S.~Aretakis, and D.~Gajic, ``{Horizon Hair of Extremal Black
  Holes and Measurements at Null Infinity},'' {\em Phys. Rev. Lett.}, vol.~121
  (13) :131102, Sept. 2018.

\bibitem{DejanKerr}
Y.~{Angelopoulos}, S.~{Aretakis}, and D.~{Gajic}, ``{Late-time tails and mode
  coupling of linear waves on Kerr spacetimes},'' {\em arXiv e-prints},
  p.~arXiv:2102.11884, Feb. 2021.

\bibitem{Kroon_measure}
J.~A.~V. Kroon, ``Can one detect a non-smooth null infinity?,'' {\em Classical
  and Quantum Gravity}, vol.~18, pp.~4311--4316, Oct. 2001.

\bibitem{Dafermos2005}
M.~Dafermos and I.~Rodnianski, ``{A proof of Price's law for the collapse of a
  self-gravitating scalar field},'' {\em Inventiones mathematicae}, vol.~162,
  pp.~381--457, Nov. 2005.

\bibitem{Luk2015QuantitativeSymmetry}
J.~Luk and S.-J. Oh, ``{Quantitative decay rates for dispersive solutions to
  the Einstein-scalar field system in spherical symmetry},'' {\em Analysis {\&}
  PDE}, vol.~8, pp.~1603--1674, Sept. 2015.

\end{thebibliography}

\end{document}